\newcommand{\ketbra}[2]{|#1\rangle\langle#2|}
\newtheorem{result}{Result}
\newtheorem{theorem}{Theorem}[section]
\newtheorem{definition}[theorem]{Definition}
\newtheorem{fact}[theorem]{Fact}
\newtheorem{remark}[theorem]{Remark}
\newtheorem{lemma}[theorem]{Lemma}
\newtheorem{corollary}[theorem]{Corollary}
\newtheorem{claim}[theorem]{Claim}
\newcommand{\E}{\mathbb{E}}
\newcommand{\p}{\mathbb{P}}
\newcommand{\supp}{\mathrm{supp}}
\newcommand{\bin}{\{-1,1\}}
\newcommand{\eps}{\varepsilon}
\newcommand{\abs}[1]{\left| #1 \right|}
\newcommand{\vabs}[1]{\left\| #1 \right\|}
\newcommand{\pbra}[1]{\left( #1 \right)}
\newcommand{\sbra}[1]{\left[ #1 \right]}
\newcommand{\cbra}[1]{\left\{ #1 \right\}}
\newcommand{\opnorm}[1]{\vabs{ #1 }_\mathrm{op}}
\newcommand{\onenorm}[1]{\vabs{ #1 }_\mathrm{1}}
\renewcommand{\mid}{\,\middle\vert\,}
\newcommand{\dom}{\mathsf{dom}}
\newcommand{\poly}{\mathrm{poly}}
\def\01{\{-1,1\}}
\DeclareMathOperator{\Tr}{Tr}
\newcommand{\bhm}{\textsc{BHM}_{m,n}}
\newcommand{\forr}{\textsc{forr}}
\newcommand{\indi}{\mathbbm{1}}
\newcommand{\Rent}{\textsf{R}\|^*}
\newcommand{\Qent}{\textsf{Q}\|^*}
\newcommand{\Rpri}{\textsf{R}\|}
\newcommand{\Rpub}{\textsf{R}\|^{\mathrm{pub}}}
\newcommand{\Qpri}{\textsf{Q}\|}
\newcommand{\Qpub}{\textsf{Q}\|^{\mathrm{pub}}}
\newcommand{\Roneent}{\textsf{R1}^*}
\newcommand{\Rtwoent}{\textsf{R2}^*}
\newcommand{\Rone}{\textsf{R1}}
\newcommand{\Rtwo}{\textsf{R2}}
\newcommand{\SMP}{\textsf{SMP}}
\newcommand{\Had}{H_n}
\newcommand{\pmset}[1]{\{-1,1\}^{#1}} 
\newcommand{\Nbb}{\mathbb{N}}
\newcommand{\Rbb}{\mathbb{R}}
\newcommand{\X}{\mathcal{X}}
\newcommand{\Y}{\mathcal{Y}}
\newcommand{\Mcal}{\mathcal{M}}
\newcommand{\Hcal}{\mathcal{H}}
\newcommand{\Ucal}{\mathcal{U}}
\newcommand{\Pcal}{\mathcal{P}}
\newcommand{\Dcal}{\mathcal{D}}
\newcommand{\Ccal}{\mathcal{C}}
\newcommand{\Scal}{\mathcal{S}}
\newcommand{\Tcal}{\mathcal{T}}
\newcommand{\Ycal}{\mathcal{Y}}
\newcommand{\Ncal}{\mathcal{N}}
\newcommand{\equivalent}{locally equivalent~}
\newcommand{\equbits}{qubits of entanglement}
 \author{
	Srinivasan Arunachalam\\[2mm]
	IBM Quantum, Almaden Research Center\\
	\small \texttt{Srinivasan.Arunachalam@ibm.com}
	\and
	Uma Girish\\[2mm] 
	Princeton University\\
	\small \texttt{ugirish@cs.princeton.edu}
}
\date{}
\begin{document}
	\title{Trade-offs between Entanglement and Communication}
	\maketitle

\begin{abstract}
	
	
	We study the advantages of quantum communication models over classical communication models that are equipped with a limited number of \equbits. In this direction, we give explicit partial functions on $n$ bits for which reducing the entanglement increases the classical communication complexity exponentially. Our separations are as follows. For every~$k\geq~1$:

 	\vspace{1mm}
	
	\textbf{$\Qent$ versus $\Rtwoent$:} We show that quantum simultaneous protocols with $\tilde{\Theta}(k^5\log^{3} n)$ \equbits~can exponentially outperform two-way randomized protocols with $O(k)$ \equbits. This resolves an open problem from~\cite{DBLP:journals/qic/Gavinsky08} and improves the state-of-the-art separations between quantum simultaneous protocols with entanglement and two-way randomized protocols without entanglement~\cite{gavinsky2019quantum,girish2022quantum}.
	
	\vspace{1mm}
	
	
	\textbf{$\Rent$ versus $\Qent$:} We show that classical simultaneous protocols with $\tilde{\Theta}(k\log  n)$ \equbits~can exponentially outperform quantum simultaneous protocols with $O(k)$ \equbits, resolving an open question from~\cite{gavinsky2006bounded,gavinsky2019quantum}. The best result prior to our work was a relational separation against protocols without entanglement~\cite{gavinsky2006bounded}.

	\vspace{1mm}
 
	\textbf{$\Rent$ versus $\Roneent$:} We show that classical simultaneous protocols with $\tilde{\Theta}(k\log  n)$ \equbits~can exponentially outperform randomized one-way protocols with $O(k)$ \equbits.  Prior to our work, only a relational separation was known~\cite{DBLP:journals/qic/Gavinsky08}.

	
	
	Our techniques can also be used to show advantages of quantum communication models over {hybrid classical-quantum} models, i.e., models that have a large amount of both classical communication and quantum simultaneous communication. 
\end{abstract}

\section{Introduction}
One of the central goals in complexity theory is to understand the power of different computational resources. In the past four decades, communication complexity has provided a successful toolbox to establish various results in different areas of research in theoretical computer science such as circuit complexity~\cite{KarchmerW90,karchmer1995super}, streaming algorithms~\cite{kapralov2014streaming}, property testing~\cite{blais2012property}, extension complexity~\cite{fiorini2015exponential}, data structures~\cite{miltersen1995data}, proof complexity~\cite{huynh2012virtue}. 
In the standard two-player model of communication complexity introduced by Yao~\cite{yao1979some} there are two parties Alice and Bob whose goal is to compute a partial function $F:\mathcal{X}\times \mathcal{Y}\rightarrow \{-1,1,\star\}$. Alice receives $x\in \mathcal{X}$ (unknown to Bob) and Bob receives $y\in \mathcal{Y}$ (unknown to Alice) and their goal is to compute $F(x,y)$ for all $(x,y)\in F^{-1}(1)\cup F^{-1}(-1)$, while minimizing the amount of communication.  In this setting, there are three models of communication in increasing order of strength:
\begin{enumerate}[$(i)$]
	\item Simultaneous message passing ($\SMP$) model: Alice and Bob send a message to a referee Charlie, whose goal is to output $F(x,y)$.
	\item One-way model: Alice sends a message to Bob, whose goal is to output $F(x,y)$.
	\item  Two-way model: Alice and Bob can exchange several rounds of messages and their goal is to output $F(x,y)$.
\end{enumerate}
In all these models, the complexity of the protocol is the total number of bits used to describe the message. It is not hard to see that the communication complexity in model $(i)$ is at least the complexity in model $(ii)$ which in turn is at least the complexity in model $(iii)$. 

One variant of these models is when the players are allowed to use \emph{quantum resources}, for instance, the players could  send quantum messages or share entanglement. Over the past two decades, several works have established the advantage of quantum over classical communication complexity in various settings. In a sequence of works~\cite{buhrman1998quantum,buhrman2001quantum,raz1999exponential,gavinsky2006bounded,gavinsky2007exponential,klartagregev,gavinsky2020bare}, it has been shown that quantum communication can exponentially outperform classical communication.  
In particular, a few works~\cite{gavinsky09,gavinsky2019quantum, girish2022quantum} have demonstrated communication tasks that are easy to solve in the $\SMP$ model if the players share entanglement, however, every interactive~randomized~protocol without entanglement has exponentially larger cost. This leads to a natural and fundamental question (which has been asked many times before~\cite{jain2007direct,DBLP:conf/coco/CoudronH19,shi2005tensor,DBLP:journals/qic/Gavinsky08}): \emph{How much entanglement do quantum protocols really need?} Given any small-cost quantum protocol, can we simulate it by a small-cost quantum protocol that uses only a {small amount of entanglement}? Answering this question is one of the central questions in quantum communication complexity; in fact giving \emph{any}  upper bound on the number of qubits in a potentially helpful shared state has been open for~decades.


A similar  question of how much shared \emph{randomness} is necessary in classical communication complexity  is well understood. In a famous result,  Newman~\cite{newman1991private} showed that to solve communication tasks on $n$-bit inputs, with an additive overhead of $O(\log n)$ bits in communication one can assume that the players  only  have  private~randomness. 
Jain et al.~\cite{jain2005prior} showed that blackbox arguments similar to the one in~\cite{newman1991private} cannot be used to reduce the entanglement in a quantum protocol. Motivated by the question of how much entanglement protocols need, we study a fine-grained variant of this question, which will be the topic of this~work.

\begin{quote}
	\emph{Can we reduce the entanglement in a quantum communication protocol from $k$ qubits to $k/\log n$ qubits using a classical protocol of only polynomially larger cost?}
\end{quote}
In this direction, Shi~\cite{shi2005tensor} showed that we can remove any amount of entanglement using a classical communication protocol of exponentially larger cost. Subsequently,~\cite{DBLP:journals/qic/Gavinsky08,jain2007direct}  showed that this exponential blowup is inevitable, in particular they constructed a \emph{relational} problem for which we cannot reduce the entanglement with just a polynomial overhead using one-way communication alone.  Their works  left open the question of reducing entanglement in a quantum protocol computing a \emph{partial} function, using two-way classical communication between the players.\footnote{Relational separations are known as the ``weakest" form of separations between communication models. A partial function separation immediately implies a relational separation, however, the converse is false~\cite{GKW06}.}

\subsection{Main Result}
In this work, we provide a strong negative answer to this question. We give partial functions for which, reducing the entanglement by even a logarithmic factor, increases the communication cost by an exponential factor. To discuss our results, we set up some notation first. Let $\Rent$ (resp.~$\Qent$) denote the $\SMP$ communication model where Alice and Bob share entanglement and send classical (resp.~quantum) messages to the referee. Let $\Roneent,\Rtwoent$ be the one-way and two-way models of classical communication where Alice and Bob share entanglement.
The models $\Rone$ and $\Rtwo$ are similarly defined with the difference being that Alice and Bob don't share entanglement. The model $\Qpub$ is also defined similarly to $\Qent$ but without entanglement, additionally, the players are allowed public randomness. 
We first summarize our results informally~below. All these results hold for every $k\ge 1$ which is any parameter that is allowed to depend on $n$. 

Our {first result} shows that for simultaneous quantum protocols, more entanglement cannot be simulated by two-way classical communication with less entanglement (and a polynomial overhead).

\begin{result}
	\label{result1}
	There is a  partial function on $\tilde{O}(kn)$ bits that can be computed in $\Qent$ with $\tilde{O}(k^5\log^{3} n)$ qubits of communication and entanglement, but if the players only share $O(k)$ \equbits,  requires  ${\Omega}(n^{1/4})$ bits of communication in the $\textsf{R2}\hspace{0.5mm}^*$ model. 
\end{result}

There are two ways to view this result: $(i)$ It shows that in the rather weak quantum $\SMP$ model, reducing the entanglement by a polylogarithmic factor increases the classical communication by an exponential factor, even if Alice and Bob are allowed to interact.  This answers an open question in~\cite{DBLP:journals/qic/Gavinsky08}.
$(ii)$ This result can also be viewed in the context of quantum versus classical separations in communication complexity. As we mentioned earlier, numerous works~\cite{buhrman1998quantum,raz1999exponential,gavinsky2007exponential,klartagregev,gavinsky2020bare} have shown that quantum provides exponential savings for partial functions in various settings. The state-of-the-art separations between quantum and classical communication complexity for partial functions are due to~\cite{gavinsky2019quantum,girish2022quantum}; they show separations between $\Qent$ and $\Rtwo$. One drawback of the aforementioned works, in the context of our work, is that the lower bound can only be made to work for protocols where Alice and Bob share $\ll \log n$ \equbits. We improve upon this by showing separations between $\Qent$ (with more entanglement) and $\Rtwoent$ (with less entanglement). Our result can thus be seen as the current best-known separation between quantum and classical communication complexity for partial functions. 
In particular, we give a lower bound technique against $\Rtwoent$ protocols with $O(\log^c n)$ \equbits~for every $c\in\Nbb$. To the best of our knowledge, there were no known lower bound techniques that distinguished $\Rtwoent$ (with more entanglement) and $\Rtwoent$ (with less entanglement) once the number of \equbits~is $\gg \log n$, even for \emph{relational problems}. 




Our second result shows that for $\SMP$ protocols where the players share entanglement but only send classical messages, entanglement cannot be reduced even by quantum simultaneous protocols or by one-way classical protocols (with a polynomial~overhead).



\begin{result}
	\label{result2}
	There is a partial function on $\tilde{O}(kn)$ bits that can be computed in~$\Rent$  using $\tilde{O}(k\log n)$~bits of communication and $\tilde{O}(k \log n)$ \equbits, but if the players share $O(k)$ \equbits, requires ${\Omega}(n^{1/3})$ qubits of communication in the $\Qent$ model and ${\Omega}(\sqrt{n})$ bits in the~$\Roneent$~model.
\end{result}

We remark that the trade-offs obtained in this result are more fine-grained in comparison to Result~\ref{result1}, i.e.,  our separations hold even if we reduce the entanglement by a $O(\log n)$-factor. Prior to our work, the best known separation between $\Rent$ (with more entanglement) and $\Qent$ (with less entanglement) was a relational separation between $\Rent$ and $\Qpub$~\cite{gavinsky2006bounded}. Their work left open two questions: $(i)$ Does there exist a \emph{partial function} separating $\Rent$ and $\Qpub$? The weaker question of showing a functional separation between $\Qent$ and $\Qpub$ was also open and recently asked by~\cite{gavinsky2019quantum}. $(ii)$ Is there a \emph{relational} separation between $\Qent$ (with more entanglement) and $\Qent$ (with less entanglement)? Our result answers both these questions. Firstly, we prove separations for partial functions improving upon the relational separations;  secondly, we also show lower bounds for $\Qent$ with limited entanglement.  With regards to separations between $\Rent$ (with more entanglement) and $\Roneent$ (with less entanglement), prior to our work these were established in~\cite{DBLP:journals/qic/Gavinsky08,jain2007direct}, again for relational problems. Gavinsky~\cite{DBLP:journals/qic/Gavinsky08} left open the question of showing a similar separation for \emph{partial} functions and our work resolves this.

In the next two sections, we discuss the problems witnessing these separations followed by the proof sketches. Our first result is based on the Forrelation problem and the second result is based on the Boolean Hidden Matching problem.

\subsection{Result 1: Separations based on the Forrelation problem}

\subsubsection{Problem Definition: The Forrelation Problem} 
The Forrelation problem was first introduced by Aaronson in the context of query complexity~\cite{aaronson2010bqp} and subsequently has been studied again in the context of separating quantum and classical computation~\cite{raz2022oracle,aaronson2015forrelation}. Variants of the Forrelation problem have been used to show various quantum versus classical separations in communication complexity~\cite{girish2022quantum,bansal2021k,SherstovSW21,DBLP:conf/approx/GirishRZ21}. The state-of-the-art separations for quantum versus classical communication complexity of partial functions are between $\Qent$ and $\Rtwo$; one such separation is due to \cite{girish2022quantum} and is based on the Forrelation problem, which we define now.

\begin{definition}[Forrelation Function] \label{def:forrelation_function}
	Let $n\in \Nbb,n\ge 2$ be a power of two. Let $\Had$ be the (unitary) $n\times n$ Hadamard matrix. For $z_1,z_2\in \bin^{n/2}$, define the \emph{forrelation} function as
	$$
	\mathrm{forr}(z_1,z_2)=\frac{1}{n}\langle  z_2, \Had (z_1)\rangle. 
	$$
\end{definition}
Let $\eps\in(0,1]$ be a parameter. We typically set $\eps=\Theta\pbra{\tfrac{1}{\log n}}$ if it is not specified. We are interested in the communication complexity version of the Forrelation problem defined below.  

\begin{definition}[The Forrelation Problem]
	In the Forrelation problem,
	Alice is given $x\in \bin^{n}$, Bob is given $y\in \bin^{n}$. Their goal is to compute $\forr(x,y)$ given by
	\[
	\forr(x,y)= \begin{cases} 
		-1 & \mathrm{forr}(x\odot y)\geq \varepsilon/4 \\
		1 & \mathrm{forr}(x\odot y)\leq \varepsilon/8. 
	\end{cases}
	\]\label{definition:forrelation_problem}
\end{definition}
Here, $\odot$ denotes the pointwise product. Let $k\in\Nbb$ be a parameter satisfying $k=o(n^{1/50})$. We are interested in the XOR of $k$ copies of the Forrelation problem. This problem was first studied in~\cite{DBLP:conf/approx/GirishRZ21} in the context of XOR lemmas.

\begin{definition}[$\oplus^k$-Forrelation Problem]
	\label{def:XORforrelation}
	This problem is the XOR of $k$ independent instances of the Forrelation problem where $\eps=\frac{1}{60k^2 \ln n}$. To be precise, Alice and Bob receive $x=(x^{(1)},\ldots,x^{(k)})$ and $y=(y^{(1)},\ldots,y^{(n)})$ where $x^{(i)},y^{(i)}\in\bin^{n}$ for all $i\in[k]$, and they need to compute
	\[\forr^{(\oplus k)} (x,y)=\prod_{i=1}^k \forr\pbra{x^{(i)},y^{(i)}}. \]
\end{definition}

\subsubsection{Main Theorem} We now state our main theorem. For $n\in \Nbb$, let  $k\in\Nbb$ be a parameter satisfying $k=o(n^{1/50})$.
\begin{restatable}[]{thm}{theoremRtwo}
	\label{theoremRtwo}
	The $\oplus^k$-Forrelation problem can be solved with $\tilde{O}(k^5\log^3 n)$ qubits of communication in the $\Qent$ model if Alice and Bob share $\tilde{\Theta}(k^5\log^3 n)$ EPR pairs. However, if they share  $O(k)$ \equbits, then this problem requires  ${\Omega}(n^{1/4})$ bits of communication even in the~$\Rtwoent$~model.
\end{restatable}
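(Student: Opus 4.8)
The proof naturally splits into an upper bound and a lower bound. For the upper bound, the plan is to give a quantum $\SMP$ protocol with entanglement for a single Forrelation instance and then amplify to the XOR of $k$ copies. For one instance, Alice and Bob share $\Theta(\log n)$ EPR pairs; Alice encodes her input $x^{(i)}$ into a phase state $\frac{1}{\sqrt{n}}\sum_j x^{(i)}_j \ket{j}$ using her halves of the EPR pairs, and Bob similarly prepares a state encoding $\Had(y^{(i)})$. The referee, given both halves of the EPR pairs (teleported over via the classical messages) can perform a swap test or a Hadamard test to estimate $\abra{z_2,\Had(z_1)}/n$ where $z_\ell = x^{(i)}\odot y^{(i)}$-type quantities; this is the standard Forrelation $\SMP$ protocol (cf.\ \cite{girish2022quantum}). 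To get the XOR of $k$ copies with the tiny parameter $\eps = \frac{1}{60k^2\ln n}$, one runs $\tilde O(k^4\log^2 n)$ parallel repetitions to drive the additive estimation error below $\eps$ for each coordinate, takes a majority/threshold decision per copy, and XORs the $k$ answers; tracking the EPR-pair budget gives $\tilde\Theta(k^5\log^3 n)$ qubits of communication and entanglement. I would cite the single-copy construction from \cite{girish2022quantum} and only verify the parameter bookkeeping for the XOR amplification.

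For the lower bound — the heart of the theorem — the goal is to show that any $\Rtwoent$ protocol with only $O(k)$ \equbits\ solving $\oplus^k$-Forrelation needs $\Omega(n^{1/4})$ communication. The plan is a \emph{simulation + XOR lemma} argument. First, I would invoke a result converting an efficient $\Rtwoent$ protocol with $e = O(k)$ \equbits\ and communication $c$ into an entanglement-free randomized protocol: by a ``cut and measure'' argument one can replace the shared entangled state of $e$ qubits at the cost of roughly $2^{O(e)}$ multiplicative blow-up, or better, one uses the fact that a protocol with $e$ \equbits\ and cost $c$ can be simulated with cost $O((c+1)(e+1))$ or $\poly(c,2^e)$ without entanglement (depending on which regime is available in the paper's toolkit). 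Since $e = O(k)$ and $k = o(n^{1/50})$, this keeps the cost polynomially bounded and — crucially — still sub-$n^{1/4}$-ish, so it suffices to prove an $\Rtwo$ lower bound (no entanglement) on $\oplus^k$-Forrelation of the form $2^{\Omega(k)}\cdot c^{O(1)} \gg$ trivial, i.e.\ $c = \Omega(n^{1/4})$ in the entanglement-free model after accounting for the blow-up.

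The entanglement-free lower bound itself would proceed via a communication XOR lemma: one shows that a single Forrelation instance with gap parameter $\eps = \Theta(1/(k^2\log n))$ has nontrivial randomized communication complexity — more precisely, any $\Rtwo$ protocol with advantage $\delta$ over random guessing on one copy needs $\Omega(\eps^2 \cdot n^{1/\text{const}})$ communication — and then applies the XOR lemma of \cite{DBLP:conf/approx/GirishRZ21} (designed exactly for Forrelation) to conclude that solving the XOR of $k$ copies to constant advantage requires $\Omega(n^{1/4})$ communication even allowing $2^{O(k)}$ overhead. The single-copy hardness would follow from the standard Forrelation lower bound machinery: the two distributions (yes-instances and no-instances of $\mathrm{forr}(x\odot y)$) are hard to distinguish by low-communication protocols because low-degree / low-communication functions are nearly orthogonal to the Forrelation ``signal,'' formalized through a bound on $\|\cdot\|$ of the relevant sign matrix or via the pattern-matrix / approximate-degree method.

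\textbf{Main obstacle.} The delicate part is the interaction between the three quantitative knobs: the entanglement budget $e=O(k)$, the XOR length $k$, and the target bound $n^{1/4}$. The simulation of $O(k)$ \equbits\ costs a $2^{\Theta(k)}$ factor, and the XOR lemma only ``loses'' a comparable $2^{\Theta(k)}$ factor, so the single-copy hardness parameter $\eps = \Theta(1/(k^2\log n))$ must be strong enough that $k$ copies survive \emph{both} losses simultaneously. Getting the single-instance lower bound to scale correctly with $\eps$ (polynomially in $\eps$ times a fixed power of $n$), and checking that the XOR lemma composes with the entanglement-simulation step without the exponential factors colliding, is where the real work — and the choice $\eps = \frac{1}{60k^2\ln n}$ — comes from. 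I expect the bulk of the proof to be verifying that these exponential-in-$k$ terms cancel against the $n^{\Omega(1)}$ gain and that $k = o(n^{1/50})$ is exactly the slack needed.
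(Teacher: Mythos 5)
Your upper-bound sketch matches the paper's (phase-encoded states plus $\tilde{O}(k^5\log^3 n)$ swap tests, citing GRZ21). The lower-bound route, however, hits a gap exactly where the paper warns one exists. You propose to first strip the $O(k)$ \equbits\ from the $\Rtwoent$ protocol --- either by a maximally-mixed-state ``cut and measure'' argument with a $2^{O(k)}$ loss in advantage, or by some $\poly(c,2^{O(k)})$-cost entanglement-free classical simulation --- and then invoke an XOR lemma for plain $\Rtwo$. But the paper explicitly flags this removal step as \emph{the} obstacle for two-way protocols and deliberately does not attempt it. Replacing the shared $2d$-qubit state by $\mathbb{I}/2^{2d}$ works for $\Roneent$ and $\Qent$ because the receiving party or referee can detect and discard the bad branches (\Cref{lemma:simulation_r}, \Cref{lemma:simulation_q}); in an interactive two-way protocol there is no such abort mechanism, and on the $1-2^{-\Theta(d)}$ fraction of the mixture orthogonal to the intended state the protocol can be biased \emph{toward the wrong answer}, overwhelming the $2^{-\Theta(d)}$ advantage from the good branch. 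No $\poly(c,2^e)$-cost classical simulation of $\Rtwoent$ protocols with $e$ \equbits\ appears in the paper's toolkit (or, to my knowledge, in the literature), so this step does not go through as stated.

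The paper instead proves a Fourier growth bound \emph{directly on $\Rtwoent$ protocols that keep their entanglement}: \Cref{lemma:main_lemma} shows that the level-$\ell$ Fourier mass of the XOR-fiber of a cost-$c$ $\Rtwoent$ protocol sharing $2d$ qubits is at most $2^{5d}\cdot O_\ell(c^\ell)$. This is obtained by (i) decomposing an arbitrary $2d$-qubit shared state into a linear combination of $2^{4d}$ states, each locally equivalent to the zero state or to a single EPR pair (\Cref{lemma:decomposition}), and (ii) a Fourier growth bound $O_\ell(c^\ell)$ for the single-EPR-pair case via the structure theorem \Cref{claim:structure_rtwoent} and the level-$k$ inequality (\Cref{lemma:second_main_lemma}). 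Then the GRZ21 criterion for $\oplus^k$-Forrelation --- advantage at most $O(L_{1,2k}(H)\cdot n^{-k/2}) + o(n^{-k/2})$ --- with $d=O(k)$ and $c=\tau n^{1/4}$, $\tau$ small, yields advantage $\ll 1$. Note that this criterion is itself the ``XOR lemma'' you gesture at: it is phrased in terms of Fourier growth rather than deduced from a single-copy lower bound, so even your step 2 would need to be recast in this Fourier-analytic form. To make your plan work you would need either a genuinely new entanglement-removal lemma for two-way protocols, or to reproduce the paper's direct Fourier-growth argument.
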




We make a few remarks. First, the upper bound holds provided Alice and Bob share $\tilde{\Theta}(k^5 \log^3 n)$ EPR pairs, however, the lower bound holds for all possible entangled states on $O(k)$ qubits, not necessarily EPR pairs. See \Cref{sec:communication_model} for a formal description of the models and EPR pairs. Second,  although~\Cref{theoremRtwo} is stated for bounded-error models, our lower bound also holds for protocols with advantage $2^{-o(k)}$. 
To prove our lower bound, our main technical contribution is to show a \emph{Fourier growth bound for $\Rtwoent$ protocols} with limited entanglement.\footnote{For the introduction, we will loosely say ``Fourier growth of communication protocols'', when strictly speaking, we are referring to Fourier growth of \emph{XOR-fibers} and other functions associated with communication protocols.} In the following lemma, $O_\ell(t)$ is a shorthand notation for $O(t\cdot 2^{O(\ell)})$.

\begin{restatable}{lem}{lemmaRtwo}
	\label{lemma:main_lemma}
	Let $C:\bin^n\times \bin^n\to [-1,1]$ be an $\Rtwoent$ protocol of cost $c$ where Alice and Bob share an entangled state on at most $2d$ qubits for some parameter $d\in \Nbb$. Let
	$H$ be the XOR-fiber of $C$ as in~\Cref{definition:XOR_fiber}. Then, for all $\ell\in \Nbb,$ we have
	\[L_{1,\ell}(H)\triangleq \sum_{|S|=\ell}\abs{\widehat{H}(S)}\le 2^{5d}\cdot O_\ell(c^\ell). \]
\end{restatable}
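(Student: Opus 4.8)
The plan is to reduce the Fourier growth bound for a two-way protocol with entanglement to the known Fourier growth bounds for classical two-way protocols *without* entanglement, paying a multiplicative price in the dimension of the shared state. First I would recall the structure of an $\Rtwoent$ protocol: Alice and Bob share an entangled pure state $|\psi\rangle$ on $2d$ qubits ($d$ on each side), they perform local operations and classical communication, and at the end measure to produce the output. The first step is to *purify away the entanglement by fixing a measurement basis*: one expresses the shared state in the computational basis, $|\psi\rangle = \sum_{a,b \in \{0,1\}^d} \alpha_{a,b}\, |a\rangle_A |b\rangle_B$. Conditioned on Alice's local register taking value $a$ and Bob's taking value $b$ — which, after the protocol rearranges things, can be thought of as extra advice strings — the residual object is a genuine classical randomized two-way protocol of cost $c$ on the same inputs. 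More carefully, one should argue (as in the hybrid/decomposition arguments of \cite{shi2005tensor} and the entanglement-elimination literature) that $C(x,y)$ can be written as $\sum_{a,b} \beta_{a,b}\, C_{a,b}(x,y)$ where each $C_{a,b}$ is an $\Rtwo$ protocol (no entanglement) of cost $O(c)$, and $\sum_{a,b} |\beta_{a,b}| \le 2^{O(d)}$ — here the $2^{O(d)}$ (the lemma's $2^{5d}$, with a constant to spare) comes from the fact that there are $2^d \times 2^d$ basis pairs and the amplitudes are bounded via Cauchy–Schwarz.

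The second step is to pass to XOR-fibers and invoke linearity. The XOR-fiber operation of Definition \ref{definition:XOR_fiber} is linear in the protocol, so if $C = \sum_{a,b}\beta_{a,b} C_{a,b}$ then $H = \sum_{a,b}\beta_{a,b} H_{a,b}$ where $H_{a,b}$ is the XOR-fiber of the no-entanglement protocol $C_{a,b}$. By the triangle inequality, $L_{1,\ell}(H) \le \sum_{a,b} |\beta_{a,b}| \cdot L_{1,\ell}(H_{a,b})$. The third step is to apply the already-established Fourier growth bound for $\Rtwo$ protocols without entanglement — this is exactly the level-$\ell$ bound $L_{1,\ell}(H_{a,b}) \le O_\ell(c^\ell)$ for XOR-fibers of classical two-way protocols, which the paper must have proved (or cited) earlier for the no-entanglement case. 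Combining with $\sum_{a,b}|\beta_{a,b}| \le 2^{5d}$ gives $L_{1,\ell}(H) \le 2^{5d} \cdot O_\ell(c^\ell)$, as desired.

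The main obstacle I expect is the first step: making the decomposition $C = \sum_{a,b}\beta_{a,b}C_{a,b}$ rigorous while keeping each $C_{a,b}$ an honest bounded classical two-way protocol of cost $O(c)$, *and* keeping the $\ell_1$ mass of the coefficients at $2^{O(d)}$ rather than $2^{O(dc)}$ or worse. The subtlety is that in a two-way protocol the entangled register is touched adaptively across many rounds, so one cannot simply "measure at the start" without affecting later messages; one needs to track how a computational-basis measurement of the shared state at the outset is consistent with the subsequent LOCC dynamics (this is where the structure of LOCC — only classical communication after the shared state is distributed — is essential, and is the reason the argument would not work if Alice and Bob could exchange qubits). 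A clean way to handle this is to write the final acceptance amplitude as an inner product involving $|\psi\rangle$ and expand *only that one inner product* in the computational basis, so that each basis term $|a\rangle|b\rangle$ yields a product state and hence a classical protocol, while the cost is unchanged because the messages themselves were already classical. One also has to be slightly careful that the $C_{a,b}$ take values in $[-1,1]$ (or rescale), so that the cited Fourier-growth bound applies verbatim; absorbing any such normalization into the $\beta_{a,b}$ and the $2^{5d}$ slack handles this.
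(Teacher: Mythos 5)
Your outer plan — decompose the shared state, pass to XOR-fibers by linearity, apply a per-piece Fourier-growth bound, and sum with the triangle inequality — matches the skeleton of the paper's proof. But the specific decomposition you propose has a genuine gap. The acceptance probability of the protocol is a \emph{quadratic} form in $|\psi\rangle$: by Claim~\ref{claim:structure_rtwoent}, $C(x,y)=\sum_z \pm\Tr\bigl((E_z(x)\otimes F_z(y))\,\rho'\bigr)$ is linear in the \emph{density matrix} $\rho=|\psi\rangle\langle\psi|$, not in the vector $|\psi\rangle$. Expanding $\rho$ in the computational basis therefore produces cross terms $|a,b\rangle\langle a',b'|$ with $(a,b)\ne(a',b')$, and the corresponding contribution $\Tr\bigl((E_z(x)\otimes F_z(y))\,|a,b\rangle\langle a',b'|\bigr)=E_z(x)[a',a]\cdot F_z(y)[b',b]$ is a product of off-diagonal matrix entries — not the output of any bounded $\Rtwo$ protocol. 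Your claim that ``each $C_{a,b}$ is an honest bounded classical two-way protocol'' fails for exactly these terms, so you cannot cite the no-entanglement bound on them; ``expanding only one inner product'' does not avoid this, because there is no linear functional of $|\psi\rangle$ that equals the acceptance probability.

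This gap is not cosmetic: it is the place where the paper's main new idea enters. Claim~\ref{lemma:decomposition} re-packages the problematic cross terms so that $\rho=\sum_{i}\alpha_i\rho_i$ where every $\rho_i$ is a genuine density matrix — but the price is that the two-sided off-diagonal terms become states locally equivalent to a single EPR pair, not product states, which is unavoidable since $\rho$ itself may be entangled and any positive decomposition of an entangled state must contain entangled pieces. Consequently the bulk of the work is Lemma~\ref{lemma:second_main_lemma}, a \emph{new} Fourier-growth bound for $\Rtwoent$ protocols sharing one EPR pair, proved from scratch via the POVM structure of Claim~\ref{claim:structure_rtwoent}, the observation that $\ketbra{\Phi^+}{\Phi^+}$ has exactly four nonzero entries, and Lee's level-$k$ inequality (Lemma~\ref{lemma:level_k_inequality}). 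Your proposal has no analogue of this lemma and in effect assumes it away by asserting every piece reduces to $\Rtwo$. (A version of your plan could be rescued by decomposing $\rho$ into a \emph{signed} $\ell_1$-bounded combination of product density matrices, e.g.\ via the Pauli basis, after which each piece really is an $\Rtwo$ protocol — but that is not what you wrote, and it is a different route from the paper's.)
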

We also study lower bounds for the Forrelation problem in the quantum SMP model. Prior to our work, there was no partial function separating $\Qent$ with more entanglement and $\Qent$ with less entanglement. In particular, it was unknown whether the Forrelation problem can be solved in the $\Qpub$ model with small cost and no entanglement. Our result shows that this is not the case, and that the Forrelation problem separates $\Qent$ from $\Qpub$, resolving an open problem from~\cite{gavinsky2019quantum}. 

\begin{restatable}[]{thm}{theoremQparallel}
	\label{theoremQparallel}
	The Forrelation problem requires ${\Omega}(n^{1/4})$ qubits of communication in the $\Qpub$~model.
\end{restatable}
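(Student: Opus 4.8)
The plan is to mirror the two–part strategy behind the $\Rtwoent$ lower bound of Theorem~\ref{theoremRtwo}: (i) show that every low-cost $\Qpub$ protocol induces a function --- the XOR-fiber --- with small Fourier growth, and (ii) invoke the fact, implicit in the Forrelation analysis underlying Theorem~\ref{theoremRtwo}, that the Forrelation problem cannot be solved by a protocol whose XOR-fiber has small Fourier growth. Quantitatively, if $C:\bin^n\times\bin^n\to[-1,1]$ is the acceptance bias of a $\Qpub$ protocol and $H$ is its XOR-fiber (as in~\Cref{definition:XOR_fiber}), then a solver with constant advantage forces one of the weights $L_{1,2\ell}(H)$ to be of order at least $(\sqrt n/\eps)^{\ell}$ up to combinatorial factors; in particular it forces $L_{1,2}(H)=\Omega(\sqrt n/\eps)$. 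Since $\eps=\Theta(1/\log n)$, it therefore suffices to prove a Fourier-growth bound for $\Qpub$ protocols of cost $q$ that is \emph{polynomial} in $q$ at each level, e.g.\ $L_{1,2}(H)=O(q^2)$, because this already yields $q^2=\Omega(\sqrt n/\eps)$, i.e.\ $q=\Omega(n^{1/4})$.

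For step (i) I would start from the normal form of a $\Qpub$ protocol: on public randomness $R$, Alice sends a density matrix $\rho^R_x$ on $q_A$ qubits and Bob sends $\sigma^R_y$ on $q_B$ qubits with $q_A+q_B=q$, and the referee applies a measurement $0\preceq M_R\preceq I$ to the product $\rho^R_x\otimes\sigma^R_y$, so $C(x,y)=2\,\E_R\Tr[M_R(\rho^R_x\otimes\sigma^R_y)]-1$. Using $\widehat H(S)=\widehat C(S,S)=2\,\E_R\Tr[M_R(\widehat{\rho^R}(S)\otimes\widehat{\sigma^R}(S))]$ for $S\neq\emptyset$, where $\widehat{\rho^R}(S)=\E_x[\rho^R_x\,\chi_S(x)]$ is the operator-valued Fourier coefficient, bounding $L_{1,\ell}(H)$ reduces to bounding $\sum_{|S|=\ell}\bigl|\Tr[M_R(\widehat{\rho^R}(S)\otimes\widehat{\sigma^R}(S))]\bigr|$ for each fixed $R$. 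The delicate point --- and where the absence of shared entanglement must be exploited --- is to get $\poly(q)^{O(\ell)}$ here rather than the trivial $2^{O(q)}$ that comes from $\|M_R\|_{\mathrm{op}}\le1$, $\|\widehat{\rho^R}(S)\otimes\widehat{\sigma^R}(S)\|_1\le \|\widehat{\rho^R}(S)\|_1\|\widehat{\sigma^R}(S)\|_1$, and Parseval $\sum_S\|\widehat{\rho^R}(S)\|_2^2\le1$; indeed, simply teleporting the quantum messages to reduce to~\Cref{lemma:main_lemma} would spend $q$ EPR pairs and pay the corresponding $2^{5d}$ factor with $d\approx q/2$, which only gives $q=\Omega(\log n)$. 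Instead one must argue directly about how concentrated the operator-Fourier spectrum of a quantum encoding into $q$ qubits can be on a single level --- combining the PSD/contraction structure of $\{\rho^R_x\}$ and $M_R$ with the product form of the joint message, e.g.\ via the factorization $C(x,y)=\E_R\langle u^R(x),\Gamma_R v^R(y)\rangle$ with $\|u^R(x)\|_2,\|v^R(y)\|_2\le1$ and $\|\Gamma_R\|_{\mathrm{op}}\le1$ --- in the same spirit as the XOR-fiber analysis behind~\Cref{lemma:main_lemma} but without the entanglement dimension entering exponentially.

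I expect this polynomial-in-cost Fourier-growth bound for $\Qpub$ to be the crux: one must beat the $2^{\Theta(q)}$-per-level barrier that the naive norm estimates (and the teleportation reduction) run into, which is exactly the place where ``quantum simultaneous with public coins but no entanglement'' has to behave like a cheap classical resource rather than a costly quantum one. Once it is established, the rest is routine: plug $H$ into the Forrelation / forrelated-distribution bookkeeping from the proof of Theorem~\ref{theoremRtwo} with $k=1$, check that the polynomial-in-$q$ growth multiplied by the level-$2\ell$ forrelated Fourier mass (of order $(\eps/\sqrt n)^{\ell}$ up to combinatorial factors) sums to a quantity dominated by the $\ell=1$ term, conclude that constant advantage forces $\poly(q)=\Omega(\sqrt n/\eps)$, and read off $q=\Omega(n^{1/4})$. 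A secondary point to verify is exactly this convergence over all levels $\ell$, i.e.\ that the bound degrades gracefully --- the same bookkeeping already done for $\Rtwoent$.
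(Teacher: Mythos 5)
Your plan follows exactly the same architecture as the paper: write $\widehat H(S)=\Tr\bigl[E\cdot(\widehat\rho(S)\otimes\widehat\sigma(S))\bigr]$ for a $\Qpri$ protocol (the $\Qpub$ case then following by averaging over $R$ and the triangle inequality), and then establish a $\poly(q)^{O(\ell)}$ bound on $L_{1,\ell}(H)$, feeding this into the Forrelation bookkeeping with $\ell=2$ to get $q=\Omega(n^{1/4})$. You also correctly diagnose that the whole argument hinges on beating the $2^{\Theta(q)}$ barrier coming from the naive operator-norm/Parseval estimates on $2^q$-dimensional operators.

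However, you stop at the diagnosis: you do not supply the tool that resolves it, and that tool is the entire content of Lemma~\ref{lemma:main_lemma_2}. The paper's resolution is the \emph{matrix level-$k$ inequality} of Ben-Aroya, Regev, and de Wolf~\cite{ben2008hypercontractive} (Theorem~7 there, restated as Lemma~\ref{lemma:level_k_inequality_matrix_1} and Corollary~\ref{lemma:level_k_inequality_matrix}): for a density-matrix-valued $\rho:\bin^n\to\Scal(\Hcal)$ on $c$ qubits, $\sum_{|S|=\ell}\Tr^2\bigl(\abs{\widehat\rho_S}\bigr)\le O_\ell\bigl((c/\ell)^\ell\bigr)+O_\ell(1)$ --- a hypercontractive estimate in which the bound depends polynomially on the \emph{number of qubits} $c$, not on the dimension $2^c$. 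Combined with Fact~\ref{fact:inequality} applied to $M_1^\dagger M_1$ and $M_{-1}^\dagger M_{-1}$ and a Cauchy--Schwarz step to decouple Alice's and Bob's operator-valued spectra, this gives $L_{1,\ell}(H)\le O_\ell(c^\ell)$ directly. Your proposed alternative --- passing to a bilinear factorization $C(x,y)=\E_R\langle u^R(x),\Gamma_R v^R(y)\rangle$ with unit-norm vectors and a contraction $\Gamma_R$ --- does not by itself break the barrier, since $u^R(x),v^R(y)$ still live in $2^{O(q)}$ dimensions and Parseval alone only controls the $\ell_2$ aggregate; the dimension-free level concentration really does require the matrix hypercontractivity input. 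So the proposal is a faithful outline of the paper's strategy, but the key lemma it correctly flags as ``the crux'' is left unproven, which is a genuine gap.
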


This is also proved using a Fourier growth bound.

\begin{restatable}{lem}{lemmaQpri}\label{lemma:main_lemma_2}
	Let $C:\bin^n\times \bin^n\to [-1,1]$ be a $\Qpub$ protocol of cost $c$ and let
	$H$ be its XOR-fiber as in~\Cref{definition:XOR_fiber}. Then, for all $\ell\in \Nbb$,~we~have
	\[
	L_{1,\ell}(H)\le  O_\ell\big(c^\ell\big). 
	\]
\end{restatable}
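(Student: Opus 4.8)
The plan is to follow (a considerably simpler, entanglement-free version of) the argument behind \Cref{lemma:main_lemma}. Since a public-coin protocol is a convex combination of deterministic ones, and both the XOR-fiber map $C\mapsto H$ and the functional $C\mapsto L_{1,\ell}(H)$ are convex, it suffices to treat a fixed deterministic $\Qpub$ protocol. Such a protocol is specified by a density operator $\rho_x$ on $c_A$ qubits (Alice's message), a density operator $\sigma_y$ on $c_B$ qubits (Bob's message) with $c_A+c_B\le c$, and a Hermitian referee operator $M$, $\opnorm M\le 1$, so that $C(x,y)=\Tr[M(\rho_x\otimes\sigma_y)]$. Folding Bob's message and the measurement into a single operator $N_y:=\tfrac12\bigl(\Tr_B[M(I_A\otimes\sigma_y)]+\Tr_B[M(I_A\otimes\sigma_y)]^\dagger\bigr)$, we get $C(x,y)=\Tr[\rho_x N_y]$ with $\opnorm{N_y}\le 1$ (since $\bra\psi N_y\ket\psi=\Tr[M(\ketbra\psi\psi\otimes\sigma_y)]\in[-1,1]$ for every unit vector $\ket\psi$). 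Unwinding the XOR-fiber and substituting $y=z\odot x$ then gives the basic identity $\widehat H(S)=\E_{x,y}\bigl[\chi_S(x)\chi_S(y)C(x,y)\bigr]=\Tr[\widehat\rho(S)\widehat N(S)]$, where $\widehat\rho(S):=\E_x[\chi_S(x)\rho_x]$ and $\widehat N(S):=\E_y[\chi_S(y)N_y]$ are the operator-valued Fourier coefficients of the two messages.

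I would then stockpile the ingredients the induction consumes: the Parseval bounds $\sum_S\frobnorm{\widehat\rho(S)}^2=\E_x\frobnorm{\rho_x}^2\le 1$ and $\sum_S\widehat N(S)^2=\E_y N_y^2\preceq I$; the pointwise bounds $\abs{C(x,y)}\le 1$, $\onenorm{\widehat\rho(\{i\})}\le 1$ (a difference of conditional density operators), and $\opnorm{\widehat N(S)}\le 1$; and the observation that all of these features survive both restricting the protocol by fixing some input bits and passing to the \emph{derived} protocol $(u,v)\mapsto\Tr[\partial_i\rho_u\,\partial_i N_v]$ in any coordinate $i$, where $\partial_i\rho:=\E_{x_i}[x_i\rho_x]$ and $\partial_i N:=\E_{y_i}[y_i N_y]$ are traceless Hermitian operators with $\onenorm{\partial_i\rho_u}\le 1$ and $\opnorm{\partial_i N_v}\le 1$. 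The derived protocol is what links the recursion together: since $\widehat{\partial_i\rho}(S)=\widehat\rho(S\cup\{i\})$ for $i\notin S$, its XOR-fiber has Fourier coefficient $\Tr[\widehat\rho(S\cup\{i\})\widehat N(S\cup\{i\})]=\widehat H(S\cup\{i\})$, i.e.\ the XOR-fiber of the $i$-th derived protocol is exactly the discrete derivative $D_iH$.

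The main step is an induction on $\ell$ proving $L_{1,\ell}(H)\le (Kc)^\ell$ for a universal constant $K$ (the extra $\ell$-dependence absorbed by $O_\ell(\cdot)$ coming from additive slack accrued at each level, as in \Cref{lemma:main_lemma}); the base $\ell=0$ is $\abs{\widehat H(\emptyset)}=\abs{\E_{x,y}C(x,y)}\le 1$. For the step $\ell-1\to\ell$ one uses the chain rule $L_{1,\ell}(H)=\tfrac1\ell\sum_{i=1}^n L_{1,\ell-1}(D_iH)$ and applies the inductive hypothesis to each $D_iH$, which by the previous paragraph is the XOR-fiber of a derived protocol. The point that makes the sum over $i$ telescope rather than blow up by a factor $n$ is to assign the $i$-th derived protocol an \emph{effective cost} $c_i$ with $\sum_i c_i=O(c)$ --- this is where boundedness of the protocol is spent, through a subadditivity (Holevo-type) estimate to the effect that the total first-order sensitivity of a $c$-qubit message to the $n$ input bits is $O(c)$ --- after which $\sum_i(Kc_i)^{\ell-1}\le K^{\ell-1}\bigl(\sum_i c_i\bigr)^{\ell-1}$, using $\ell-1\ge 1$, closes the induction.

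The heart of the matter, and what I expect to be the main obstacle, is the base case $\ell=1$, namely $L_{1,1}(H)=\sum_i\abs{\Tr[\widehat\rho(\{i\})\widehat N(\{i\})]}=O(c)$ (equivalently, making the effective-cost accounting precise). The naive route --- bounding $\abs{\Tr[\widehat\rho(\{i\})\widehat N(\{i\})]}\le\onenorm{\widehat\rho(\{i\})}\,\opnorm{\widehat N(\{i\})}$ and using $\onenorm{\widehat\rho(\{i\})}\le 2^{c_A/2}\frobnorm{\widehat\rho(\{i\})}$ together with Parseval --- only yields $2^{O(c)}$, so one genuinely has to exploit $\opnorm M\le 1$ (i.e.\ the pointwise boundedness $\abs C\le 1$) to cancel the apparent exponential-in-$c$ contribution. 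This is essentially the technical core of the proof of \Cref{lemma:main_lemma} with the entanglement register absent, and I would run that argument here; with $d=0$ one no longer pays the $2^{5d}$ factor, which is why the bound in this lemma is clean.
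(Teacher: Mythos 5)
Your setup is fine as far as it goes---reduce to $\Qpri$ by convexity, write $C(x,y)=\Tr[M(\rho_x\otimes\sigma_y)]$, and obtain the convolution identity $\widehat H(S)=\Tr[\widehat\rho(S)\widehat N(S)]$---but from there you diverge from the paper and, more importantly, you never close the key quantitative step. The paper does \emph{not} prove either \Cref{lemma:main_lemma} or \Cref{lemma:main_lemma_2} by induction on $\ell$ with derived protocols and an effective-cost allocation; both are proved in one shot via level-$k$ inequalities. For \Cref{lemma:main_lemma_2} specifically, the paper keeps the two messages separate as density matrices, writes $\widehat H(S)=\Tr\big(E\cdot(\widehat\rho_S\otimes\widehat\sigma_S)\big)$ with $E=M_1^\dagger M_1-M_{-1}^\dagger M_{-1}$, bounds $|\widehat H(S)|\le 2\,\Tr|\widehat\rho_S|\cdot\Tr|\widehat\sigma_S|$ using $\opnorm{M_b^\dagger M_b}\le 1$ and $\onenorm{A\otimes B}=\onenorm A\onenorm B$, then Cauchy--Schwarzes over $S$ and applies the \emph{matrix} level-$k$ inequality (\Cref{lemma:level_k_inequality_matrix}, from~\cite{ben2008hypercontractive}) to each of $\sum_{|S|=\ell}\Tr(|\widehat\rho_S|)^2$ and $\sum_{|S|=\ell}\Tr(|\widehat\sigma_S|)^2$, getting $O_\ell((c/\ell)^\ell)+O_\ell(1)$ for each. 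That tool is the engine that converts ``$c$-qubit density matrix'' into ``$\mathrm{poly}(c/\ell)^\ell$ Fourier growth,'' and your proposal never invokes it or a substitute.

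There are two concrete problems with what you wrote. First, by folding Bob's message and the measurement into $N_y$ you destroy the density-matrix structure on Bob's side: $N_y$ is just a contraction on $c_A$ qubits, and it is \emph{not} a density operator, so the matrix level-$k$ inequality does not apply to it; decomposing $N_y$ as a difference of PSD pieces and normalizing costs you a $2^{\Theta(c_A)}$ factor, which is exactly the exponential you were trying to avoid. Second and more centrally, the step you yourself flag as ``the heart of the matter''---an effective-cost bound $\sum_i c_i = O(c)$ via some Holevo-type subadditivity, and more generally the base case $L_{1,1}(H)=O(c)$---is never proved, and the ``derived protocols'' $(u,v)\mapsto\Tr[\partial_i\rho_u\,\partial_i N_v]$ aren't communication protocols (the $\partial_i\rho_u$ are traceless Hermitian, not states), so you cannot simply re-apply an inductive hypothesis about protocols to them. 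Your stated fallback, ``run the argument from \Cref{lemma:main_lemma} with the entanglement register absent,'' does not rescue this, because that argument is itself a direct application of the (scalar) level-$k$ inequality plus a concavity/Jensen step and a POVM normalization---not the induction you are picturing. As it stands, your proposal correctly identifies where the difficulty lies but leaves it unresolved; the missing ingredient is precisely the matrix level-$k$ inequality applied to the two messages kept separately as density matrices.
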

We remark that our techniques also implies that for the $\oplus^k$-Forrelation problem, if Alice and Bob only share $O(k)$ EPR pairs, they require ${\Omega}(n^{1/4})$ \emph{qubits} of communication in the $\Qent$ model. We don't show the details of this proof, instead, we prove a stronger result, namely Theorem~\ref{theoremRoneQsmp}. 
We give an example of a partial function such that with $\Theta(k\log n)$ EPR pairs, it is solvable in the $\Rent$ model with cost $O(k\log n)$, however, with only $O(k)$ \equbits~requires cost ${\Omega}(n^{1/3})$ even in the $\Qent$ model. 

\subsection{Result 2: Separations based on Boolean Hidden Matching}
\subsubsection{Problem Definition: The Boolean Hidden Matching Problem}  We first define the Boolean Hidden Hatching problem.  The (relational) Hidden Matching problem was first defined by Bar Yossef et al.~\cite{bar2008exponential}. The Boolean Hidden Matching problem was defined by Gavinsky et al.~\cite{gavinsky2007exponential} in the context of one-way communication complexity and was used to separate the $\Rent$ and $\Rone$ models.
Subsequently this problem and its variants have found several applications,  especially in proving streaming lower bounds starting with the seminal work of Kapralov et al.~\cite{kapralov2014streaming}. The Boolean Hidden Matching problem, denoted $\bhm$, is defined as follows. Let $n,m\in \Nbb$ be parameters and $m=\alpha n$ for a small enough constant $\alpha\ll 1$.

\begin{definition}[Boolean Hidden Matching]
	\label{def:booleanmatching}    
	Alice gets $x\in \{-1,1\}^n$, Bob gets a matching on $[n]$ with $m$ edges and a string $y\in\{-1,1\}^m$. Their goal is to compute $\bhm(x,y,M)$
	given by   
	\[\bhm(x,y,M)=\begin{cases} -1 & \text{ if }Mx=\overline{y}\\
		1&\text{ if }Mx=y.\end{cases}
	\]
	%
	%
\end{definition}

Here, we use $M x\in \bin^m$ to denote the vector whose $k$-th coordinate is $x_{i_k}\cdot x_{j_k}$ for $k\in [m]$, where the edges of $M$ are $(i_1,j_1),\ldots,(i_m,j_m)\in[n]^2$. We also use $\overline{y}$ to denote $-y$.  Below we will be concerned with computing the XOR of $k$ independent copies of $\bhm$. 
\begin{definition}[$\oplus^k$-Boolean Hidden Matching Problem] \label{definition:XORBHM} This problem is the XOR of $k$ independent instances of the Boolean Hidden Matching problem. To be precise, Alice receives $x=(x^{(1)},\ldots,x^{(k)})$ and Bob receives $y=(y^{(1)},\ldots,y^{(k)})$ and $M_1,\ldots,M_k$ where $x^{(i)}\in\bin^{n}$, $y^{(i)}\in \bin^m$ and $M_i$ is a matching on $[n]$ with $m$ edges for all $i\in[k]$. They need to compute
	\[
	\bhm^{(\oplus k)}(x,y)=\prod_{i=1}^k \bhm\pbra{x^{(i)},y^{(i)},M_i}. 
	\]
\end{definition}



\subsubsection{Main Theorem} We now state our main theorem. Here, $\alpha\ll 1$ is some absolute constant and $k\in \Nbb$ is a parameter, possibly depending on $n\in \Nbb$.

\begin{restatable}[]{thm}{theoremRoneQsmp}
	\label{theoremRoneQsmp}
	The $\oplus^k$-Boolean Hidden Matching problem  can be solved with $\tilde{O}(k\log n)$ bits of communication in the $\Rent$  model if Alice and Bob share $\tilde{\Theta}(k\log n)$ EPR pairs. However, if Alice and Bob only share $O(k)$ \equbits, then this problem requires
	\begin{itemize}
		\item$ {\Omega}(k \sqrt{n})$ bits of communication in the $\Roneent$ model,
		\item ${\Omega}(k n^{1/3})$ qubits of communication in the $\Qent$ model.
	\end{itemize}
\end{restatable}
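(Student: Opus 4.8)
The plan is to establish the upper bound and the two lower bounds separately, mirroring the structure used for the Forrelation results (Theorem~\ref{theoremRtwo}, Theorem~\ref{theoremQparallel}).

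For the upper bound, I would exhibit an explicit $\Rent$ protocol using $\tilde O(k\log n)$ shared EPR pairs and $\tilde O(k\log n)$ bits of communication. The idea is to solve each of the $k$ copies of $\bhm$ in parallel with a one-bit answer and XOR the results. A single copy of $\bhm$ can be solved with $O(\log n)$ bits plus $O(\log n)$ EPR pairs: Alice and Bob use the shared entanglement and local measurements to test, for a random edge $(i_j, j_k)$ selected by shared randomness derived from measuring EPR pairs, whether $x_{i_k} x_{j_k}$ agrees with $y_k$; repeating $O(\log n)$ times (or one block of $\tilde O(\log n)$ for each copy) drives the error down. More carefully, the standard trick (as in the Gavinsky et al.\ one-way protocol, adapted to SMP-with-entanglement) is for Alice to send the referee a short fingerprint and for Bob to send a short fingerprint such that the referee can recover one parity bit $x_{i_k}x_{j_k}\oplus y_k$ per copy; since $Mx=y$ makes all these parities $0$ and $Mx=\bar y$ makes all of them $1$, the referee can decide each copy, XOR them together, and amplify. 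The total is $\tilde O(k\log n)$ communication and $\tilde O(k\log n)$ entanglement, and the error of the XOR is still bounded since each copy is solved with error $\ll 1/k$.

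For the lower bounds, I would combine (i) a Fourier growth bound for the relevant protocol class with limited entanglement, and (ii) a Fourier-analytic lower bound on how well $\bhm^{(\oplus k)}$ can be approximated by functions of bounded Fourier growth. Concretely, by the same XOR-fiber machinery used in Lemma~\ref{lemma:main_lemma} and Lemma~\ref{lemma:main_lemma_2}, an $\Roneent$ protocol of cost $c$ with $\le 2d$ shared qubits has XOR-fiber $H$ with $L_{1,\ell}(H)\le 2^{O(d)}\cdot O_\ell(c^\ell)$ (the one-way, classical case being no harder than the two-way case already handled), and a $\Qent$ protocol of cost $c$ on $\le 2d$ qubits similarly has $L_{1,\ell}(H)\le 2^{O(d)}\cdot O_\ell\big((c^2)^\ell\big)$ or the appropriate analogue (quantum communication of $c$ qubits behaves like $O(c)$ bits with a quadratic loss in the degree-$\ell$ weight). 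On the other side, one shows that for a single copy of $\bhm$, any degree-$\le\ell$ correlation is exponentially small: the key combinatorial fact is that $\bhm$'s sign depends on $Mx$, and for a random matching $M$ the relevant Fourier characters of $x$ that can correlate with the output are supported on the matched pairs, so low-degree polynomials in $x$ have correlation at most $(\ell/\sqrt n)^{\Theta(1)}$ with $\bhm$ — this is essentially the one-way lower bound of Gavinsky et al.\ recast in terms of Fourier weight. XORing $k$ independent copies multiplies the hard-to-approximate-ness: an XOR lemma (in the style of \cite{DBLP:conf/approx/GirishRZ21}, or a direct hybrid argument over the $k$ copies) shows $\bhm^{(\oplus k)}$ has correlation at most $\big(O(\ell/\sqrt n)\big)^{\Theta(k)}$ with any degree-$\ell$ function. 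Balancing: if the protocol has cost $c$ and $d=O(k)$ qubits of entanglement, then choosing $\ell$ appropriately and using $\sum_\ell L_{1,\ell}(H)\cdot(\text{per-copy correlation})^{\Theta(k)} \ge \Omega(1)$ forces $c=\Omega(\sqrt n)$ in the $\Roneent$ case and $c=\Omega(n^{1/3})$ in the $\Qent$ case (the cube root coming from the quadratic degradation $c^2$ in the quantum Fourier growth bound, exactly as $n^{1/4}$ in Theorem~\ref{theoremQparallel} becomes $n^{1/3}$ here because the per-copy Fourier decay exponent for matchings is $1/\sqrt n$ rather than Forrelation's $1/\sqrt n$-type bound with a different constant — I would track these exponents carefully). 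The factor of $k$ in the stated bounds $\Omega(k\sqrt n)$ and $\Omega(kn^{1/3})$ comes from the $\Theta(k)$ in the exponent of the XOR correlation bound feeding through the balancing step.

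I expect the main obstacle to be the Fourier growth bound for $\Roneent$/$\Qent$ protocols with more than $\log n$ qubits of entanglement applied to the Boolean Hidden Matching structure, together with the XOR lemma step. Unlike Forrelation, the inputs here are not uniform product distributions in a clean way (Bob's input includes a matching $M$ as well as a string $y$), so the XOR-fiber must be set up with respect to the correct input distribution and the "fiber" is over Alice's string conditioned on Bob's matching; showing that $L_{1,\ell}$ of this fiber is still controlled by $2^{O(d)} c^{O(\ell)}$ requires adapting the entanglement-accounting argument of Lemma~\ref{lemma:main_lemma} to this setting. The second delicate point is getting the XOR lemma to lose only a constant per copy in the base (so that $k$ copies give exponent $\Theta(k)$ and hence the clean $\Omega(k\sqrt n)$/$\Omega(kn^{1/3})$ bounds) rather than something lossier; I would handle this by a hybrid/martingale argument over the $k$ independent copies, peeling off one copy at a time and using the single-copy low-degree hardness as the inductive step, being careful that the protocol's messages about the other $k-1$ copies don't leak enough information to break the peeled copy.
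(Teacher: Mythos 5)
Your upper bound sketch is in the right spirit (the paper's actual protocol has Alice phase-encode $x$ into the shared EPR state and Bob measure in his matching basis, repeated $O((\log k)/\alpha)$ times per copy), so I will focus on the lower bounds, where your proposal diverges from the paper and has a genuine gap.

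You propose to prove the lower bounds in the style of Theorem~\ref{theoremRtwo}: establish a Fourier growth bound $L_{1,\ell}(H)\le 2^{O(d)}\cdot O_\ell(c^\ell)$ (resp.\ $O_\ell(c^{2\ell})$) on the XOR-fiber of an $\Roneent$ (resp.\ $\Qent$) protocol with $\le 2d$ shared qubits, then combine it with a low-degree-hardness statement for $\bhm^{(\oplus k)}$. This route breaks down at the first step for a structural reason you flag but do not resolve: the XOR-fiber $H(z)=\E_{x}[C(x,x\odot z)]$ is only defined when Alice's and Bob's inputs both range over $\bin^n$, which is exactly the Forrelation setting. In BHM, Bob's input is a pair $(M,y)$ with $M$ a matching, so there is no canonical XOR-fiber, and Lemma~\ref{lemma:main_lemma} (which is proved for $\Rtwoent$ via the structure Claim~\ref{claim:structure_rtwoent} and the convolution identity for XOR-fibers) does not transfer. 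Moreover, the paper never proves a Fourier growth bound for $\Qent$ protocols with nontrivial entanglement --- Lemma~\ref{lemma:main_lemma_2} is only for $\Qpub$ with no shared state --- so the bound $L_{1,\ell}\le 2^{O(d)} O_\ell(c^{2\ell})$ you invoke does not exist and would itself be a substantial new result. Your ``hybrid argument over the $k$ copies'' for the XOR correlation bound is also stated only as an intention; this is a known hard point and cannot be waved through.

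The paper's proof of Theorem~\ref{theoremRoneQsmp} instead follows a genuinely different two-step plan. First, it \emph{removes entanglement} directly: Lemma~\ref{lemma:simulation_q} (for $\Qent$, via a post-selection/projection trick where Charlie projects two unentangled registers onto the diagonal and then measures Hadamards) and Lemma~\ref{lemma:simulation_r} (for $\Roneent$, via Alice sending a random sampled coordinate of her post-measurement density matrix) convert a cost-$c$ protocol with $2d$ shared qubits and advantage $1/3$ into a cost-$(c+O(d))$ protocol with \emph{no} entanglement and advantage $\Theta(2^{-4d})$. Second, it proves XOR lemmas for \emph{unentangled} protocols against BHM (Lemmas~\ref{lemma:main_lemma_3} and~\ref{lemma:main_lemma_4}), by constructing the hard distributions $\mu_1^{(k)},\mu_{-1}^{(k)}$ over $(\Ycal,\Ncal)$-products and bounding, via the matrix level-$k$ inequality and \Cref{fact:matchingprobability}, the trace distance of the expected message states under the two distributions. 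Balancing $2^{-4d}$ against $O_k((c/k)^3/n)^{k/2}$ and $O_k((c/k)^2/n)^{k/2}$ with $d=\Theta(k)$ yields $c=\Omega(kn^{1/3})$ and $c=\Omega(k\sqrt n)$. So the ``remove entanglement first'' step is not a convenience here --- it is what lets the paper sidestep precisely the two obstacles your route would have to overcome, namely a Fourier growth bound for entangled $\Qent$/$\Roneent$ protocols in a non-XOR-fiber input format. You should either supply those missing ingredients or switch to the paper's removal-then-XOR-lemma strategy.
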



Similar to the discussion below~\Cref{theoremRtwo}, 
our upper bound holds provided Alice and Bob share $\tilde{\Theta}(k \log n)$ EPR pairs, however, the lower bound holds for all possible entangled states on $O(k)$ qubits,  and  our lower bound also holds for protocols even with advantage $2^{-o(k)}$. The main technical contribution of this part is to argue that $\Rone$ and $\Qpub$ protocols satisfy an XOR lemma with respect to computing the Boolean Hidden Matching problem. 

\paragraph*{XOR Lemmas.} XOR lemmas study the relation between the computational resources of $F$ and the $k$-fold XOR of  $F$ on $k$ independent inputs. In particular, XOR lemmas for communication complexity are of the following format: If cost-$t$ protocols have advantage at most $2/3$ in computing $F$, then cost-$o(tk)$ protocols have advantage at most $2^{-\Theta(k)}$ in computing the $k$-fold XOR of $F$. XOR lemmas provide a framework to construct hard objects in a black-box~way and have applications to several areas in theoretical computer science such as one-way functions, pseudorandom generators and streaming algorithms. We prove an XOR lemma for $\Rone$ and $\Qpub$ protocols with respect to computing the Boolean Hidden Matching problem.

\begin{restatable}[]{lem}{lemmaQpriXOR}\label{lemma:main_lemma_3}
	Let $C$ be any $\Qpub$ protocol of cost $c$. Then its advantage in computing the $\oplus^k$-Boolean Hidden Matching problem is at most 
	$
	O_k \pbra{\frac{(c/k)^{3}}{n}}^{k/2} + O_k(n^{-k/2}).
	$
\end{restatable}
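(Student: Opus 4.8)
The plan is to express the advantage as an $\ell_1$-sum of Fourier coefficients of the protocol over a \emph{structured} family of multi-characters, to bound that sum using a Fourier-growth estimate for $\Qpub$ protocols of the kind in \Cref{lemma:main_lemma_2}, and then to exploit the randomness of the matchings together with a convexity (AM--GM) argument to collapse everything into the stated product form. The $\Rone$ case referred to in the surrounding discussion is handled the same way, with the Fourier growth of one-way classical protocols in place of that of $\Qpub$.

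\textbf{Step 1: normal form.} First I would put $C$ into a bilinear form. Averaging over the public randomness and absorbing Bob's $b$-qubit message into a POVM element acting on Alice's $a$-qubit message, one writes $C(\vec x,\vec y,\vec M)=\E_r\,\Tr\!\sbra{\rho^r_{\vec x}\,B^r_{\vec y,\vec M}}$ where $\rho^r_{\vec x}$ is a density matrix and $\opnorm{B^r_{\vec y,\vec M}}\le 1$, both on a space of dimension $2^{O(c)}$. Fix the hard distribution $\mu$ for $\bhm^{(\oplus k)}$: the $x^{(i)}$ and the matchings $M_i$ are uniform, and $y^{(i)}=\sigma_i\cdot M_i x^{(i)}$ with $\sigma_i\in\{-1,1\}$ uniform, so that $\bhm(x^{(i)},y^{(i)},M_i)=\sigma_i$ on the support of $\mu$. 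The advantage is then $\abs{\E_{r}\E_{\vec x,\vec M}\E_{\vec\sigma}\sbra{\pbra{\prod_i\sigma_i}\Tr[\rho^r_{\vec x}B^r_{\vec y,\vec M}]}}$; expanding $B^r_{\vec y,\vec M}$ in its $\vec y$-Fourier series and averaging over $\vec\sigma$ extracts, for each copy $i$, the part that is \emph{odd} in $y^{(i)}$. Using that $(M_ix^{(i)})^{T_i}=(x^{(i)})^{U_i}$, where $U_i\subseteq[n]$ is the set of endpoints of the edges of $M_i$ indexed by $T_i$ (so $\abs{U_i}=2\abs{T_i}$ since $T_i$ is a sub-matching), this yields that the advantage is at most $\E_{\vec M}\sum_{\vec T}\abs{\widehat C\pbra{\vec U(\vec T,\vec M),\vec T}}$, the sum being over tuples $\vec T=(T_1,\dots,T_k)$ with every $T_i\subseteq[m]$ odd and nonempty, and $\widehat C(\cdot,\cdot)$ the joint $(\vec x,\vec y)$-Fourier coefficient of $C$ (for the fixed $\vec M$, with $\E_r$ passed through linearly).

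\textbf{Step 2: Fourier growth and matching dilution.} The heart of the argument is to bound this structured Fourier mass. Grouping the sum by the profile $(t_1,\dots,t_k)$, $t_i=\abs{T_i}$, the relevant characters have total level $\sum_i\pbra{\abs{U_i}+\abs{T_i}}=3\sum_i t_i$; hence one can invoke a $\Qpub$ Fourier-growth bound in the spirit of \Cref{lemma:main_lemma_2} (applied to the XOR-fiber of $C$, \Cref{definition:XOR_fiber}) to control the level-$3\ell$ mass by $O_\ell(c^{3\ell})$, which is \emph{polynomial} in the cost $c$ rather than $2^{O(c)}$. Next I would take the expectation over the random matchings $M_1,\dots,M_k$: for a fixed profile and fixed positions of $\vec T$, each $U_i$ is (essentially) a uniform $2t_i$-subset of $[n]$, so averaging dilutes the contribution of copy $i$ by a factor $\widetilde O(t_i)^{O(t_i)}/n^{\,\Theta(t_i)}$; these polynomial-in-$n$ denominators cancel the exponential-in-$\ell$ factors coming from the Fourier-growth bound. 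The dominant term is the all-ones profile $t_i\equiv1$ (total level $3k$), contributing $2^{O(k)}(c^3/n^{\Theta(1)})^{k}$-type weight, and the remaining profiles are lower order. A convexity/AM--GM step that distributes the per-copy Fourier-level (equivalently, per-copy cost, $\sum_i c_i\le c$) across the $k$ copies then collapses the resulting bound of the shape $\sum_{t_1+\dots+t_k=\ell}\prod_i\beta(t_i)$ into $O_k\big((c/k)^3/n\big)^{k/2}+O_k(n^{-k/2})$, the additive term absorbing the off-promise corrections in Step~1 and the lower-order profiles.

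\textbf{Main obstacle.} The crux is the combination in Step 2: getting a bound on the structured Fourier mass that is polynomial (not exponential) in $c$, which forces us to use the quantum Fourier-growth machinery rather than a crude $2^{O(c)}$ dimension count on Alice's message, while carrying \emph{matrix-valued} Fourier coefficients through matrix Cauchy--Schwarz; and then matching the exponential-in-level factors of that bound against the polynomial-in-$n$ dilution produced by the uniform matchings so that they cancel and leave the clean $\big((c/k)^3/n\big)^{k/2}$ form. A secondary point is verifying that the single-copy ``lifting'' of $\bhm$ by a uniform matching has precisely the analytic property needed — roughly, bounded low-level Fourier weight spread over $\Theta(n)$ edges — and that this survives the passage from classical to quantum simultaneous messages.
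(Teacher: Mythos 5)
Your high-level strategy is the same as the paper's — produce hard distributions that agree on low-order moments (equivalently, your $\vec\sigma$-averaging that extracts the odd-in-each-copy part), expand in Fourier, exploit the randomness of the matchings to get a polynomial-in-$n$ dilution factor, and combine with a matrix-valued Fourier-growth estimate. But two parts of the accounting do not go through as written.

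First, invoking \Cref{lemma:main_lemma_2} and XOR-fibers here is a misfit. \Cref{definition:XOR_fiber} applies to protocols $C:\bin^n\times\bin^n\to[-1,1]$ where both players receive strings of the same length related by a coordinatewise product, so that $H(z)=\E_x[C(x,x\odot z)]$ is well-defined. In $\bhm^{(\oplus k)}$ Alice gets $\vec x\in\bin^{nk}$ while Bob gets $(\vec M,\vec y)$ with $\vec y\in\bin^{mk}$; the two inputs have different lengths and are tied together through the matchings, not a pointwise XOR, so there is no XOR-fiber to apply the lemma to. The paper does not use \Cref{lemma:main_lemma_2} at all in this proof. What it uses is the matrix level-$k$ inequality (\Cref{lemma:level_k_inequality_matrix}), applied \emph{separately} to the density-matrix-valued map $\rho(\vec x)$ on Alice's side at level $2\ell$ and to $\sigma_M(\vec y)$ on Bob's side at level $\ell$, after a matrix Cauchy--Schwarz step that reduces the structured $\ell_1$-sum $\sum_{T\in\Tcal^\ell_{n,k}}\E_M\|\widehat\rho(M^{-1}T)\|_1\|\widehat{\sigma_M}(T)\|_1\cdot\indi[M\text{ matches}]$ to a product of two square-rooted $\ell_2$-sums. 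Your "obstacle" paragraph does mention matrix Cauchy--Schwarz, which is the right idea, but as written the proposal leans on the wrong lemma.

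Second, the exponent bookkeeping is too lossy. You claim the structured level-$3\ell$ mass is $O_\ell(c^{3\ell})$; that is not strong enough. After Cauchy--Schwarz, the relevant quantities are $\sqrt{\sum_{|S|=2\ell}\|\widehat\rho(S)\|_1^2}\le O_\ell\big((c/\ell)^{\ell}\big)$ and $\sqrt{\sum_{|T|=\ell}\E_M\|\widehat{\sigma_M}(T)\|_1^2}\le O_\ell\big((c/\ell)^{\ell/2}\big)$, so the product is $O_\ell\big((c/\ell)^{3\ell/2}\big)$, and it is this $3\ell/2$ exponent (not $3\ell$) that cancels against the matching dilution $O_\ell\big((\ell/(nk))^{\ell/2}\big)$ from \Cref{fact:matchingprobability}, after another square root. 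With your stated $c^{3\ell}$, the term at level $\ell=k$ would be $c^{3k}n^{-k/2}$, which does not vanish when $c\approx n^{1/6}$ — contradicting the lemma you are proving — so the claimed intermediate bound cannot be correct. A related minor point: the profile-by-profile decomposition with an AM--GM step is unnecessary; the paper simply sums over the total size $|T|=\ell$ from $k$ up to $mk$ and treats each level as a whole, which is cleaner and avoids the need to distribute cost across copies.
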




\begin{restatable}[]{lem}{lemmaRoneXOR}\label{lemma:main_lemma_4}
	Let $C$ be any $\Rone$ protocol of cost $c$. Then its advantage in computing the $\oplus^k$-Boolean Hidden Matching problem is at most $O_k\pbra{\tfrac{(c/k)^2}{n}}^{k/2} + O_k(n^{-k/2})$.
\end{restatable}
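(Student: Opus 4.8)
The plan is to reduce the $\Rone$ case to the $\Qpub$ case already handled in Lemma~\ref{lemma:main_lemma_3}, while squeezing out the better exponent $(c/k)^2$ in place of $(c/k)^3$. The starting point is the standard observation that a one-way classical protocol with public randomness and cost $c$ is a convex combination of deterministic one-way protocols, each of which partitions Alice's input space into at most $2^c$ parts; fixing the randomness, Alice's message is a function $a=a(x)$ taking at most $2^c$ values, and Bob outputs a $\binpm$-valued (or $[-1,1]$-valued, after averaging over private randomness) function of $(a,y,M)$. By convexity it suffices to bound the advantage of a single deterministic one-way protocol, and then average. The first real step is therefore to set up the analogue of the \emph{XOR-fiber} machinery for the one-way setting and express the advantage
\[
\E_{x,y,M}\sbra{ C(x,y,M)\cdot \bhm^{(\oplus k)}(x,y,M)}
\]
as a Fourier-type quantity in Alice's message. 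Because the $k$ instances are independent, this factorizes across blocks once we condition on the messages, and the key object becomes, for each block $i$, the bias of the single-instance BHM predictor given Alice's partial message — which is controlled by the Fourier mass of the indicator of Alice's message-cells at the relevant levels.

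The second step is the single-instance analysis: for one copy of $\bhm$, a deterministic one-way message of length $c$ lets Bob distinguish $Mx=y$ from $Mx=\bar y$ with advantage at most $O(\sqrt{c/n})$ (up to constants), by the Boolean-Hidden-Matching lower bound of Gavinsky et al.~\cite{gavinsky2007exponential} — more precisely, by the Fourier-analytic core of that argument, which bounds Alice's useful information in terms of the $\ell_2$ mass of her message-indicator on Fourier levels $\ge 1$, and crucially the matching structure forces the relevant level to be level $2$ (pairs of coordinates joined by an edge), giving the quadratic-in-$1/n$ behaviour rather than linear. I would phrase this as: for a fixed message value $a$ with $\p_x[a(x)=a]=p_a$, the contribution of cell $a$ to the single-instance advantage is at most $O(\sqrt{\log(1/p_a)/n})\cdot p_a$ in expectation over the matching, and summing $\sum_a p_a\sqrt{\log(1/p_a)/n}\le \sqrt{c/n}$ by concavity (Jensen), since $\sum_a p_a\log(1/p_a)\le c$.

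The third step is to tensor this across the $k$ blocks. After conditioning on Alice's full message $a=(a^{(1)},\dots,a^{(k)})$ — which has total description length $c$, so $\sum_a p_a\log(1/p_a)\le c$ — the advantage on $\bhm^{(\oplus k)}$ is a product over blocks of single-instance biases, each of absolute value $\le O(\sqrt{c_i/n})$ where $c_i$ is the ``entropy budget'' spent on block $i$, with $\sum_i c_i \approx c$. Taking expectations and applying Hölder/AM-GM across the $k$ factors, $\prod_i \sqrt{c_i/n}\le (c/(kn))^{k/2}$ when $\sum c_i=c$, which after folding the $O_k(\cdot)$ absorbs into $O_k((c/k)^2/n)^{k/2}$; the additive $O_k(n^{-k/2})$ term comes from the error in the single-instance Fourier estimate (the part of Alice's message indicator mass that doesn't lie at the good level, plus low-probability cells where $\log(1/p_a)$ is large). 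Here I would follow whatever conditioning-and-product argument was used for Lemma~\ref{lemma:main_lemma_3}, simply plugging in the sharper single-instance bound $O(\sqrt{c/n})$ (classical, level 2) instead of $O((c/n)^{1/2})$-with-an-extra-factor that the quantum argument necessarily loses.

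The main obstacle I anticipate is making the block-wise factorization rigorous: after fixing Alice's global message the $k$ copies are independent \emph{given the message}, but one must verify that Bob's output, which may correlate the blocks through a single $[-1,1]$-valued function, still admits the bound as a product of per-block biases — this is exactly the role of the XOR-fiber decomposition, which linearizes $C$ into a combination of products of single-block functions, and the routine-but-careful part is checking that the per-block functions arising this way are genuine one-way-message functions to which the single-instance BHM bound applies. The quantitative improvement from cube to square is entirely a feature of the classical single-instance bound (no amplitude amplification / no $XX$-norm slack), so the only place one must be attentive is not to lose a power of $c$ when passing from $\ell_2$ Fourier mass at level $2$ back to the $\ell_1$-type advantage via Cauchy–Schwarz.
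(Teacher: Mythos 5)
There is a genuine gap, and it sits in the center of your plan. You propose to ``condition on Alice's full message $a=(a^{(1)},\dots,a^{(k)})$'' and then treat the advantage on $\bhm^{(\oplus k)}$ as a product over blocks of single-instance biases with per-block ``entropy budgets'' $c_i$ summing to $\approx c$. This decomposition does not exist. A one-way protocol of cost $c$ for the XOR problem sends a single $c$-bit message that is an arbitrary function of the whole input $x=(x^{(1)},\dots,x^{(k)})\in\bin^{nk}$; it is not a tuple of per-block messages. After conditioning on the message $z$, the posterior of $x$ is uniform on the cell $A_z\subseteq\bin^{nk}$, which is \emph{not} a product set, so the $k$ blocks of $x$ are not independent given the message and the advantage does not factor into per-block biases. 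If such a factorization were available, every XOR lemma would be immediate; the entire difficulty is exactly that Alice can correlate her message across blocks. You also flag this as ``the main obstacle'' and then defer it to ``the XOR-fiber decomposition, which linearizes $C$ into a combination of products of single-block functions,'' but the XOR-fiber (Definition~\ref{definition:XOR_fiber}) does no such thing: it is the map $H(z)=\E_x[C(x,x\odot z)]$ and provides no product decomposition.

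The paper proves the lemma without any block-wise factorization. Fixing Alice's message $z$ and letting $g=\indi[x\in A_z]$ on $\bin^{nk}$, the trace-distance quantity $\Delta_{A_z}$ between the two mixed hard distributions is controlled entirely by the Fourier spectrum of $g$ via the identity $\widehat{\Ncal^{\otimes k}}(T)\propto \widehat g(M^\dagger T)$. The XOR structure enters through the Fourier support, not through a product decomposition: after summing $(-1)^{|K|}$ over $K\subseteq[k]$, only coefficients $\widehat g(S)$ with $S\in\Scal_{n,k}$ survive (every block $|S_i|$ even with $|S_i|/2$ odd), and all such $S$ have $|S|\ge 2k$. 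The matching-probability bound (Fact~\ref{fact:matchingprobability}) then supplies the $n^{-k/2}$, the scalar level-$k$ inequality (Lemma~\ref{lemma:level_k_inequality}) applied to the Boolean indicator $g$ supplies the $(c/\ell)^\ell$ factor, and the final sum over the $2^c$ cells uses the concavity of $\gamma\mapsto\gamma\ln^{\ell}(e/\gamma^{1/2\ell})$, exactly as in the $\Rtwoent$ argument. Your explanation of the cube-to-square improvement is also off: it is not about ``amplitude amplification'' or any ``$XX$-norm slack.'' In the quantum Lemma~\ref{lemma:main_lemma_3} one needs the matrix level-$k$ inequality for \emph{both} $\rho(x)$ (at level $2\ell$) and $\sigma_M(y)$ (at level $\ell$), contributing $(c/\ell)^{\ell}$ and $(c/\ell)^{\ell/2}$ after taking square roots, whence the $3/2$ exponent. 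In the classical one-way case, conditioned on Alice's message, Bob's output is deterministic and there is no analogue of $\sigma_M$ to control, so only Alice's indicator $g$ is level-$k$'d, giving the $\ell$ exponent and hence the square.
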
 

Until very recently~\cite{DBLP:conf/focs/Yu22}, we didn't have an XOR lemma for $\Rone$ and as far as we are aware we do not have \emph{any} XOR lemmas for the quantum communication model. However we do have direct product and direct sum theorems for  classical and quantum communication models (which are strictly weaker than XOR lemmas) and in fact this was used in the prior work of~\cite{gavinsky2006bounded,jain2007direct}.  Our main technical contribution here is an XOR lemma for $\Rone$ and $\Qpub$  protocols for the Boolean Hidden Matching problem. Only during completion of this project, we were made aware of a recent work by Yu~\cite{DBLP:conf/focs/Yu22} proving an XOR lemma for all constant round classical protocols (hence implying~\Cref{lemma:main_lemma_4}). Given the technicality of his proof, in our paper we present a simple proof for an XOR lemma for the $\Rone$ model for the Boolean Hidden Matching problem.


\subsubsection{Further Results: Hybrid Classical-Quantum Models}
Our techniques can be used to show lower bounds for \emph{hybrid classical-quantum} models that have a large amount of classical \emph{and} quantum communication.  We measure the communication cost by the sum of the quantum and classical cost. As far as we know, ours is the first work  proving quantum advantages over hybrid classical-quantum models. 

We first consider the $(\Rtwo+\Qpub)$ model where Alice and Bob engage in classical interactive communication, and then each send a quantum message to Charlie, who applies an arbitrary projective measurement and returns the outcome as the answer. Note that the quantum state sent by the players to Charlie is allowed to depend on the classical transcript of the protocol. 
\begin{theorem}\label{theorem:hybrid_1}
The Forrelation problem requires $\Omega(n^{1/8})$ cost in the $(\Rtwo+\Qpub)$-model. 
\end{theorem}

We also consider the $(\Rone+\Qpub)$ model where Alice sends Bob a classical message and then Alice and Bob each send a quantum message to Charlie, who applies an arbitrary projective measurement and returns the outcome as the answer. Here too, the quantum state sent by the players to Charlie is allowed to depend on the classical transcript of the protocol.  

\begin{theorem}\label{theorem:hybrid_2}
The Boolean Hidden Matching problem requires $\Omega(n^{1/6})$ cost in the $(\Rone+\Qpub)$-model. 
\end{theorem}

Some remarks are in order. Firstly, we haven't optimized the lower bound parameters, and the exact bounds can likely be improved to match the parameters in our main results. Secondly, it is likely that these lower bounds can be made to hold even in the presence of limited entanglement, using techniques similar to our main results. We omit these for simplicity.
\subsection{Proof Sketch}
One of the difficulties of proving lower bounds against classical models equipped with entanglement is that these models are quite powerful; using the quantum teleportation protocol, any quantum protocol with $q$ qubits of communication can be classically simulated using $q$ EPR pairs. Thus, all known partial functions that separate quantum and classical communication complexity are easy to classically simulate in the presence of $O(\log^c n)$ EPR pairs for some small constant $c>0$. 

One approach to show a fine-grained separation between protocols with more entanglement and protocols with less entanglement is the following. Consider any communication task $F$ that exhibits an exponential separation between quantum and classical communication complexity. We have many examples of such tasks that are easy with $O(\log n)$ EPR pairs but exponentially harder in the absence of entanglement.  Consider the problem of solving $k$ independent and parallel instances of $F$. Here, the players receive $k$ pairs of inputs $(x_i,y_i)$ and need to compute $F(x_i,y_i)$ \emph{for every} $i\in[k]$. We denote this problem by $F^{(k)}$. The hope is that entanglement obeys a direct sum theorem of sorts, that is, if the players need at least $\Omega(\log n)$ \equbits~to solve the original task, then to solve $k$ independent and parallel instances, they need at least $\Omega(k\log n)$ \equbits. In particular, we might hope that protocols that compute $F^{(k)}$ using only $O(k)$ \equbits~require exponentially larger cost. There is a way to make this idea work and this was done in~\cite{DBLP:journals/qic/Gavinsky08}. We describe this idea. Assume by contradiction that we have a small-cost protocol computing $F^{(k)}$ using only $O(k)$ \equbits. 

\emph{\textbf{Step 1: Remove entanglement.}} The first step is to remove all entanglement from this protocol. To do this, we replace the entangled state on $O(k)$ qubits by the maximally mixed state on $O(k)$ qubits. Since the maximally mixed state is unentangled, the resulting protocol effectively uses no entanglement. Furthermore, the mixed state can be viewed as a probability distribution over states, where the original entangled state occurs with probability $2^{-\Theta(k)}$. It follows that this protocol succeeds with probability at least $2^{-\Theta(k)}$. 

\emph{\textbf{Step 2: Direct Product Theorems.}} The second step is to prove a direct product theorem in the absence of entanglement. Direct product theorems in communication complexity are of the following form: If for cost-$t$ protocols, the probability of solving one instance of $F$ is at most $2/3$, then for cost-$o(tk)$ protocols, the probability of solving $k$ parallel and independent instances of $F$ is at most~$2^{-\Theta(k)}$. Establishing such theorems is highly non-trivial and for one-way protocols, this was done by~\cite{DBLP:journals/qic/Gavinsky08,jain2007direct}.

Following this framework, the work of~\cite{DBLP:journals/qic/Gavinsky08} gives examples of relational problems that are easy to solve with $\Theta(k\log n)$ EPR pairs but difficult with only $O(k)$ EPR pairs. One drawback of this approach is that the task $F^{(k)}$ has many output bits, regardless of whether $F$ is a partial function or a relational problem. To get separations for functions with single-bit outputs, we need to modify this approach. We ask the players to solve the XOR of $k$ independent instances of $F$. Here, the players receive $k$ pairs of inputs $(x_i,y_i)$  and they need to compute $\prod_{i\in[k]} F(x_i,y_i)$. We denote this problem by $F^{(\oplus k)}$. We will show that there is no small-cost protocol solving $F^{(\oplus k)}$ using only $O(k)$ \equbits. To do this, we assume by contradiction that there exists such a protocol. 

\emph{\textbf{Step 1: Remove entanglement.}} We produce a small-cost protocol for $F^{(\oplus k)}$  that uses no entanglement and has success probability at least $1/2+2^{-\Theta(k)}$, i.e., the advantage is at least~$ 2^{-\Theta(k)}$.

\emph{\textbf{Step 2: XOR Lemmas.}} We establish an XOR lemma for protocols without entanglement. We show that if for cost-$t$ protocols, the probability of solving one instance is at most $2/3$, then for cost-$o(tk)$ protocols, the probability of solving the XOR of $k$ independent instances is at most $1/2+2^{-\Theta(k)}$, i.e., the advantage is at most $2^{-\Theta(k)}$.

Together, this would establish the desired result. We now discuss some of the difficulties in executing these steps and present our solutions. We first present the details of step 2 and then step 1.

\emph{\textbf{Details of Step 2.}}
One difficulty with step 2 is that XOR lemmas are stronger than direct product theorems and are thus harder to establish.  
In this work, we present XOR lemmas that are tailored for particular functions. The functions we will be interested in are the Forrelation problem and the Boolean Hidden Matching problem. For the former problem, XOR lemmas for $\Rtwo$ protocols were established in~\cite{DBLP:conf/approx/GirishRZ21}. For the Boolean Hidden Matching problem, we show an XOR lemma for the $\Rone$ and $\Qpub$ models (\Cref{lemma:main_lemma_4} and \Cref{lemma:main_lemma_3}). We now describe this part in more detail. 

Let $F$ be the Boolean Hidden Matching problem. 
One central ingredient in our XOR lemmas for F is the construction of hard distributions. The result of~\cite{gavinsky2007exponential} shows hard distributions $\Ycal$ and $\Ncal$ on the \textsc{yes} and \textsc{no} instances of $F$ respectively, such that no small-cost protocol can distinguish these distributions with $1/3$ advantage. We produce hard distributions $\mu^{(k)}_{-1}$ and $\mu^{(k)}_1$ for the $F^{(\oplus k)}$ problem such that no small-cost protocol can distinguish them with advantage $2^{-\Theta(k)}$. To get bounds of the form $2^{-\Theta(k)}$, it turns out that our distributions need to agree on moments of size at most $\Theta(k)$. Motivated by this, we define the following distributions.
\begin{equation*}
	\mu_{1}^{(k)}:= \frac{1}{2^{k-1}}\sum_{\substack{K\subseteq[k]\\|K|\text{ is even}}} \Ycal_K \Ncal_{\overline K} \quad \text{and}\quad  \mu_{-1}^{(k)}:= \frac{1}{2^{k-1}}\sum_{\substack{K\subseteq [k]\\|K|\text{ is odd}}} \Ycal_K \Ncal_{\overline K}, 
\end{equation*}
Here, $\Ycal_K \Ncal_{\overline K}$ is a product of $k$ independent distributions, where the $i$-th distribution is $\Ycal$ if $i\in K$ and is $\Ncal$. We show that the distributions $\mu_1^{(k)}$ and $\mu_{-1}^{(k)}$ are indeed distributions on the \textsc{no} and \textsc{yes} instances respectively of the $F^{(\oplus k)}$ problem, furthermore, they agree on all moments of size at most $k$. To complete the argument, we need to show that small-cost protocols cannot distinguish these distributions with more than $2^{-\Theta(k)}$ advantage. This is fairly technical and involves the use of Fourier analysis. For $\Rone$ protocols, the $k=1$ version of this was proved in~\cite{gavinsky2007exponential}. Their work in particular makes use of the level-$k$ inequality. We build on their work for $\Rone$ protocols and prove the desired XOR lemma for larger $k$ (\Cref{lemma:main_lemma_4}). For $\Qpub$ protocols, we are not aware of any works that study the communication complexity of $F$ or that analyze the Fourier spectrum of such protocols, which is a contribution in this paper. In particular we prove Fourier growth bounds for $\Qpub$ protocols~(\Cref{lemma:main_lemma_2}) as well as an XOR lemma for the Boolean Hidden Matching problem~(\Cref{lemma:main_lemma_3}). For these, we make use of a matrix version of the level-$k$ inequality~\cite{ben2008hypercontractive}. 

\emph{\textbf{Details of Step 1.}} One difficulty with step 1 is that the trick of replacing an entangled state by the maximally mixed state no longer works. It is possible for a protocol to be correct when using a particular entangled state, but wrong for every orthogonal state. In this case, executing the protocol on the maximally mixed state would bias the output towards the wrong answer. Thus, carrying out step 1 is non-trivial and in particular, difficult to do for $\Rtwoent$ protocols. We take an alternate approach for $\Rtwoent$ protocols to sidestep this difficulty, which we will describe later. We are able to carry out step 1 for $\Roneent$ and $\Qent$ protocols (\Cref{lemma:simulation_r} and \Cref{lemma:simulation_q}). Given a cost $c$ protocol for a function in the $\Roneent$ model or $\Qent$ model using at most $2d$ \equbits, we produce a cost $c+O(d)$ protocol in the $\Rone$ or $\Qpri$ model\footnote{We use $\Qpri$ to denote the private-coin version of the $\Qpub$ model.} respectively; these protocols use no entanglement and have advantage $2^{-\Theta(d)}$.  We now give an illustrative example of this simulation. 

Consider a simple $\Qent$ protocol where the entangled state consists of $d$ EPR pairs and Alice and Bob apply a unitary operator to their part of the entangled state and send all their qubits to Charlie.  If Alice's and Bob's unitary operators map $\ket{i}$ to $\ket{u_i(x)}$ and $\ket{v_i(y)}$ respectively, then the state received by the referee is the pure state $\sum_{i\in\bin^d}\ket{u_i(x)}\ket{v_i(y)}$ (ignoring the normalization). We now construct a $\Qpri$ protocol that produces the same state with probability $2^{-\Theta(d)}$, furthermore, Charlie is able to detect when this state was successfully produced. We have Alice and Bob send the pure states $\sum_{i}\ket{i}\ket{u_i(x)}$ and $\sum_{j}\ket{j}\ket{v_j(y)}$ respectively to Charlie. Charlie first projects onto states such that $i=j$ and obtains the pure state $\sum_{i\in\bin^d}\ket{i,i}\ket{u_i(x)}\ket{v_i(y)}$ with probability $2^{-d}$. He then applies Hadamard on the first $2d$ qubits and measures. He obtains the outcome $\ket{0^{2d}}$ with probability $2^{-2d}$ in which the resulting state is the pure state $\sum_{i}\ket{u_i(x)}\ket{v_i(y)}$ as in the original $\Qent$ protocol. We use similar ideas to remove entanglement from arbitrary $\Qent$ protocols. To remove entanglement from an $\Roneent$ protocol, we need to take a different approach which involves Alice sending Bob a random coordinate of a certain density matrix. We omit the details.



\emph{\textbf{Alternate Approach to Step 1 for $\Rtwoent$ Protocols.}} We now present an alternative to step~1 for $\Rtwoent$ protocols. The idea is to prove Fourier growth bounds for $\Rtwoent$ protocols. The results of~\cite{DBLP:conf/approx/GirishRZ21} imply that for protocols whose level-$2k$ Fourier growth is at most $\alpha$, their advantage in solving the $\oplus^k$-Forrelation problem is at most $\alpha\cdot n^{-k/2}+ o(n^{-k/2})$. We directly establish a Fourier growth bound on $\Rtwoent$ protocols. In particular, we show that for $\Rtwoent$ protocols of communication cost $c$ that use $d$ \equbits, their level-$\ell$ Fourier growth is at most $\poly(2^d)\cdot O_\ell(c^{\ell})$ (\Cref{lemma:main_lemma}). Choosing $\ell=2k$, $d=\Theta(k)$ and $c=\Theta(n^{1/4})$ for appropriate constants, we have that the advantage is at most  $\poly(2^d)\cdot O_\ell(c^\ell)\cdot n^{-k/2}\ll 1$. This would complete the proof. We now describe how we prove the Fourier growth bound on $\Rtwoent$ protocols (\Cref{lemma:main_lemma}).


\emph{\textbf{(1)}}  We first show that if the players share a $2d$-qubit entangled state, then we can decompose the state into a small linear combination of $\poly(2^d)$ many \emph{two}-qubit quantum states that are either unentangled, or \equivalent to $\ketbra{ \Phi^+}{\Phi^+}$, the EPR state. (By \equivalent we mean that the players can transform one state into the other using local unitaries and no communication.) This is formalized in~\Cref{lemma:decomposition}. This gives us the pre-factor of $\poly(2^d)$ in~\Cref{lemma:main_lemma}.  

\emph{\textbf{(2)}} We analyze protocols where Alice and Bob share the EPR state and bound the Fourier growth of such protocols. Observe that if they share an unentangled state, the protocol is essentially an $\Rtwo$ protocol and the work of~\cite{girish2022quantum} showed Fourier growth for such protocols. 
We strengthen this result by proving similar Fourier growth for $\Rtwoent$ protocols where Alice and Bob share the EPR state. To do this, we study the structure of such protocols. We first show that the expected output of any $\Rtwoent$ protocol of cost $c$ where Alice and Bob share the EPR state can be written~as 
$$
C(x,y)=\sum_{z\in \bin^c} \alpha_z \cdot \Tr((E_z(x)\otimes F_z(y))\cdot \rho).
$$
where $E_z(x)$ and $F_z(y)$ are positive semidefinite matrices, $\sum_{z\in \bin^c} E_z(x)\otimes F_z(y)=\mathbb{I}$, $\alpha_z\in\bin$, and  $\rho=\ketbra{ \Phi^+}{\Phi^+}\otimes \ketbra{0^{2m}}{0^{2m}}_{AB}$ for some $m\in \Nbb$ that is possibly large (\Cref{claim:structure_rtwoent}). We give some intuition on this expression. The qubits $\ketbra{0^m}{0^m}_A$ and $\ketbra{0^m}{0^m}_B$ in $\rho$ correspond to Alice and Bob's private memory respectively and $\rho$ captures the initial state of all the qubits in the system. The matrices $E_z(x)\otimes F_z(y)$ arise out of Alice's and Bob's sequence of quantum operations (i.e., POVMs) in the $\Rtwoent$ protocol and the quantity $\Tr(E_z(x)\otimes F_z(y)\cdot \rho)$ captures the probability of the transcript being $z\in \bin^c$. The number $\alpha_z\in\bin$ is 1 if and only if the transcript $z$ results in the players outputting 1. 

We now write out the Fourier expansion of the XOR fiber $H(z)=\E_{x\sim \bin^n}[C(x,x+z)]$. Using the convolution property of Fourier coefficients, we can express the Fourier coefficients of $H(z)$ in terms of the Fourier coefficients of $E_z(x),F_z(y)$. In particular, we get
\[\sum_{|S|=\ell}|\widehat{H}(S)|= \sum_{|S|=\ell}\abs{\sum_{z\in\bin^c} \alpha_z \cdot \Tr\pbra{\pbra{\widehat{E_z}(S)\otimes \widehat{F_z}(S)}\cdot  \rho}}. \]
We now use the fact that $\rho$ has exactly four non-zero entries. This zeroes out all but four coordinates of $\widehat{E}_z(S)\otimes \widehat{F}_z(S)$ in the above expression. At this point, we use the level-$k$ inequality by Lee~\cite{DBLP:conf/coco/Lee19} to bound each of the coordinates separately in terms of the entries of $E_z(x)$ and $F_z(y)$. Using the fact that $\{E_z(x)\otimes F_z(y)\}_z$ forms a POVM, we can bound the coordinates of these matrices and combine them with the bounds we obtain from the level-$k$ inequality in a nice way. Putting everything together involves some calculation and is done in~\Cref{sec:fourier_growth_rtwo}.

\paragraph{Acknowledgements.}
{We thank Vojtech Havlicek, Ran Raz and Penghui Yao for many discussions during this project. We also thank Ran Raz for feedback on the presentation.}

\section{Preliminaries}
\label{sec:prelim}

\textbf{Sets.}   For $n\in \mathbb{N}$, let $[n]=\{1,\ldots,n\}$. We use $\indi$ to denote the indicator function, i.e., for a predicate $E$, $\indi[E]$ is 1 if $E$ is satisfied and 0 otherwise. For a subset $S\subseteq[n],$ we use $\overline{S}:=[n]\setminus S$ to denote the complement of $S$. We denote the $n\times n$ identity matrix by $\mathbb{I}_{n}$, and we omit the subscript if it is implicit. 

\textbf{Big O Notation.} For simplicity in notation, for every $f,g:\Rbb_{\ge 0} \to \Rbb_{\ge 0}$ and $\ell\in \mathbb{N}$, we say that $g=O_\ell(f)$ if for some constant $c> 0$, we have $g= O\pbra{ f\cdot  2^{c\cdot \ell}}$. We say that $f=\tilde{O}(g)$ (respectively $f=\tilde{\Omega}(g)$) if for some constant $c>0$, we have $f=O(g\cdot \log^c(g))$ (respectively $f=\Omega(g\cdot \log^{-c}(g))$). We say that $f=\tilde{\Theta}(g)$ if $f=\tilde{O}(g)$ and $f=\tilde{\Omega}(g)$.

\textbf{Probability Distributions.}
Let $\Sigma$ be an alphabet and $\Dcal$ be a probability distribution over $\Sigma$. We use $x\sim \Dcal$ to denote $x$ sampled according to $\Dcal$. We use $\supp(\Dcal)$ to denote the support of the distribution $\Dcal$. We use $x\sim\Sigma$ to denote a uniformly random sample from $\Sigma$. For a function $G:\Sigma\to \Rbb^n$, we use $\E_{x\sim \Dcal}[G(x)]$ to denote the expected value of $G$ when the inputs are drawn according to $\Dcal$. Let $k\in\Nbb$, $S\subseteq[k]$ and $\Dcal,\Dcal'$ be two distributions over $\Sigma$. We use $\Dcal_S\Dcal'_{\overline{S}}$ to denote the distribution over $\Sigma^k$ which is a product of $k$ independent distributions over $\Sigma$, where the $i$th distribution is $\Dcal$ if $i\in S$ and $\Dcal'$ if $i\notin S$ for all $i\in[k]$. For distributions, $\Dcal,\Dcal'$, define the total variation distance as $\|\Dcal-\Dcal'\|_1:=\sum_i |\Dcal(i)-\Dcal'(i)|$.

\textbf{Norms.}
Let $k\in \Nbb$. For a vector $v\in\Rbb^n$, we use $\|v\|_k:=\pbra{\sum_{i\in[n]} \abs{v_i}^k}^{1/k}$ to denote the $\ell_k$-norm of $v$. For any matrix $M\in \Rbb^{n\times n}$, we use $|M|$ to denote $\sqrt{MM^\dagger}$ and we denote by $\|M\|_1$ the trace norm of $M$, that is $\|M\|_1:=\mathrm{Tr}(\sqrt{MM^\dagger})=\Tr(|M|).$ We use $\opnorm{M}:=\max_{\|v\|_2=1} (v^T M v)$ to denote the operator norm of $M$. 

\subsection{Quantum information}

\textbf{Quantum States.} Let $d\in \Nbb$ and let $\mathcal{H}$ be a Hilbert space of dimension $2^d$. This is a vector space defined by the $\mathbb{R}$-span of the orthonormal basis $\{\ket{x}:x\in\{0,1\}^d\}$. We also identify this basis with $\{\ket{i}:i\in[2^d]\}$ using the lexicographic ordering as the correspondence. We use $\ket{0^d}$ to denote the vector $\ket{0,\ldots,0}$ with $d$ zeroes. Let $\Pcal(\Hcal)$ be the set of positive semidefinite matrices in $\mathbb{R}^{2^d\times 2^d}$. Let $\mathcal{S}(\mathcal{H})$ be the set of density operators on $\mathcal{H}$, that is, matrices in $\Pcal(\Hcal)$ with trace 1. We typically use $\rho$ and $\sigma$ to refer to elements of $\mathcal{S}(\mathcal{H})$. The state of a quantum system on $d$ qubits is described by a density operator $\rho\in \mathcal{S}(\mathcal{H})$. For states that are shared between Alice and Bob, we use the subscript $A$ and $B$ on qubits to denote whether Alice or Bob own those qubits.

\textbf{Quantum State Evolution.}
Let $\Hcal,\Hcal'$ be Hilbert spaces. The evolution of a quantum state is described by a map $E:\mathcal{S}(\mathcal{H})\rightarrow \mathcal{S}(\mathcal{H'})$ which is CPTP (i.e., completely positive and trace preserving). We use the notation $E(\rho)$ to denote the image of $\rho$ under $E$. In particular, we will be interested in measurement operators. Any quantum measurement with $k$ outcomes is specified by $k$ matrices $M_1,\ldots,M_k\in\Pcal(\Hcal)$ such that $\sum_{i\in[k]} M_i^\dagger M_i= \mathbb{I}$. The probability of getting outcome $i\in[k]$ is precisely $\Tr(M_i \rho M_i^\dagger )$ and the post measurement state upon obtaining the outcome $i$ is $\frac{M_i \rho M_i^\dagger }{\Tr(M_i \rho M_i^\dagger )}$. We have a correspondence between $\{0,1\}$ and $\{1,-1\}$ defined by $0\to 1,1\to -1$, hence, we will sometimes refer to measurement outcomes ``1'' and ``0'' as ``-1'' and ``1'' respectively.

\textbf{Distance between States.} Let $\rho,\sigma\in\mathcal{S}(\mathcal{H})$ be density operators. We define the trace distance between $\rho$ and $\sigma$ to be $\frac{\|\rho-\sigma \|_1}{2}.$
We will use the following standard facts about the trace distance. Firstly, the trace distance satisfies triangle inequality. Secondly, the trace distance between $\rho$ and $\sigma$ is equal to the maximum probability with which these states can be distinguished using any single projective measurement. Thirdly, the following inequality holds as a consequence of the Von-Neumann Inequality.
\begin{fact}
	For any matrices $M,\rho\in \Rbb^{n\times n}$, we have $\Tr(M\rho)\le \opnorm{M}\cdot  \onenorm{\rho}$.\label{fact:inequality}
\end{fact}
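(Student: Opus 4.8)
Since $\Tr(M\rho)\le\abs{\Tr(M\rho)}$, it suffices to prove the trace--operator-norm duality bound $\abs{\Tr(M\rho)}\le\opnorm{M}\cdot\onenorm{\rho}$. The plan is to diagonalize $\rho$ via its singular value decomposition and push the operator norm of $M$ through one coordinate at a time. Write $\rho=\sum_{i=1}^{n}\sigma_i\,\ketbra{u_i}{w_i}$ with singular values $\sigma_i\ge 0$ and orthonormal families $\{u_i\}_{i\in[n]}$, $\{w_i\}_{i\in[n]}$; by definition of the trace norm, $\onenorm{\rho}=\Tr\pbra{\sqrt{\rho\rho^\dagger}}=\sum_{i=1}^{n}\sigma_i$. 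By linearity of the trace together with the identity $\Tr\pbra{M\ketbra{u_i}{w_i}}=\bra{w_i}M\ket{u_i}$, this gives $\Tr(M\rho)=\sum_{i=1}^{n}\sigma_i\,\bra{w_i}M\ket{u_i}$.

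The remaining step is to bound each term $\abs{\bra{w_i}M\ket{u_i}}$. By Cauchy--Schwarz this is at most $\vabs{w_i}_2\cdot\vabs{Mu_i}_2\le\opnorm{M}$, using that $\vabs{u_i}_2=\vabs{w_i}_2=1$ and that the operator norm dominates the induced $\ell_2\to\ell_2$ norm. Hence $\abs{\Tr(M\rho)}\le\sum_{i=1}^{n}\sigma_i\abs{\bra{w_i}M\ket{u_i}}\le\opnorm{M}\sum_{i=1}^{n}\sigma_i=\opnorm{M}\cdot\onenorm{\rho}$, as desired. Equivalently, one may quote von Neumann's trace inequality $\abs{\Tr(M\rho)}\le\sum_{i}\sigma_i(M)\,\sigma_i(\rho)$ directly and bound every $\sigma_i(M)$ by the largest singular value $\opnorm{M}$.

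There is no genuine obstacle in this argument; it is essentially a one-line consequence of the singular value decomposition. The only point requiring a moment's care is that we work over $\Rbb$, so the singular value decomposition produces orthonormal (rather than unitary) families $\{u_i\}$, $\{w_i\}$, but this is immaterial to every step above.
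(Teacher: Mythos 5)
Your proof is correct, and since the paper offers no argument at all for Fact~\ref{fact:inequality} (it simply asserts it ``as a consequence of the Von-Neumann Inequality''), your direct route via the singular value decomposition of $\rho$ is a genuine, more elementary alternative: expanding $\rho=\sum_i\sigma_i\ketbra{u_i}{w_i}$, writing $\Tr(M\rho)=\sum_i\sigma_i\bra{w_i}M\ket{u_i}$, and bounding each term by Cauchy--Schwarz gives a self-contained two-line proof that sidesteps the Birkhoff--von Neumann / rearrangement machinery underlying the full von Neumann trace inequality, which you then also record as the alternate one-liner. One point deserves care. The paper's displayed definition reads $\opnorm{M}:=\max_{\|v\|_2=1}(v^TMv)$, which is the top eigenvalue of the symmetric part of $M$, \emph{not} the induced $\ell_2\to\ell_2$ norm $\max_{\|v\|_2=1}\|Mv\|_2$ that your Cauchy--Schwarz step (and von Neumann's inequality) actually requires; for non-PSD real $M$ these differ, and with the paper's literal definition the stated inequality is false (e.g.\ for antisymmetric $M$ one has $v^TMv\equiv 0$ yet $\Tr(M\rho)$ can be made positive). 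Your aside that ``the operator norm dominates the induced $\ell_2\to\ell_2$ norm'' has the inequality backwards under the paper's convention: by Cauchy--Schwarz, $\max_{\|v\|_2=1}v^TMv\le\max_{\|v\|_2=1}\|Mv\|_2$, not the reverse. The discrepancy is harmless for the paper, which only applies Fact~\ref{fact:inequality} to PSD matrices such as $M_1^\dagger M_1$ (where the two notions coincide), but your write-up should state explicitly that $\opnorm{M}$ is being read as the largest singular value of $M$, i.e.\ the induced $\ell_2\to\ell_2$ norm, so that $\|Mu_i\|_2\le\opnorm{M}$ is an instance of the definition rather than an appeal to a ``domination'' that does not hold.
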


\subsection{Communication Complexity} \label{sec:communication_model}
The goal in communication complexity is for Alice and Bob to compute a function $F:\X\times\Y\to \{-1,1,\star\}$. We interpret $-1$ as a \textsc{yes} and $1$ as a \textsc{no}. We say $F$ is a \emph{total} function if $F(x,y)\in \01$ for all $x\in \X$ and $y\in \Y$, otherwise $F$ is a \emph{partial} function.  In this paper we will be mostly concerned with partial functions and denote $\dom(F)=F^{-1}(\01)$. Here Alice receives an input $x\in\X$ (unknown to Bob) and Bob receives an input $y\in\Y$  (unknown to  Alice) promised that $(x,y)\in \dom(F)$ and their goal is to compute $F(x,y)$ with high probability, i.e., probability at least $2/3$. More formally, for any protocol $\Pcal$, we let $\mathrm{cost}(\Pcal,x,y)$ be the communication cost of $\Pcal$ when Alice and Bob are given $x,y$ as inputs. We say that $\Pcal$ computes $F$, if, for every $(x,y) \in \dom(F)$, the output of the protocol is $F(x,y)$ with probability at least $ 2/3$ (where the probability is taken over the randomness/measurements of the protocol). The communication complexity of $F$ is defined as
\begin{align*}
	\min_{\Pcal\text{ computes } F} \max_{(x,y) \in \dom(F)} \mathrm{cost}(\Pcal,x,y).
\end{align*} The messages sent are referred to as the {transcript} of the protocol. We discuss a few models of communication of interest to us.

\textbf{Simultaneous Message Passing Model.} This is a general model of communication called the \emph{simultaneous message passing} ($\SMP$) model. In this model, Alice and Bob each send a single (possibly quantum or randomized) message to a referee Charlie.
The goal is for Charlie to output $F(x,y)$ with high probability, i.e., at least $2/3$ probability.  We measure the cost of a communication protocol by the total number of bits (or qubits) received by Charlie. There are many types of simultaneous protocols.

\emph{Quantum versus Classical protocols.} We use $\Rpri$ to denote the \SMP~model where the players can only send classical messages to Charlie. We use $\Qpri$ to denote the \SMP~model where the players can send a quantum message to Charlie.

\emph{Public-coin versus Private-coin Protocols.} If we allow the players to use public coins, we use the superscript ``$\mathrm{pub}$''. For instance, $\Qpub$ denotes the public-coin quantum \SMP~model and $\Qpri$ denotes the private-coin quantum \SMP~model. Unless otherwise specified, all protocols are private coin protocols. 

\textbf{One-Way Model.} In this model, Alice sends a single message to Bob, who should output $F(x,y)$ with probability at least $2/3$.  The cost of the protocol is the size of message Alice sends. By a classical result of Newman~\cite{newman1991private}, we can assume that all one-way protocols are private-coin protocols with an $O(\log n)$  additive overhead in the communication complexity. We use $\Rone$ to denote the one-way model of communication where Alice sends a classical message to Bob.

\textbf{Two-Way Model.} Here Alice and Bob are allowed to exchange messages, and Alice should finally output $F(x,y)$ with probability at least $ 2/3$. The cost of the protocol is the total size of the transcript. As before, by a result of Newman~\cite{newman1991private}, we can assume that all two-way protocols are private-coin protocols with an $O(\log n)$ additive overhead in the communication complexity. We use $\Rtwo$ to denote the model of two-way communication where Alice and Bob exchange classical~messages. 


We now discuss protocols where Alice and Bob can share an entangled state that is independent of their inputs. One important type of entangled state is the EPR pair: This is the state $\ketbra{ \Phi^+}{\Phi^+}$ where $\ket{\Phi^+}=\tfrac{1}{\sqrt{2}}\pbra{\ket{0}_A\ket{0}_B+\ket{1}_A\ket{1}_B}$. Here, the subscript on the qubit denotes which player owns the qubit. The upper bound in all our theorems will be established using quantum protocols where the players share a certain number of EPR pairs. We typically specify the dimension of the state shared by Alice and Bob.

\textbf{Protocols with entanglement.} We use $\Rent$ to denote the simultaneous model of communication where Alice and Bob share an entangled state and send classical messages to the referee. The model $\Qent$ is similarly defined, but Alice and Bob can send quantum messages to the~referee. We use $\Roneent$ to denote the one-way model where Alice and Bob share entanglement and Alice sends a classical message to Bob. We use $\Rtwoent$ to denote the two-way model where Alice and Bob share entanglement and the messages are classical.  In this model, by  teleportation , the players can exchange a limited number of qubits. Conversely, if the players can exchange a limited number of qubits, then Alice can create EPR pairs by herself and send the corresponding qubits to Bob.




For ease of readability, we summarize all the communication models in the tables below.

\begin{table}[h]
	\label{table:SMPmodels1}
	\begin{center}
		\begin{tabular}{ |c|c|c|c| } 
			\hline
			& Private Coins & Public Coins & Entanglement \\ 
			\hline
			Classical Messages & $\Rpri$ & $\Rpub$ & $\Rent$\\ 
			\hline
			Quantum Messages & $\Qpri$ & $\Qpub$ & $\Qent$ \\ 
			\hline
		\end{tabular}
	\end{center}
		\caption{Table of Simultaneous Communication Models}
\end{table}

\begin{table}[h!]
	\label{table:SMPmodels2}
	\begin{center}
		\begin{tabular}{ |c|c|c| } 
			\hline
			& One-way & Two-way \\ 
			& Private Coins & Private Coins\\
			& ($\equiv$ Public Coins) & ($\equiv$ Public Coins)\\
			\hline
			Classical Messages & $\Rone$ & $\Rtwo$  \\ 
			\hline
			Classical Messages & $\Roneent$ & $\Rtwoent$   \\ 
			\& Entanglement & &\\
			\hline
		\end{tabular}
	\end{center}
		\caption{Table of One-Way \& Two-Way Communication Models}
\end{table}
\subsection{XOR-Fibers of Communication Protocols}

\begin{definition}\label{definition:XOR_fiber}
	Given a communication protocol $C:\bin^n\times \bin^n\to [-1,1]$, the XOR-fiber of $C$ is a function $H:\bin^n\to [-1,1]$ defined at $z\in\bin^n$ by $H(z)=\E_{x\sim \bin^n}[C(x,x\odot z)]$, where $\odot$ denotes point-wise product.
\end{definition}

The communication complexity of XOR functions are well-studied and have connections to the log-rank conjecture, parity decision trees, lifting theorems and separations between quantum and classical communication complexity~\cite{MO09,HHL18,TWXZ13,SZ08,Zha13}.
	XOR-fibers of communication protocols naturally arise in the study of communication complexity of XOR functions. Although we are not aware of any published works defining the term ``XOR-fiber'', this concept has been studied in many works, most notably~\cite{girish2022quantum}  and~\cite{Raz95}.

\subsection{Fourier analysis}

\textbf{Fourier Analysis on the Boolean Hypercube.} We discuss some of the basics of Fourier analysis. Let $f:\bin^n\rightarrow \mathbb{R}$ be a function. The Fourier decomposition of $f$ is
$$
f(x)=\sum_{S\subseteq[n]} \widehat{f}(S)\chi_S(x),
$$
where $\chi_S(x)=\prod_{i\in S}x_i$. The \emph{Fourier coefficients} of $f$ are defined as $\widehat{f}(S)=\E_{x\sim\bin^n}[f(x) \cdot \chi_S(x)]$. For $\ell\in\Nbb,$ the level-$\ell$ Fourier mass of $f$ is denoted by $L_{1,\ell}(f)$ and is defined as follows.
\[ L_{1,\ell}(f)=\sum_{|S|=\ell}\abs{\widehat{f}(S)} \]
By Fourier growth bounds, we typically mean upper bounds on $L_{1,\ell}(f)$. We will need the following technical lemma, often called the level-$k$ inequality.
\begin{lemma}[{\cite[Lemma~10]{DBLP:conf/coco/Lee19}}]
	\label{lemma:level_k_inequality}
	Let $f:\{-1,1\}^n\rightarrow[-1,1]$ be a function with ${\E}_{x}[|f(x)|]=~\alpha$. Then for every $\ell\in\mathbb{N}$,
	\[
	\sum_{|S|=\ell}\widehat{f}(S)^2\leq 4\alpha^2\cdot (2e\cdot \ln(e/\alpha^{1/\ell}))^\ell.
	\]
\end{lemma}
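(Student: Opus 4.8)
The statement is the level-$k$ inequality of Lee; although the paper cites it, the natural way to prove it is from the Bonami--Beckner hypercontractive inequality, following the standard template for such bounds. Write $L=\ln(1/\alpha)$ and let $f_{=\ell}=\sum_{|S|=\ell}\widehat f(S)\chi_S$ be the degree-$\ell$ homogeneous part of $f$, so the quantity to bound is exactly $\|f_{=\ell}\|_2^2$. For a noise rate $\rho\in[0,1]$ the operator $T_\rho$ is diagonal in the Fourier basis with eigenvalue $\rho^{|S|}$ on $\chi_S$, so by Parseval $\|T_\rho f\|_2^2=\sum_S\rho^{2|S|}\widehat f(S)^2\ge\rho^{2\ell}\|f_{=\ell}\|_2^2$. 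On the other hand, hypercontractivity in the form $\|T_\rho f\|_2\le\|f\|_{1+\rho^2}$ holds for every $\rho\in[0,1]$. Combining the two gives
\[ \sum_{|S|=\ell}\widehat f(S)^2\;\le\;\rho^{-2\ell}\,\|f\|_{1+\rho^2}^2 . \]
Note this step uses nothing about $f$ being sign-valued, only that it is a real function on the cube, which matches the $[-1,1]$-valued hypothesis.

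Next I would feed in the hypothesis $\E_x[|f(x)|]=\alpha$ together with $|f|\le 1$ pointwise: since $p:=1+\rho^2\ge 1$ we have $|f(x)|^p\le|f(x)|$, hence $\|f\|_p^p=\E_x[|f(x)|^p]\le\alpha$ and $\|f\|_p^2\le\alpha^{2/p}$. Writing $\alpha^{2/(1+\rho^2)}=\alpha^2\exp\!\bigl(\tfrac{2\rho^2}{1+\rho^2}\,L\bigr)$, the bound becomes $\sum_{|S|=\ell}\widehat f(S)^2\le\rho^{-2\ell}\,\alpha^2\exp\!\bigl(\tfrac{2\rho^2}{1+\rho^2}L\bigr)$. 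I would then optimize the free parameter by taking $\rho^2=\ell/(2L+\ell)\in(0,1)$. Then $\tfrac{2\rho^2}{1+\rho^2}=\tfrac{\ell}{L+\ell}$, so the exponential factor is $e^{\ell L/(L+\ell)}\le e^{\ell}$, while $\rho^{-2\ell}=\bigl(\tfrac{2L+\ell}{\ell}\bigr)^\ell\le\bigl(\tfrac{2(L+\ell)}{\ell}\bigr)^\ell$. Multiplying and using $\ln(e/\alpha^{1/\ell})=1+L/\ell=(L+\ell)/\ell$ yields
\[ \sum_{|S|=\ell}\widehat f(S)^2\;\le\;\alpha^2\Bigl(\tfrac{2e(L+\ell)}{\ell}\Bigr)^\ell\;=\;\alpha^2\bigl(2e\cdot\ln(e/\alpha^{1/\ell})\bigr)^\ell , \]
which is in fact slightly stronger than the claimed bound; the factor $4$ is pure slack and also absorbs the degenerate case $\alpha=1$, where $|f|\equiv 1$.

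The only point requiring care is the choice of $\rho$: it must lie in $[0,1]$ for the hypercontractive inequality to apply in the form used, and one wants a single choice valid for all $\ell$ rather than the usual case split between $\ell\lesssim L$ and $\ell\gtrsim L$ that appears in the $\{0,1\}$-valued version. The choice $\rho^2=\ell/(2L+\ell)$ does this automatically. Beyond that, the argument is a one-line application of hypercontractivity together with the monotonicity $|f|^p\le|f|$ for $p\ge1$, so I expect no real obstacle past the bookkeeping in the optimization.
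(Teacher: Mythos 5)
The paper does not prove this lemma; it is cited directly from Lee~\cite{DBLP:conf/coco/Lee19}, with only the remark that the argument for $[0,1]$-valued functions carries over unchanged to the $[-1,1]$-valued case. Your hypercontractivity derivation is correct and follows the same route as the cited source: $(2,1+\rho^2)$-hypercontractivity, the monotonicity $\E[|f|^p]\le\E[|f|]$ for $p\ge 1$ (which is where $|f|\le 1$ enters), and a single noise-rate choice $\rho^2=\ell/(2\ln(1/\alpha)+\ell)\in(0,1]$ that handles every $\ell$ at once instead of the usual case split on the relation between $\ell$ and $\ln(1/\alpha)$. The bookkeeping checks out and in fact yields a constant slightly sharper than the stated factor of~$4$, which as you note is pure slack.
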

Although \cite[Lemma~10]{DBLP:conf/coco/Lee19} is only stated for functions with range $[0,1]$, the same proof also applies for functions with range $[-1,1]$. We remark that the bound often invoked~\cite{o2014analysis,gavinsky2007exponential} is with the upper bound of $O(\alpha^2\cdot \ln^\ell(1/\alpha))$ (i.e., without the $1/\ell$ exponent on $\alpha$) which only holds for $\ell\le 2 \ln(1/\alpha)$. However this improved upper bound, proven recently in~\cite{DBLP:conf/coco/Lee19}, holds for all $\ell\in\Nbb$. This makes our proofs much simpler and saves some logarithmic~factors.

\textbf{Fourier analysis for Matrix-Valued Functions.} The Fourier coefficients of a matrix-valued function $f:\bin^n\to\mathbb{R}^{m\times m}$ are defined by
\begin{align*}
	\widehat{f}(S) := \mathop{\E}_{x\sim\01^n}\sbra{f(x)\cdot \chi_S(x)}
\end{align*}
for all $S\subseteq[n]$. We also use a matrix version of the level-$k$ inequality. 
\begin{lemma}[{\cite[Theorem~7]{ben2008hypercontractive}}] 
	\label{lemma:level_k_inequality_matrix_1}
	Let $\Hcal$ be a Hilbert space of dimension $2^c$ and let $f:\bin^n\to \Scal(\Hcal)$ be a density-matrix valued function. Then, for any $\ell\in \Nbb$ such that $\ell\le 2\ln(2)c,$
	\[
	\sum_{|S|=\ell} \Tr^2\big(\abs{\widehat{\rho}_S}\big)\le \pbra{{(2e\ln 2)\cdot c/\ell}}^\ell. 
	\]
\end{lemma}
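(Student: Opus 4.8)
The plan is to establish \Cref{lemma:level_k_inequality_matrix_1} as the matrix analogue of the scalar level-$\ell$ inequality (\Cref{lemma:level_k_inequality}), following the same Hölder-plus-hypercontractivity template but with the real line replaced by Hermitian matrices equipped with Schatten norms, and with the scalar Bonami--Beckner inequality replaced by the hypercontractive inequality for matrix-valued functions of~\cite{ben2008hypercontractive}. Write $\rho=f$ and let $g=\rho^{=\ell}:=\sum_{|S|=\ell}\widehat{\rho}_S\,\chi_S$ be the level-$\ell$ part of $\rho$, a homogeneous degree-$\ell$ matrix polynomial. Recall the two inputs of the scalar proof: (a) for conjugate exponents $1/p+1/q=1$ with $q\ge2$, Hölder gives $\abra{f,g}\le\vabs{f}_{p}\vabs{g}_{q}$ with $\vabs{f}_p\le\alpha^{1/p}$ whenever $\vabs{f}_\infty\le1$ and $\E_x\abs{f(x)}=\alpha$; and (b) hypercontractivity gives $\vabs{g}_q\le(q-1)^{\ell/2}\vabs{g}_2$. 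In the matrix setting I would use a pointwise Schatten--Hölder inequality $\abs{\Tr(A(x)B(x))}\le\vabs{A(x)}_{S_q}\vabs{B(x)}_{S_p}$ combined with scalar Hölder over $x$, and the matrix hypercontractive inequality $\pbra{\E_x\vabs{g(x)}_{S_q}^q}^{1/q}\le(q-1)^{\ell/2}\pbra{\E_x\vabs{g(x)}_{S_2}^2}^{1/2}$ from~\cite{ben2008hypercontractive}, valid for $q\ge2$.

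The observation that makes the scalar argument lift is that, measured with the \emph{normalised} trace $\mathrm{tr}:=\Tr/2^c$, a density matrix behaves like a scalar function with $\vabs{\cdot}_\infty\le1$ and $\E\abs{\cdot}=2^{-c}$: the eigenvalues of $\rho(x)$ lie in $[0,1]$, so $\vabs{\rho(x)}_{S_\infty}\le1$, while $\mathrm{tr}(\rho(x))=2^{-c}$. Thus $2^{-c}$ plays the role of the parameter $\alpha$, which already accounts both for the form of the bound $\pbra{(2e\ln 2)\cdot c/\ell}^\ell$ (match $\pbra{2e\ln(e/\alpha^{1/\ell})}^\ell$ with $\alpha=2^{-c}$ in the regime $\ell\le 2\ln(1/\alpha)$) and for the admissibility hypothesis $\ell\le2\ln(2)c=2\ln(1/\alpha)$, which is exactly the range in which the optimal Hölder exponent $q-1\approx 2\ln(2)c/\ell$ is at least $1$ (i.e.\ $q\ge2$, as hypercontractivity requires). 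Concretely I would: represent the left-hand side as an inner product $\sum_{|S|=\ell}\Tr^2\!\pbra{\abs{\widehat{\rho}_S}}=\abra{g^{*},\rho}$ against an appropriate degree-$\ell$ test polynomial $g^{*}$ (a first attempt is $g^{*}(x)=\sum_{|S|=\ell}\Tr\!\pbra{\abs{\widehat{\rho}_S}}\,W_S\,\chi_S(x)$ with $W_S$ the polar part of $\widehat{\rho}_S$, though this choice has to be refined — see below); apply Schatten--Hölder to separate $\vabs{\rho}_{(p)}$, bounded by $1$ since $\rho(x)$ is a density matrix, from $\vabs{g^{*}}_{(q)}$; apply the matrix hypercontractive inequality to $g^{*}$; rearrange; and finally optimise $q$ over $[2,\,2\ln(2)c/\ell]$, which yields the stated bound by the same calculus as in~\cite{DBLP:conf/coco/Lee19}.

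The main obstacle — the reason this is not a routine transcription of the scalar argument — is twofold. First, the matrix hypercontractive inequality is itself delicate: it is \emph{not} a formal consequence of scalar Bonami--Beckner, since the induction on coordinates that drives the scalar proof breaks down for non-commuting matrices, and establishing it with a near-optimal constant is the main content of~\cite{ben2008hypercontractive}. Second, the left-hand side is built from \emph{Schatten-$1$} norms $\Tr\!\pbra{\abs{\widehat{\rho}_S}}$, whereas hypercontractivity only controls \emph{Schatten-$2$} norms; the naive conversion $\Tr\!\pbra{\abs{M}}\le\sqrt{\mathrm{rank}(M)}\cdot\vabs{M}_{S_2}\le 2^{c/2}\vabs{M}_{S_2}$ (which is what the ``polar part'' test polynomial above would force) costs a factor $2^{c}$ in the squared sum and would destroy the dimension- and $n$-free bound, so $g^{*}$ and the exponents must be chosen so that this loss is avoided — which is precisely where the normalised-trace bookkeeping is essential. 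Since \Cref{lemma:level_k_inequality_matrix_1} is quoted verbatim from~\cite[Theorem~7]{ben2008hypercontractive}, in the writeup I would simply invoke that reference; the sketch above is the route one would follow to reprove it.
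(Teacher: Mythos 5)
Since the paper invokes \cite[Theorem~7]{ben2008hypercontractive} as a black box, your decision to ultimately cite the reference matches what the paper does; at that level the proposal is fine. But your proof sketch misdiagnoses the mechanism in a way worth correcting. The Hölder/test-polynomial template from the scalar level-$\ell$ inequality, with a polar-part test function $g^*(x)=\sum_{|S|=\ell}\Tr\pbra{|\widehat{\rho}_S|}W_S\chi_S(x)$ and a $2$-to-$q$ hypercontractive estimate applied to $g^*$, does not close: evaluating $\|g^*\|_{2,S_2}$ forces $\|W_S\|_{S_2}^2\le\mathrm{rank}(\widehat{\rho}_S)\le 2^c$, and optimizing $q\ge 2$ cannot recover the stated bound. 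More to the point, you frame the dimension factor as something that ``must be avoided,'' and this is the wrong intuition: the actual argument does \emph{not} avoid it, it \emph{balances} it. Concretely, one uses the dual (Parseval) form of the matrix hypercontractive inequality: for $1\le p\le 2$, $\sum_S(p-1)^{|S|}\|\widehat{f}(S)\|_{S_p}^2\le\pbra{\E_x\|f(x)\|_{S_p}^p}^{2/p}$ with matching Schatten-$p$ norms on both sides. For a density-matrix-valued $f$, eigenvalues lying in $[0,1]$ and summing to $1$ give $\E_x\|f(x)\|_{S_p}^p\le1$. Converting each Fourier coefficient from Schatten-$p$ to Schatten-$1$ costs $\Tr^2\pbra{|\widehat{\rho}_S|}\le(2^c)^{2(1-1/p)}\|\widehat{\rho}_S\|_{S_p}^2$, so $\sum_{|S|=\ell}\Tr^2\pbra{|\widehat{\rho}_S|}\le 2^{2c(1-1/p)}(p-1)^{-\ell}$; minimizing over $p\in(1,2]$ at $p-1=\ell/(2c\ln 2)$, which is admissible precisely when $\ell\le 2\ln(2)c$, gives $\pbra{(2e\ln2)\cdot c/\ell}^\ell$ exactly. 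You correctly spotted that the constraint $\ell\le 2\ln(2)c$ is the boundary of the allowed exponent range and that $2^{-c}$ plays the role of $\alpha$, but the route is the $1\le p\le 2$ dual form together with an explicit $\mathrm{rank}^{1-1/p}$ conversion that gets traded against the $(p-1)^{-\ell}$ gain, not the Hölder-against-$\rho$ route from the scalar proof.
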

Let $\Hcal'$ be a Hilbert space that contains $\Hcal$ of dimension $c'\ge c$ . We can view $f$ as a function $f:\bin^n\to \Scal(\Hcal')$ by simply appending zeroes to the output matrix. Given any $\ell\in\Nbb$ that is possibly larger than $c$, set $c':=c+\lceil\tfrac{\ell}{2\ln 2}\rceil$ so that $\ell\le 2\ln(2)\cdot c'$. Using this setting of parameters and \Cref{lemma:level_k_inequality_matrix_1}, we have the following corollary.
\begin{corollary}\label{lemma:level_k_inequality_matrix}
	Let $\Hcal$ be a Hilbert space of dimension $2^c$ and let $f:\bin^n\to \Scal(\Hcal)$ be a density-matrix valued function. Then, for any $\ell\in \Nbb$,
	\[ \sum_{|S|=\ell} \Tr^2\big(\abs{\widehat{\rho}_S}\big)\le O_\ell \pbra{(c/\ell)^{\ell}} + O_\ell(1). \]
	
\end{corollary}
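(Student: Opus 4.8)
The plan is to derive \Cref{lemma:level_k_inequality_matrix} directly from \Cref{lemma:level_k_inequality_matrix_1} by the padding argument sketched in the paragraph preceding the corollary. The only obstruction to quoting \Cref{lemma:level_k_inequality_matrix_1} verbatim is its hypothesis $\ell\le 2\ln(2)c$; the idea is to artificially enlarge the ambient Hilbert space until this constraint is satisfied, while keeping the quantities $\Tr(\abs{\widehat{f}(S)})$ unchanged, and then absorb the resulting dimension overhead into the $O_\ell(\cdot)$ notation.

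Concretely, I would first dispose of the trivial case $\ell=0$: the only relevant set is $S=\emptyset$, and since each $f(x)$ is a density matrix, $\widehat{f}(\emptyset)=\E_x[f(x)]$ is again a density matrix, so $\Tr(\abs{\widehat{f}(\emptyset)})=1$ and the sum equals $1=O_0(1)$. For $\ell\ge 1$, set $c':=c+\lceil \ell/(2\ln 2)\rceil$, so that on the one hand $2\ln(2)c'\ge 2\ln(2)\cdot \ell/(2\ln 2)=\ell$, and on the other hand $c\le c'\le c+\tfrac{1}{2\ln 2}\ell+1$. Let $\Hcal'\supseteq\Hcal$ be a Hilbert space of dimension $2^{c'}$ and view $f$ as a function $f:\bin^n\to\Scal(\Hcal')$ by appending zero rows and columns to each output matrix; this keeps $f$ density-matrix-valued, and, since padding a matrix with zeros changes neither its singular values nor hence its trace norm, $\Tr(\abs{\widehat{f}(S)})$ is unaffected by this re-embedding. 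Now \Cref{lemma:level_k_inequality_matrix_1} applies to $f$ inside $\Hcal'$ (legal because $\ell\le 2\ln(2)c'$) and yields
\[ \sum_{|S|=\ell}\Tr^2\big(\abs{\widehat{f}(S)}\big)\le \big((2e\ln 2)\cdot c'/\ell\big)^\ell. \]

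It remains to simplify the right-hand side, and here there is essentially no obstacle. Using $c'/\ell\le \tfrac{c}{\ell}+\tfrac{1}{2\ln 2}+\tfrac{1}{\ell}$, the convexity inequality $(a+b)^\ell\le 2^{\ell-1}(a^\ell+b^\ell)$, and the crude bound $\big(\tfrac{1}{2\ln 2}+\tfrac{1}{\ell}\big)^\ell\le\big(\tfrac{1}{2\ln 2}+1\big)^\ell=2^{O(\ell)}$ valid for all $\ell\ge 1$, I get
\[ \big((2e\ln 2)\cdot c'/\ell\big)^\ell\le (2e\ln 2)^\ell\,2^{\ell-1}\Big(\big(\tfrac{c}{\ell}\big)^\ell+\big(\tfrac{1}{2\ln 2}+1\big)^\ell\Big)=O_\ell\!\big((c/\ell)^\ell\big)+O_\ell(1), \]
which is exactly the claimed bound. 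The one point that genuinely deserves a sentence of care in the writeup is the invariance of $\Tr(\abs{\widehat{f}(S)})$ under the zero-padding embedding; everything else is the routine observation that forcing $\ell\le 2\ln(2)c'$ costs only a multiplicative $2^{O(\ell)}$ factor of precisely the kind the $O_\ell$ notation is designed to hide.
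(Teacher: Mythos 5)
Your proposal is correct and takes exactly the approach the paper sketches in the paragraph preceding the corollary: embed $f$ into a Hilbert space of dimension $2^{c'}$ with $c'=c+\lceil \ell/(2\ln 2)\rceil$ so that the hypothesis of \Cref{lemma:level_k_inequality_matrix_1} is met, note that zero-padding leaves each $\Tr(\abs{\widehat{f}(S)})$ unchanged, and absorb the resulting $2^{O(\ell)}$ overhead into the $O_\ell(\cdot)$ notation. You have simply written out the $\ell=0$ base case and the elementary inequalities that the paper leaves implicit.
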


%


\section{Proof of \texorpdfstring{\Cref{theoremRtwo}}{Theorem 1}}
\label{sec:firsttheoremproof}


In this section, we prove \Cref{lemma:main_lemma}, which we restate for convenience. 
\lemmaRtwo*

The proof of \Cref{theoremRtwo} follows from \Cref{lemma:main_lemma} and the techniques of~\cite{DBLP:conf/approx/GirishRZ21}. The quantum upper bound is presented in~\cite[Theorem 3.8]{DBLP:conf/approx/GirishRZ21}. For the lower bound, let $\Ccal$ be the set of $\Rtwoent$ protocols of cost at most $c$ using at most $d$ \equbits. Let $\Hcal$ be the set of XOR fibers of protocols in $\Ccal$. Applying ~\cite[Theorem 3.1]{DBLP:conf/approx/GirishRZ21} to $\Hcal$, it follows that the maximum advantage that protocols in $\Ccal$ have in solving the $\oplus^k$-Forrelation problem is at most ${O}\pbra{L_{1,2k}(\Hcal)\cdot n^{-k/2}}+{o}(n^{-k/2})$. By~\Cref{lemma:main_lemma}, this is at most $O\pbra{2^{5d}\cdot c^{2k}\cdot n^{-k/2}}$. Since $d\le k$, setting $c=\tau\cdot n^{1/4}$ for a small constant $\tau>0$ implies that this is at most $1/3$. The details of this calculation are deferred to~\Cref{sec:appendix_theorem_one}.

The rest of this section will be devoted to the proof of \Cref{lemma:main_lemma}. As described in the proof overview, this will consist of two parts. First, in \Cref{sec:decompose}, we show how to decompose entangled states as a linear combination ``simple'' states and next, in \Cref{sec:fourier_growth_rtwo}, we prove a Fourier growth bound on protocols that use ``simple'' states. We put these together in \Cref{sec:proof_main_lemma} to complete the proof of~\Cref{lemma:main_lemma}.

\label{sec:r2protocolsstructure}
\subsection{Decomposing an Arbitrary State as a Linear Combination of Simple States}
\label{sec:decompose}
\begin{definition}
	Let $\rho,\sigma$ be (possibly entangled) states in $\Scal(\Hcal_{A}\otimes H_{B})$. We say that $\rho$ is \equivalent to $\sigma$ if there exist unitaries  $U_A$ on $\Hcal_A$ and $V_B$ on $\Hcal_B$ such that $(U_A\otimes V_B)\rho(U_A\otimes  V_B)^\dagger=\sigma$. If $\sigma=\ketbra{ +}{ +}$ where $\ket{ +}=\frac{1}{\sqrt{2}}(\ket{0}_A \ket{0}_B+\ket{1}_A\ket{1}_B)$ corresponds to the EPR state, we say that $\rho$ is \equivalent to the EPR state and if $\sigma=\ketbra{0}{0}_A\otimes \ketbra{0}{0}_B$, we say that $\rho$ is \equivalent to the zero~state.
\end{definition}
We refer to the zero state and EPR state as simple. The main result of this section is as~follows.

\begin{claim}
	Let $d\in \Nbb$. Let $\Hcal_A,\Hcal_B$ be $2^d$-dimensional Hilbert spaces. Given a (possibly entangled) state $\rho$ in $\Scal(\Hcal_A\otimes \Hcal_B)$, we can express it as 
	\[
	\rho=\sum_{i=1}^{2^{4d}} \alpha_i \rho_i 
	\]
	where each $|\alpha_i|\le 2^d$ and $\rho_i\in \Scal(\Hcal_A\otimes \Hcal_B)$ is \equivalent to the zero state or the EPR state. \label{lemma:decomposition}
\end{claim}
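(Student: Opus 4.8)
The plan is to reduce to pure states, Schmidt-decompose each pure state, and expand the resulting rank-one outer products using a single elementary identity on a balanced $2\times 2$ block. First I would write the spectral decomposition $\rho=\sum_{m=1}^{M}p_m\ketbra{\psi_m}{\psi_m}$; since $\Hcal_A\otimes\Hcal_B$ has dimension $2^{2d}$ we get $M\le 2^{2d}$, with $p_m\ge 0$ and $\sum_m p_m=1$. As the weights $p_m$ lie in $[0,1]$, it then suffices to decompose each $\ketbra{\psi_m}{\psi_m}$ into at most $2^{2d}$ simple states with coefficients of absolute value at most $1$, and rescale by $p_m$.

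For a single pure state $\ket\psi$ I would take its Schmidt decomposition $\ket\psi=\sum_{i=1}^{r}\lambda_i\ket{a_i}_A\ket{b_i}_B$, where $r\le 2^d$, $\lambda_i>0$, $\sum_i\lambda_i^2=1$, and $\{\ket{a_i}\}$, $\{\ket{b_i}\}$ are orthonormal families in $\Hcal_A$, $\Hcal_B$ respectively. Expanding the outer product splits $\ketbra{\psi}{\psi}$ into $r$ diagonal terms $\lambda_i^2\,\ketbra{a_i}{a_i}\otimes\ketbra{b_i}{b_i}$, each a pure product state and hence \equivalent to the zero state (use local unitaries sending a fixed basis vector to $\ket{a_i}$ on Alice's side and to $\ket{b_i}$ on Bob's side), and $\binom r2$ cross terms $\lambda_i\lambda_j\pbra{\ketbra{a_i}{a_j}\otimes\ketbra{b_i}{b_j}+\ketbra{a_j}{a_i}\otimes\ketbra{b_j}{b_i}}$. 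The key identity I would invoke is that whenever $\ket a\perp\ket{a'}$ in $\Hcal_A$ and $\ket b\perp\ket{b'}$ in $\Hcal_B$,
\[
\ketbra{a}{a'}\otimes\ketbra{b}{b'}+\ketbra{a'}{a}\otimes\ketbra{b'}{b}=\ketbra{\phi^+}{\phi^+}-\ketbra{\phi^-}{\phi^-},\qquad \ket{\phi^\pm}:=\tfrac1{\sqrt2}\pbra{\ket a\ket b\pm\ket{a'}\ket{b'}},
\]
which is checked by expanding both sides. Each $\ketbra{\phi^\pm}{\phi^\pm}$ is \equivalent to the EPR state $\ketbra{\Phi^+}{\Phi^+}$, via local unitaries mapping two fixed basis vectors to $\ket{a},\ket{a'}$ on Alice's side and to $\ket{b},\pm\ket{b'}$ on Bob's side (such unitaries exist precisely because these are orthonormal pairs). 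Applying this with $(a,b,a',b')=(a_i,b_i,a_j,b_j)$ for every $i<j$ writes $\ketbra\psi\psi$ as a combination of $r+2\binom r2=r^2\le 2^{2d}$ simple states, with coefficients $\lambda_i^2$ or $\pm\lambda_i\lambda_j$, all of absolute value at most $1$.

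Combining the two steps expresses $\rho$ as a linear combination of $\sum_{m=1}^{M}r_m^2\le 2^{2d}\cdot 2^{2d}=2^{4d}$ states, each \equivalent to the zero state or the EPR state, with coefficients of absolute value at most $p_m\le 1\le 2^d$ (pad with zero-coefficient terms if the count falls short of $2^{4d}$). Everything is consistent with the paper's convention of real Hilbert spaces: the spectral and Schmidt decompositions produce real vectors, and all the local unitaries can be taken to be real orthogonal maps.

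I do not expect a single hard step here; the only point that needs care is the bookkeeping that keeps the number of simple states at $2^{4d}$, and this is exactly why I route through the Schmidt decomposition. Expanding $\rho$ directly in the product computational basis would produce ``degenerate'' cross terms such as $\ketbra{a}{a'}\otimes\ketbra{b}{b}$, for which the clean EPR identity above does not apply and the counting becomes awkward; the Schmidt decomposition forces every cross term onto a genuinely balanced $2\times 2$ block and sidesteps this. Note also that the bound $2^d$ on the coefficients in the statement is loose — the argument above in fact gives $|\alpha_i|\le 1$.
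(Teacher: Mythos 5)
Your proof is correct, and it takes a genuinely different route from the paper's. The paper works directly in the product computational basis: it writes $\rho=\sum_{i,j}\alpha_{i,j}\rho^{(i,j)}$ with $\rho^{(i,j)}:=\tfrac12(\ket i+\ket j)(\bra i+\bra j)$ for $i\neq j$ and $\rho^{(i,i)}:=\ketbra{i}{i}$, where $\alpha_{i,j}=\rho_{i,j}$ off the diagonal and $\alpha_{i,i}=\rho_{i,i}-\sum_{j\neq i}\rho_{i,j}$ is chosen to cancel the extra diagonal weight contributed by the off-diagonal $\rho^{(i,j)}$. Classifying each $\rho^{(i,j)}$ is then a short case analysis on whether the Alice- and Bob-coordinates of $i$ and $j$ agree, and the coefficient bound $|\alpha_{i,i}|\le 2^d$ comes from Cauchy--Schwarz on a row of $\rho$. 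You instead route through the spectral decomposition of $\rho$ followed by the Schmidt decomposition of each eigenvector, and resolve each symmetrized rank-one cross term with the $2\times 2$ block identity $\ketbra{a}{a'}\otimes\ketbra{b}{b'}+\ketbra{a'}{a}\otimes\ketbra{b'}{b}=\ketbra{\phi^+}{\phi^+}-\ketbra{\phi^-}{\phi^-}$. Both approaches land on $2^{4d}$ simple states, but yours exploits positivity of $\rho$ more structurally and as a result gets the sharper coefficient bound $|\alpha_i|\le 1$ for free (indeed one can even check that $\sum_i|\alpha_i|\le 2^{d+1}$ in your decomposition, versus the $2^{5d}$ the paper feeds into its Fourier-growth bound), whereas the paper's argument is more elementary in that it never invokes the spectral theorem or Schmidt decomposition. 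One cosmetic point: the paper's case analysis lists only three cases ($i=j$; $a=p,b\neq q$; $a\neq p,b\neq q$) and silently omits the symmetric case $a\neq p, b=q$, a gap your Schmidt-based route avoids by never producing such ``unbalanced'' cross terms in the first place — exactly the bookkeeping point you flag at the end.
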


This decomposition is useful as it simplifies the study of communication protocols with arbitrary entangled states. In particular, we have the following.

\begin{definition} We say that two communication protocols are \emph{equivalent} if on the same inputs, they produce the same distribution on transcripts. 
\end{definition}

\begin{fact}\label{claim:cc_equivalent}
	Let $\rho$ and $\sigma$ be two \equivalent states. Any communication protocol where Alice and Bob use $\rho$ as the entangled state can be transformed into an equivalent communication protocol where Alice and Bob use $\sigma$ as the entangled state. 
\end{fact}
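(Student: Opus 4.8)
\textbf{Proof plan for \Cref{claim:cc_equivalent}.} The plan is to use the defining property of \equivalent states directly. By definition there are unitaries $U_A$ on $\Hcal_A$ and $V_B$ on $\Hcal_B$ with $(U_A\otimes V_B)\,\rho\,(U_A\otimes V_B)^\dagger=\sigma$, equivalently $\rho=(U_A^\dagger\otimes V_B^\dagger)\,\sigma\,(U_A^\dagger\otimes V_B^\dagger)^\dagger$. Given a protocol $\Pcal$ in which Alice and Bob begin holding the shared state $\rho$ and then (depending on their inputs, their private randomness, and the message history) alternately apply local CPTP maps and send messages, I would build a protocol $\Pcal'$ that instead begins with the shared state $\sigma$ and proceeds as follows: as a first step, before any communication takes place, Alice applies the local unitary $U_A^\dagger$ to her part of $\sigma$ and Bob applies $V_B^\dagger$ to his part. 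After this local pre-processing the joint state of their registers is exactly $(U_A^\dagger\otimes V_B^\dagger)\,\sigma\,(U_A^\dagger\otimes V_B^\dagger)^\dagger=\rho$, so from this point on $\Pcal'$ can run $\Pcal$ verbatim.

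Since the only modification is the addition of input-independent local unitaries applied at the very start — which involve no communication and do not change any message — and since after this pre-processing the global state of the system in $\Pcal'$ is identical to the initial global state of $\Pcal$ on the same inputs, every subsequent step of $\Pcal'$ produces the same distribution of outcomes as the corresponding step of $\Pcal$. In particular, for every input pair $(x,y)$ the two protocols induce the same distribution on transcripts, so $\Pcal'$ is equivalent to $\Pcal$ in the sense of the preceding definition (and, incidentally, has exactly the same communication cost). This is the statement of the Fact.

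The only points needing a little care — and there is no real obstacle beyond this bookkeeping — are: (i) $U_A^\dagger$ and $V_B^\dagger$ act only on the registers $\Hcal_A$ and $\Hcal_B$ that Alice and Bob already hold (tensored with the identity on any private workspace), so they are legitimate local operations requiring no communication; and (ii) the transformation is exact, i.e.\ the pre-processed state equals $\rho$ on the nose rather than merely up to small trace distance, so the transcript distributions coincide identically. The argument is oblivious to whether the messages are classical or quantum and to whether $\rho$ is entangled.
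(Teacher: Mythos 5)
Your proof is correct and takes essentially the same approach as the paper: both insert input-independent local unitaries at the start of the protocol to convert the shared state, and observe that this leaves the transcript distribution unchanged. Your version is if anything slightly more direct — you apply $U_A^\dagger\otimes V_B^\dagger$ to $\sigma$ to recover $\rho$ and then run the original protocol, whereas the paper inserts the identity $U_A^{-1}U_A\otimes V_B^{-1}V_B$ as a prefix to the $\rho$-protocol and then reinterprets the first step as state preparation.
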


\begin{proof} Let $U_A$ and $V_B$ be unitary operators on $\Hcal_A$ and $\Hcal_B$ respectively so that $(U_A\otimes V_B)\rho(U_A \otimes V_B)^\dagger=\sigma.$ We modify the protocol by first having Alice and Bob apply $U_A$ and $V_B$ respectively to their parts of the entangled state, then apply $U_A^{-1}$ and $V_B^{-1}$ and then continue as per the original protocol. Observe that this transformation does not change the distribution on the~transcripts. The first step of the new protocol transforms $\rho$ into~$\sigma$. Thus, we may think of the new protocol as an equivalent communication protocol where the initial entangled state is $\sigma$. 
\end{proof}

We now turn to proving~\Cref{claim:structure_rtwoent}. We will use the following fact.

\begin{fact} Let $\ket{\phi}\in \Hcal_A$ and $\ket{\psi}\in \Hcal_B$ be unit vectors. Then, the state $\rho:=\ketbra{\phi}{\phi}_A\otimes\ketbra{\psi}{\psi}_B$ is \equivalent to the zero state. \label{fact:tensor}
\end{fact}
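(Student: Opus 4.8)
The plan is to directly exhibit the two local unitaries. First I would record the elementary linear-algebra fact that, in a finite-dimensional Hilbert space, any unit vector can be rotated to $\ket{0}$ by a unitary: given the unit vector $\ket{\phi}\in\Hcal_A$, complete it (by Gram--Schmidt) to an orthonormal basis $\ket{\phi}=\ket{e_1},\ket{e_2},\dots,\ket{e_{2^d}}$ of $\Hcal_A$, and let $U_A$ be the unique linear operator sending $\ket{e_j}$ to the $j$-th standard basis vector $\ket{j-1}_A$ for each $j$. Since $U_A$ maps an orthonormal basis to an orthonormal basis it is unitary, and by construction $U_A\ket{\phi}=\ket{0}_A$. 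Exactly the same argument produces a unitary $V_B$ on $\Hcal_B$ with $V_B\ket{\psi}=\ket{0}_B$.

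Next I would compute the action of $U_A\otimes V_B$ on the pure state $\ket{\phi}_A\otimes\ket{\psi}_B$. A tensor product of unitaries is unitary, and $(U_A\otimes V_B)(\ket{\phi}_A\otimes\ket{\psi}_B)=(U_A\ket{\phi})_A\otimes(V_B\ket{\psi})_B=\ket{0}_A\otimes\ket{0}_B$. Therefore, conjugating the rank-one projector $\rho=\ketbra{\phi}{\phi}_A\otimes\ketbra{\psi}{\psi}_B$ (equivalently $\ketbra{\phi\otimes\psi}{\phi\otimes\psi}$) by $U_A\otimes V_B$ gives $(U_A\otimes V_B)\rho(U_A\otimes V_B)^\dagger=\ketbra{0\otimes 0}{0\otimes 0}=\ketbra{0}{0}_A\otimes\ketbra{0}{0}_B$. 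By the definition of \equivalent (taking $U_A,V_B$ as the witnessing local unitaries), this shows that $\rho$ is \equivalent to the zero state, as claimed.

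There is essentially no obstacle here: the only points requiring any care are that $\Hcal_A$ and $\Hcal_B$ are finite-dimensional ($2^d$-dimensional), so that a unit vector always extends to an orthonormal basis and the ``rotation'' unitaries exist, and that the tensor product of unitaries is again unitary so that $U_A\otimes V_B$ is an admissible pair of local operations. Both are standard, so the whole statement is a one-line verification once the rotation unitaries are named.
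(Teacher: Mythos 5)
Your proof is correct and takes exactly the same route as the paper's: pick local unitaries $U_A, V_B$ rotating $\ket{\phi}$ and $\ket{\psi}$ to $\ket{0}$ and conjugate. You merely spell out the Gram--Schmidt construction of those unitaries, which the paper leaves implicit.
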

\begin{proof}
	Consider unitaries $U_A$ and $V_B$ such that $U_A\ket{\phi}=\ket{0}$ and $V_B\ket{\psi}=\ket{0}$. Observe that $(U_A\otimes V_B) \rho (U_A\otimes V_B)^\dagger=\ketbra{0}{0}_A\otimes\ketbra{0}{0}_B$ and hence $\rho$ is \equivalent to the zero state.
\end{proof}


We now complete the proof of~\Cref{lemma:decomposition}.
\begin{proof}[Proof of \Cref{lemma:decomposition}]
	Let $i,j\in [2^{2d}]$ be such that $i\neq j$. Define $\rho^{(i,j)}\in\Scal(\Hcal_A\otimes \Hcal_B)$ and $\alpha_{i,j}\in \Rbb$~by 
	\[
	\rho^{(i,j)}:=\tfrac{1}{2}\pbra{\ket{i}+\ket{j}}\pbra{\bra{i}+\bra{j}} \quad\text{and}\quad \alpha_{i,j}=\rho_{i,j},
	\]
	where $\rho_{i,j}$ is the $(i,j)$th entry of $\rho$. 
	Define $\rho^{(i,i)}\in \Scal(\Hcal_A\otimes \Hcal_B)$ and $\alpha_{i,i}\in \Rbb$ by
	\[ \rho^{(i,i)}:=\ket{i}\bra{i}\quad\text{and}\quad\alpha_{i,i}=\rho_{i,i}-\sum_{j\in[2^{2d}],j\neq i} \rho_{i,j}.
	\] 
	Observe that $\sum_{i, j \in[2^{2d}]} \alpha_{i,j}\rho^{(i,j)}=\rho$. Furthermore, for $i\neq j\in[2^{2d}]$, we have $\abs{\alpha_{i,j}}=\abs{\rho_{i,j}}\le 1$ and $\abs{\alpha_{i,i}}\le \sum_{j\in [2^{2d}]}\abs{\rho_{i,j}}\le \sqrt{2^{2d}}\cdot \sqrt{\sum_{j\in[2^{2d}]}\rho_{i,j}^2}\le 2^{d}$ due to Cauchy-Schwarz and the fact that $\rho$ is a quantum state. It suffices to argue that $\rho^{(i,j)}$ is \equivalent to the zero state or the EPR~state.  Firstly, every $i\in[2^{2d}]$ can be uniquely identified with $(a,b)$ where $a,b\in[2^{d}]$. Similarly, each $j\in[2^{2d}]$ can be uniquely identified with $(p,q)$ where $p,q\in[2^{d}]$. Consider the following cases. 
	
	\textbf{Case 1:} Suppose $a=p$ and $b=q$ (or equivalently $i=j$), then $\rho^{(i,j)}= \ketbra{a}{a}_A \otimes \ketbra{b}{b}_B$. By \Cref{fact:tensor}, $\rho^{(i,j)}$ is \equivalent to the zero state.
	
	\textbf{Case 2:} Suppose $a=p$ and $b\neq q$, then \begin{align*}
		\rho^{(i,j)}&\triangleq \tfrac{1}{2}\pbra{\ket{a}_A \ket{b}_B+\ket{p}_A \ket{q}_B}\pbra{\bra{a}_A \bra{b}_B+\bra{p}_A \bra{q}_B}\\ &=\ketbra{a}{a}_A \otimes  \pbra{\tfrac{ \ket{b}_B+ \ket{q}_B}{\sqrt 2}}  \pbra{\tfrac{\bra{b}_B+ \bra{q}_B}{\sqrt 2}}.
	\end{align*} 
	Since $\ket{b}$ and $\ket{q}$ are orthogonal, $\tfrac{\ket{b}+\ket{q}}{\sqrt{2}}$ is a unit vector and by \Cref{fact:tensor}, $\rho^{(i,j)}$ is \equivalent to the zero state.
	
	\textbf{Case 3:} Suppose $a\neq p$ and $b\neq q$. Let $U_A$ be any unitary operator that maps $\ket{a}$ to $\ket{0}$ and $\ket{p}$ to $\ket{1}$. This is well defined since $a\neq p$. Similarly define $V_B$ to be any unitary operator that maps $\ket{b}$ to $\ket{0}$ and $\ket{q}$ to $\ket{1}$. This is well defined since $b\neq q$. Observe that 
	\begin{align*} (U_A\otimes V_B)\rho^{(i,j)}(U_A\otimes V_B)^\dagger &= \tfrac{1}{2}(\ket{0}_A\ket{0}_B+\ket{1}_A\ket{1}_B)(\bra{0}_A\bra{0}_B+\bra{1}_A\bra{1}_B). 
	\end{align*}
	Thus, $\rho^{(i,j)}$ is \equivalent to the EPR state. This completes the proof of~\Cref{lemma:decomposition}.
\end{proof}

\subsection{Fourier Growth Bounds on \texorpdfstring{$\Rtwoent$}{R||*} protocols with the EPR State}
\label{sec:fourier_growth_rtwo}
We now prove a Fourier growth bound on $\Rtwoent$ protocols where Alice and Bob share a single EPR pair. Note that such protocols can easily simulate protocols that share the zero state, since Alice and Bob can simply produce $\ketbra{0}{0}_A$ and $\ketbra{0}{0}_B$ without communication and ignore the EPR pair. The main technical contribution of this subsection is the following lemma.

\begin{lemma}\label{lemma:second_main_lemma}
	Let $C:\bin^n\times \bin^n\to [-1,1]$ be a $\Rtwoent$ protocol of cost $c$ where the players share the EPR state. Let
	$H$ be its XOR-fiber as in~\Cref{definition:XOR_fiber}. Then, for all $\ell\in \Nbb$,~we~have
	\[L_{1,\ell}(H)\le O_\ell(1)+O_\ell\pbra{(c/\ell)^\ell}.
	\]
\end{lemma}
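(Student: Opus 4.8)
The plan is to use the structural description of $\Rtwoent$ protocols that share a single EPR pair (\Cref{claim:structure_rtwoent}), reduce the Fourier analysis of the XOR-fiber to a statement about four scalar entries of the transcript POVM elements, and then control those entries with the level-$k$ inequality together with the POVM constraint.

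\textbf{Reducing to the $2\times2$ corners.} By \Cref{claim:structure_rtwoent} I would write $C(x,y)=\sum_{z\in\bin^c}\alpha_z\,\Tr\pbra{(E_z(x)\otimes F_z(y))\rho}$ with $E_z(x),F_z(y)\succeq 0$, $\alpha_z\in\bin$, $\sum_z E_z(x)\otimes F_z(y)=\mathbb I$, and $\rho=\ketbra{\Phi^+}{\Phi^+}\otimes\ketbra{0^{2m}}{0^{2m}}_{AB}$. Expanding $\ketbra{\Phi^+}{\Phi^+}=\tfrac12\sum_{a,a'\in\{0,1\}}\ketbra{a}{a'}\otimes\ketbra{a}{a'}$ shows $\Tr\pbra{(E\otimes F)\rho}=\tfrac12\sum_{a,a'}[\widetilde E]_{a'a}[\widetilde F]_{a'a}=\tfrac12\abra{\widetilde E,\widetilde F}_{\mathrm F}$, i.e., only the $2\times2$ sub-blocks $\widetilde E_z(x),\widetilde F_z(y)$ of $E_z(x),F_z(y)$ on $\mathrm{span}\{\ket{0,0^m},\ket{1,0^m}\}$ matter (here $\abra{\cdot,\cdot}_{\mathrm F}$ is the entrywise inner product). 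Since composing measurement operators can only shrink operator norm, $E_z(x),F_z(y)\preceq\mathbb I$, hence $\widetilde E_z(x),\widetilde F_z(y)\preceq\mathbb I_2$; compressing $\sum_z E_z(x)\otimes F_z(y)=\mathbb I$ to this corner gives the genuine POVM identity $\sum_z\widetilde E_z(x)\otimes\widetilde F_z(y)=\mathbb I_4$, and the protocol-tree structure (each player's measurements, conditioned on the transcript so far, are complete) additionally yields, for every fixed value $z_B$ of Bob's transcript bits, the conditional identity $\sum_{z_A}\widetilde E_{(z_A,z_B)}(x)=\mathbb I_2$, and symmetrically $\sum_{z_B}\widetilde F_{(z_A,z_B)}(y)=\mathbb I_2$ for every fixed $z_A$.

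\textbf{Fourier expansion of the XOR-fiber.} Writing $H(z_0)=\E_x\sbra{C(x,x\odot z_0)}$, substituting $y=x\odot z_0$ and using $\chi_S(z_0)=\chi_S(x)\chi_S(y)$ (with $x,y$ independent and uniform) gives, by linearity of the Fourier transform, $\widehat H(S)=\sum_z\alpha_z\,\Tr\pbra{(\widehat{E_z}(S)\otimes\widehat{F_z}(S))\rho}=\tfrac12\sum_z\alpha_z\abra{\widehat{\widetilde E_z}(S),\widehat{\widetilde F_z}(S)}_{\mathrm F}$, where $\widehat{\widetilde E_z}(S)=\E_x\sbra{\widetilde E_z(x)\chi_S(x)}$. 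As $\alpha_z\in\bin$, Cauchy--Schwarz and $\frobnorm{M}\le\Tr\abs{M}$ give $\abs{\widehat H(S)}\le\tfrac12\sum_z\frobnorm{\widehat{\widetilde E_z}(S)}\,\frobnorm{\widehat{\widetilde F_z}(S)}\le\tfrac12\sum_z\Tr\abs{\widehat{\widetilde E_z}(S)}\,\Tr\abs{\widehat{\widetilde F_z}(S)}$, so it suffices to bound $\sum_{|S|=\ell}\sum_z\Tr\abs{\widehat{\widetilde E_z}(S)}\,\Tr\abs{\widehat{\widetilde F_z}(S)}$ by $O_\ell(1)+O_\ell\pbra{(c/\ell)^\ell}$.

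\textbf{Applying level-$k$ plus the POVM constraint, and the main obstacle.} For each fixed $z_B$, the conditional identity makes $\sigma^A_{z_B}(x):=\tfrac12\sum_{z_A}\ketbra{z_A}{z_A}\otimes\widetilde E_{(z_A,z_B)}(x)$ a density-matrix-valued function on a space of dimension $2^{|z_A|+1}\le2^{c+1}$, so \Cref{lemma:level_k_inequality_matrix} bounds $\sum_{|S|=\ell}\Tr^2\abs{\widehat{\sigma^A_{z_B}}(S)}$; since this function is block-diagonal in $z_A$ (so $\Tr\abs{\widehat{\sigma^A_{z_B}}(S)}=\tfrac12\sum_{z_A}\Tr\abs{\widehat{\widetilde E_{(z_A,z_B)}}(S)}$) it also bounds $\sum_{|S|=\ell}\sum_{z_A}\Tr^2\abs{\widehat{\widetilde E_{(z_A,z_B)}}(S)}$ by $O_\ell\pbra{(c/\ell)^\ell}+O_\ell(1)$; symmetrically for the $\widetilde F$'s with $z_A$ and $z_B$ interchanged. (Alternatively one may apply the scalar level-$k$ inequality \Cref{lemma:level_k_inequality} to each entry $[\widetilde E_z(x)]_{a'a}$, using $\abs{M_{ij}}\le\sqrt{M_{ii}M_{jj}}$ for $M\succeq0$ and the fact that $\cbra{\E_x[\,[\widetilde E_{(z_A,z_B)}(x)]_{aa}\,]}_{z_A}$ is, by the conditional identity, a probability distribution over $z_A$.) The remaining and most delicate point is to combine these two ``dual'' estimates so as to bound $\sum_{|S|=\ell}\sum_{z_A,z_B}\Tr\abs{\widehat{\widetilde E_{(z_A,z_B)}}(S)}\,\Tr\abs{\widehat{\widetilde F_{(z_A,z_B)}}(S)}$: a naive Cauchy--Schwarz over the full transcript loses a factor $2^{\Theta(c)}$ (the favorable summation order for $\widetilde E$ is over $z_A$ at fixed $z_B$, and the opposite for $\widetilde F$), while an $\ell_2$-to-$\ell_1$ passage over the level-$\ell$ characters loses $\sqrt{\binom{n}{\ell}}$, and neither is affordable. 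I expect the honest way around this is to carry out the accounting \emph{along the protocol tree} --- an induction on rounds that tracks how Fourier mass accumulates as classical messages are exchanged, in the spirit of and generalizing the Fourier-growth bound for $\Rtwo$ protocols in~\cite{girish2022quantum} --- so that each conditional-POVM identity is invoked exactly once. With this in hand, the $O_\ell(1)$ term should come from the degree-$0$ (constant) contributions and the $O_\ell\pbra{(c/\ell)^\ell}$ term from the $2^{O(c)}$-dimensional density matrices fed into the level-$k$ inequality; the constant-factor bookkeeping (the four corners, the inequality $\frobnorm{M}\le\Tr\abs{M}$, and the explicit form of the level-$k$ bound) is routine and I would defer it.
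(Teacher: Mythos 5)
Your setup is the right one and matches the paper through the first two stages: you invoke \Cref{claim:structure_rtwoent}, you observe that $\rho$ zeroes out all but the four entries of $\widehat{E_z}(S)\otimes\widehat{F_z}(S)$ on $\mathrm{span}\{\ket{0,0^m},\ket{1,0^m}\}^{\otimes 2}$, and you correctly see that the level-$k$ inequality is the tool. But you explicitly leave open precisely the step the lemma hinges on --- how to sum over the $2^{c}$ transcripts $z$ without paying a $2^{\Theta(c)}$ factor --- and the remedy you sketch (an induction along the protocol tree generalizing the $\Rtwo$ argument of~\cite{girish2022quantum}) is not carried out, so as written this is a genuine gap, not a completed proof.

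The paper closes the gap by a different and cleaner device than protocol-tree bookkeeping. It applies the \emph{scalar} level-$k$ inequality (\Cref{lemma:level_k_inequality}) to each of the four scalar-valued functions $x\mapsto E_z(x)[i,j]$ and $y\mapsto F_z(y)[i,j]$ separately, for each fixed transcript $z$, giving
\[
\sum_{|S|=\ell}\widehat{E_z}(S)[i,j]^2 \le 4\,e_z[i,j]^2\pbra{2e\ln\pbra{e/e_z[i,j]^{1/\ell}}}^\ell,\qquad e_z[i,j]:=\E_x\abs{E_z(x)[i,j]},
\]
and likewise for $F$. After Cauchy--Schwarz over $S$, this reduces the task to bounding
\[
\sum_{z\in\bin^c}\sum_{i,j}e_z[i,j]f_z[i,j]\pbra{2e\ln\pbra{e^2/(e_z[i,j]f_z[i,j])^{1/\ell}}}^{\ell}.
\]
The decisive step you are missing is the use of the concavity of $h(\gamma)=\gamma\ln(e^2/\gamma^{1/\ell})^{\ell}$ on $[0,1]$ (\cite[Claim~16]{DBLP:conf/coco/Lee19}) together with Jensen's inequality: writing the sum over $z$ as $2^{c}\,\E_z[h(p_z)]$ with $p_z=e_z[i,j]f_z[i,j]$ and applying $\E_z[h(p_z)]\le h(\E_z[p_z])$ converts the $2^c$ prefactor into a $2^{c/\ell}$ inside the logarithm, i.e., into an additive $c/\ell$ --- this is exactly where the $(c/\ell)^\ell$ term comes from. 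The loop is then closed with a single use of the unconditional POVM identity $\sum_z E_z(x)\otimes F_z(y)=\mathbb I$ together with the PSD inequality $\abs{M_{ij}}\le\tfrac12(M_{ii}+M_{jj})$, yielding $\beta_{i,j}:=\sum_z e_z[i,j]f_z[i,j]\le 1$ and hence the $O_\ell(1)$ term. Notably, the paper does \emph{not} need the conditional per-row identities $\sum_{z_A}\widetilde E_{(z_A,z_B)}(x)=\mathbb I_2$ (and dually for $F$) that you invoke --- although these do hold, by telescoping the POVMs along only Alice's (resp.\ Bob's) rounds, they are not what is used; only the joint identity is. So the correct fix is not a round-by-round induction but the concavity/Jensen trick, and you should supply that argument to complete the proof.
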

The following claim helps understand the acceptance probability of $\Rtwoent$~protocols.

\begin{claim}\label{claim:structure_rtwoent} Let $C:\bin^n\times\bin^n\to [-1,1]$ be any $\Rtwoent$ protocol of cost $c$ where Alice and Bob share a state $\rho\in \Scal(\Hcal_A\otimes \Hcal_B)$ where $\Hcal_A$ and $\Hcal_B$ are Hilbert spaces of dimension $2^d$. Then, there exists $m\in \Nbb$ and Hilbert spaces $\Hcal'_A,\Hcal'_B$ of dimension $m$ such that the expected output of the protocol on inputs $x,y\in\bin^n$ can be expressed as 
	\[C(x,y)=\sum_{z\in \bin^c}\Tr\pbra{(E_z(x)\otimes F_z(y))\cdot \rho'}\cdot (-1)^{\indi[z\in A]}, \]
	where for all $x,y\in\bin^n$ and $z\in\bin^c$,
	\begin{enumerate}
		\item $E_z(x)$ is a $2^{d+m}\times 2^{d+m}$ positive semidefinite matrix acting on $\Hcal_A\otimes \Hcal'_A$,
		\item $F_z(y)$ is  a $2^{d+m}\times 2^{d+m}$ positive semidefinite matrices acting on $\Hcal_B\otimes \Hcal'_B$, 
		\item  $\sum_{z'\in\bin^c} E_{z'}(x)\otimes F_{z'}(y)=\mathbb{I}$, 
		\item $\rho'=\rho\otimes \ket{0^m}\bra{0^m}_A\otimes \ket{0^m}\bra{0^m}_B$, and
		\item  $A\subseteq\bin^c$ is some subset.
	\end{enumerate} 
\end{claim}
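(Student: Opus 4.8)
The plan is to unfold the $\Rtwoent$ protocol round by round and exploit the fact that, in a protocol with purely classical communication, Alice's and Bob's quantum operations act on disjoint registers and therefore commute. First I would put the protocol into a normal form: without loss of generality it runs for exactly $c$ rounds and sends one bit per round (a multi-bit message is split into consecutive single-bit messages of the same player), it produces its output as a deterministic function $\mathrm{out}\colon\bin^c\to\bin$ of the transcript (Alice's final measurement is appended as one extra message bit, absorbed into $c$), and private randomness is modelled inside the local quantum registers, whose private parts $\Hcal'_A,\Hcal'_B$ (of some large dimension $2^m$) are initialised to $\ket{0^m}$ and on which each player may first perform Hadamards to generate random bits. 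Under this normal form, in round $t$ the speaking player applies a two-outcome measurement to her register $\Hcal_A\otimes\Hcal'_A$ (resp.\ $\Hcal_B\otimes\Hcal'_B$) whose operators depend only on the transcript-so-far $z_{<t}$ and that player's input, records the outcome $z_t\in\bin$, and sends it. Writing $p_z(x,y)$ for the probability that the transcript equals $z$ on inputs $x,y$, this already gives $C(x,y)=\sum_{z\in\bin^c}p_z(x,y)\cdot\mathrm{out}(z)$, and setting $A=\mathrm{out}^{-1}(-1)$ yields the $(-1)^{\indi[z\in A]}$ form; it then remains only to factor $p_z(x,y)$.

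\textbf{Factoring the transcript probability.} Fix a full transcript $z\in\bin^c$. For each round $t$ belonging to Alice, let $M^A_t(z)$ be the measurement operator for outcome $z_t$ given prefix $z_{<t}$ and Alice's input $x$, acting on $\Hcal_A\otimes\Hcal'_A$; define $M^B_t(z)$ analogously on $\Hcal_B\otimes\Hcal'_B$ for Bob's rounds. Conditioned on transcript $z$, the subnormalised state after round $c$ is obtained by applying these operators to $\rho':=\rho\otimes\ketbra{0^m}{0^m}_A\otimes\ketbra{0^m}{0^m}_B$ in chronological order, each embedded as the identity on the other player's registers. Since $M^A_t(z)\otimes\mathbb{I}_B$ commutes with $\mathbb{I}_A\otimes M^B_{t'}(z)$ for all $t,t'$, the ordered product factors as $N_A(z)\otimes N_B(z)$, where $N_A(z)$ is the product of Alice's operators in their original relative order (a function of $x$ and $z$ only) and $N_B(z)$ the analogous product for Bob (a function of $y$ and $z$ only). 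Hence $p_z(x,y)=\Tr\big((N_A(z)^\dagger N_A(z)\otimes N_B(z)^\dagger N_B(z))\,\rho'\big)$, and I would take $E_z(x):=N_A(z)^\dagger N_A(z)$ and $F_z(y):=N_B(z)^\dagger N_B(z)$, which are positive semidefinite. This settles items~1, 2 and 4 and the displayed formula for $C(x,y)$.

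\textbf{The POVM identity.} For item~3 I would prove, by induction on the round index $s$, that $\sum_{z_{\le s}}N_A(z_{\le s})^\dagger N_A(z_{\le s})\otimes N_B(z_{\le s})^\dagger N_B(z_{\le s})=\mathbb{I}$, where $N_A(z_{\le s}),N_B(z_{\le s})$ denote the partial products over the first $s$ rounds. The base case $s=0$ is $\mathbb{I}\otimes\mathbb{I}$. For the step, suppose round $s+1$ is Alice's; then $N_B$ is unchanged and $N_A(z_{\le s+1})=M^A_{s+1}(z)\,N_A(z_{\le s})$, so summing over the new bit $z_{s+1}$ and using completeness of the round-$(s+1)$ measurement, $\sum_{z_{s+1}}(M^A_{s+1}(z))^\dagger M^A_{s+1}(z)=\mathbb{I}_{\Hcal_A\otimes\Hcal'_A}$, collapses the expression back to the round-$s$ sum, which equals $\mathbb{I}$ by the inductive hypothesis; the case where round $s+1$ is Bob's is symmetric. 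Taking $s=c$ gives $\sum_{z\in\bin^c}E_z(x)\otimes F_z(y)=\mathbb{I}$, completing the proof.

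\textbf{Main obstacle.} I do not expect a genuine mathematical obstacle; the effort goes into the modelling. The delicate points are (i)~justifying the normal form, in particular that private randomness and the final output can be folded into the measurement-and-transcript description without changing $C$, and that one bit per round is without loss of generality; and (ii)~being precise that the cross-player commutation permits regrouping the operators while preserving each player's internal order, and that although $N_A(z)$ depends on the entire transcript $z$ (through the adaptive choice of which measurement Alice performs), it nonetheless acts only on Alice's registers. Once these are pinned down, the product factorisation and the inductive completeness argument are routine.
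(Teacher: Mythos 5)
Your proposal is correct and follows essentially the same route as the paper's proof in Appendix~\ref{app:claim2.2proof}: decompose the protocol into rounds of single-bit Kraus measurements, factor the transcript operator into a tensor product of an Alice-side and a Bob-side ordered product via cross-player commutation, set $E_z=N_A^\dagger N_A$ and $F_z=N_B^\dagger N_B$, and verify the POVM completeness identity by telescoping one round at a time. The only cosmetic difference is that the paper peels measurements off from the last round backwards whereas you accumulate them forwards; these are the same telescoping argument read in opposite directions.
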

We defer the proof of this claim to Appendix~\ref{app:claim2.2proof}. We now prove \Cref{lemma:second_main_lemma} using \Cref{claim:structure_rtwoent}.

\begin{proof}[Proof of \Cref{lemma:second_main_lemma} using \Cref{claim:structure_rtwoent}]
	
	We use $\ketbra{ \Phi^+}{ \Phi^+}$ to denote the single EPR state. We use \Cref{claim:structure_rtwoent} which describes the structure of arbitrary $\Rtwoent$ protocols of cost $c$. By \Cref{claim:structure_rtwoent}, the expected output of the protocol $C$ on inputs $x,y\in\bin^n$ can be expressed as 
	\[
	C(x,y)=\sum_{z\in \bin^c} \Tr((E_z(x)\otimes F_z(y)\cdot \rho)\cdot (-1)^{\indi[z\in A]}
	\] where 
	$A\subseteq\bin^c$, and $E_z(x)$ and $F_z(y)$ are positive semidefinite operators such that \begin{align}\label{eq:sum_to_identity} \sum_{z'\in \bin^c}E_{z'}(x)\otimes F_{z'}(y)=\mathbb{I}, \text{ and }
		\rho=\ketbra{ \Phi^+}{ \Phi^+}\otimes \ketbra{0^m}{0^m}_A\otimes \ketbra{0^m}{0^m}_B\end{align} 
	for some parameter $m\in \Nbb$.
	Recall that by the definition of XOR-fiber, for all $w\in\bin^n$, we have $H(w)=\E_{x\sim \bin^n} [C(x,x\odot w)]$. Hence,
	\begin{align*}
		&\widehat{H}(S)\\&=\underset{w\sim \bin^n}{\E} [H(w)\chi_S(w)]\\
		&=\underset{w,x\sim \bin^n}{\E}\sbra{\sum_{z\in \bin^c}  \Tr\pbra{\pbra{E_z(x)\otimes F_z(x\odot w)}\cdot \rho}\cdot  \chi_S(w) \cdot(-1)^{\indi[z\in A]}}\\
		&=\underset{w,x}{\E}\sbra{\sum_{z\in \bin^c}\sum_{T,Q}  \Tr\pbra{\pbra{\widehat{E_z}(T)\otimes \widehat{F_z}(Q)}\cdot \rho}\cdot  \chi_{S+Q}(w)\cdot \chi_{T+Q}(x)\cdot(-1)^{\indi[z\in A]}}\\
		&=\sum_{z\in \bin^c} \Tr\pbra{\pbra{\widehat{E_z}(S)\otimes \widehat{F_z}(S)}\cdot \rho} \cdot(-1)^{\indi[z\in A]}.
	\end{align*}
	Our goal is to upper bound 
	\begin{align}
		\label{eq:goal}\begin{split}
			L_{1,\ell}(H)\triangleq \sum_{|S|=\ell}\abs{ \widehat{H}(S)}&=\sum_{|S|=\ell}\abs{ \sum_{z\in \bin^c} \Tr\pbra{\pbra{\widehat{E_z}(S)\otimes \widehat{F_z}(S)}\cdot \rho} \cdot(-1)^{\indi[z\in A]}}\\
			&\le \sum_{z\in \bin^c}\sum_{|S|=\ell} \abs{\Tr\pbra{\pbra{\widehat{E_z}(S)\otimes \widehat{F_z}(S)}\cdot \rho}}.
		\end{split}
	\end{align} 
	By Eq.~\eqref{eq:sum_to_identity}, the density matrix of the state $\rho$ has exactly four non-zero coordinates.
	This zeroes all but four coordinates of $\widehat{E_z}(S)\otimes \widehat{F_z}(S)$ in the R.H.S. of Eq.~\eqref{eq:goal}. More precisely, we have
	\[ \Tr\pbra{\pbra{\widehat{E_z}(S)\otimes \widehat{F_z}(S)}\cdot \rho}= \frac{1}{2}\cdot\sum_{i,j\in\{1,2^m+1\}} \widehat{E_z}(S)[i,j]\cdot \widehat{F_z}(S)[i,j]. \]
	Substituting this in Eq.~\eqref{eq:goal},we have
	\begin{align*} 
		L_{1,\ell}(H)&\le \sum_{z\in \01^c} \sum_{|S|=\ell}\abs{ \sum_{i,j\in\{1,2^m+1\}} \widehat{E_z}(S)[i,j]\cdot \widehat{F_z}(S)[i,j]} \\
		&\le\sum_{z\in \bin^c} \sum_{i,j\in\{1,2^m+1\}} \sum_{|S|=\ell}  \abs{\widehat{E_z}(S)[i,j]\cdot \widehat{F_z}(S)[i,j]} \\
		&\le \sum_{z\in \bin^c} \sum_{i,j\in\{1,2^m+1\}}\sqrt{ \sum_{|S|=\ell}  \widehat{E_z}(S)[i,j]^2  } \cdot \sqrt{ \sum_{|S|=\ell} \widehat{F_z}(S)[i,j]^2}.
	\end{align*}
	
	For $i\neq j\in\{1,2^m+1\}$, let $e_z[i,j]:=\E_x\big[\abs{E_z(x)[i,j]}\big]$ and $f_z[i,j]:=\E_y \big[\abs{F_z(y)[i,j]}\big]$ (where the expectation is over the uniform distribution on $\bin^n$). Similarly, let $e_z[i,i]=\E_x\big[E_z(x)[i,i]\big]$ and $f_z[i,i]=\E_y\big[F_z(y)[i,i]\big]$. Since $E_x$ and $F_y$ are positive semidefinite, their diagonal entries are non-negative. Using the level-$k$ inequality (\Cref{lemma:level_k_inequality}) we get that for all $i,j\in\{1,2^m+1\}$,
	\begin{align*}
		\sum_{|S|=\ell}  \widehat{E_z}(S)[i,j]^2  &\le   4 (\E_x\sbra{ |E_z(x)[i,j]|})^2\cdot \pbra{2e \ln\pbra{e/(\E_x\sbra{ |E_z(x)[i,j]|})^{1/\ell}}}^\ell\\
		&=4(e_z[i,j])^2\cdot \pbra{2e \ln\pbra{e/(e_z[i,j])^{1/\ell}}}^\ell.
	\end{align*}
	We can now upper bound $L_{1,\ell}(H)$ as follows:
	\begin{align*} 
		&L_{1,\ell}(H)\\&\leq 4 \sum_{\substack{z\in \bin^c\\i,j\in\{1,2^m+1\}}}  e_z[i,j] \cdot f_z[i,j] \cdot \pbra{2e \ln \pbra{\frac{e}{e_z[i,j]^{1/\ell}}}}^{\ell/2}  \cdot \pbra{2e \ln \pbra{\frac{e}{f_z[i,j]^{1/\ell}}}}^{\ell/2}\\
		&\le 4 \sum_{\substack{z\in \bin^c\\i,j\in\{1,2^m+1\}}} e_z[i,j] \cdot f_z[i,j] \cdot\pbra{2e \ln \pbra{\frac{e^2}{e_z[i,j]^{1/\ell}\cdot f_z[i,j]^{1/\ell}}}}^{\ell}. 
	\end{align*}
	We now use the concavity of the function $h(\gamma)=\gamma\cdot \ln(e^2/\gamma^{1/\ell})^{\ell}$ for $\gamma\in [0,1]$.\footnote{The concavity of this function is proved in~\cite[Claim~16]{DBLP:conf/coco/Lee19}. In more detail, observe that $h(\gamma)=\gamma\cdot \ln\pbra{(e/\gamma^{1/(2\ell)})^2}^{\ell}=\gamma\cdot \ln\pbra{e/\gamma^{1/(2\ell)}}^{\ell}\cdot 2^{\ell}$ which by their notation, equals $2^{\ell}\cdot \phi_{2\ell}(\gamma)$. It is shown that for all positive $\ell,$ the function $\phi_{2\ell}(\gamma)$ is concave and increasing for $\gamma\in [0,1]$.} Jensen's inequality implies that for any $p_z\in[0,1]$, we have that $\E_z[h(p_z)]\leq h(\E_z[p_z])$.
	Setting $p_z=e_z[i,j]\cdot f_z[i,j]$, we conclude that $L_{1,\ell}(H)$ is at most
	\begin{align*}
	 &2^{c+2} \sum_{i,j\in \{1,2^m+1\}} \pbra{\underset{z\in \bin^c}{\E}\sbra{ e_z[i,j] f_z[i,j]}}  \pbra{2e \ln \pbra{\frac{e^2}{\pbra{\E_{z\in \bin^c} \big[e_z[i,j] f_z[i,j]\big]}^{1/\ell}}}}^{\ell}\\
		&=4 \sum_{i,j\in \{1,2^m+1\}} \pbra{\sum_{z\in \bin^c} e_z[i,j] f_z[i,j]}  \pbra{2e \ln \pbra{\frac{e^2\cdot 2^{c/\ell}}{\pbra{\sum_{z\in \bin^c} e_z[i,j] f_z[i,j]}^{1/\ell}}}}^{\ell}
		.\end{align*}
	To simplify notation, for $i,j\in\{1,2^m+1\},$ we define $\beta_{i,j}\in \Rbb$ by $\beta_{i,j}:=\sum_{z\in\bin^c} e_z[i,j]\cdot f_z[i,j]$. With this notation, we have
	\begin{align*}
		L_{1,\ell}(H)&\le 4\sum_{i,j\in \{1,2^m+1\}} \beta_{i,j} \cdot  \pbra{2e  \ln \pbra{{\frac{e^2\cdot 2^{c/\ell}}{\beta_{i,j}^{1/\ell}}}}}^{\ell}\\
		&\le 4\sum_{i,j\in \{1,2^m+1\}} \beta_{i,j} \cdot  \pbra{2e  \ln \pbra{{\frac{e^2}{\beta_{i,j}^{1/\ell}} }   }+\frac{2e c}{\ell}}^{\ell}\\
		&\le 4\sum_{i,j\in \{1,2^m+1\}} \beta_{i,j} \cdot 2^{\ell}\cdot  \pbra{2e\ln \pbra{{\frac{e^2}{\beta_{i,j}^{1/\ell}} }   }}^{\ell} +4\sum_{i,j\in \{1,2^m+1\}} \beta_{i,j} \cdot 2^{\ell}\cdot  \pbra{\frac{2e c}{\ell}}^{\ell}.
	\end{align*}

	For all $x,y\in\bin^n,$ the matrix $E_z(x)\otimes F_z(y)$ is positive semidefinite. Furthermore, for any positive semidefinite matrix $M$, we have $\abs{M_{i,j}}\le \tfrac{1}{2}\pbra{M_{i,i}+M_{j,j}}$ for all $i\neq j$. This implies that for all $x,y\in\bin^n$ and $i\neq j\in\{1,2^m+1\}$,
	\[ \abs{E_z(x)[i,j]}\cdot  \abs{F_z(y)[i,j]} \le \tfrac{1}{2} \pbra{  E_z(x)[i,i]\cdot  F_z(y)[i,i]+  E_z(x)[j,j] \cdot F_z(y)[j,j]} \]
	Taking an expectation over $x,y\sim\bin^n$ implies that for $i\neq j\in\{1,2^m+1\}$,
	\[
	e_z[i,j]\cdot f_z[i,j]\le \tfrac{1}{2}\pbra{ e_z[i,i]\cdot f_z[i,i]+e_z[j,j]\cdot f_z[j,j]} .
	\]
	By Eq.~\eqref{eq:sum_to_identity}, since $\sum_{z\in\bin^c} E_z(x)\otimes F_z(y)=\mathbb{I}$ for $x,y\in \bin^n$, we have for all $i\neq j\in\{1,2^m+1\}$
	\[ \beta_{i,j}\triangleq \sum_{z\in\bin^c} e_z[i,j]\cdot f_z[i,j] \le \sum_{z\in\bin^c}\tfrac{1}{2}\pbra{ e_z[i,i]\cdot f_z[i,i]+e_z[j,j]\cdot f_z[j,j]}= 1,
	\] 
	where in the final equality we used that $e_z[i,i]=\E_x\big[E_z(x)[i,i]\big]$. Thus, we have 
	\[ 
	L_{1,\ell}(H)\le O_\ell(1)\cdot \sum_{i,j\in \{1,2^m+1\}} \beta_{i,j}\cdot \pbra{2e  \ln \pbra{{\frac{e^2}{\beta_{i,j}^{1/\ell}} }   }}^{\ell} + O_\ell\pbra{(c/\ell)^{\ell}}
	.\]
	It follows from~\cite[Claim~16]{DBLP:conf/coco/Lee19} that the function $h(\gamma)=\gamma\ln (e^2/\gamma^{1/\ell})^{\ell}$ is increasing for $\gamma\in[0,1]$ and the value at $\gamma=1$ is $2^\ell$.
	Thus, we have
	\begin{align*}
		\beta_{i,j}\cdot  \ln \pbra{\frac{e^2}{\beta_{i,j}^{1/\ell}}}^{\ell}&\le  2^{\ell}.
	\end{align*}
	Therefore,
	\[L_{1,\ell}(H)\leq O_\ell(1)+ O_\ell\big((c/\ell)^{\ell}\big),
	\]
	proving \Cref{lemma:second_main_lemma}.

	\subsection{Putting Things Together}
	
	\label{sec:proof_main_lemma}
	We now prove \Cref{lemma:main_lemma} using  \Cref{lemma:decomposition} and \Cref{lemma:second_main_lemma}. Consider any interactive randomized communication protocol $C_\rho$ of cost $c$ that uses $\rho$ as the entangled state, where $\rho$ is an arbitrary state on $2d$ qubits for a parameter $d\in \Nbb$. Consider the decomposition 
	\[
	\rho=\sum_{i=1}^{2^{4d}} \alpha_{i} \rho_i
	\] as given by \Cref{lemma:decomposition}. Let $C_\rho(x,y)$ denote the expected output of the protocol on inputs $x,y$ as before. Observe that $C_\rho(x,y)$ is linear in $\rho$ due to~\Cref{claim:structure_rtwoent}. Thus, the expected output of the protocol can be expressed as \[C_\rho(x,y)=\sum_{i=1}^{2^{4d}} \alpha_{i}C_{\rho_i}(x,y).\] 
	By \Cref{claim:cc_equivalent}, we have $C_{\rho_i}(x,y)=C^{(i)}_{\sigma_i}(x,y)$ where $\sigma_i$ is either the zero state or the EPR state and $C^{(i)}$ is some $\Rtwoent$ protocol that is equivalent to $C$ and uses $\sigma_i$ as the shared state. Therefore, we have for all $z\in \bin^n,$
	\[H(z)=\sum_{i=1}^{2^{4d}}\alpha_{i}H^{(i)}_{\sigma_i}(z)\] 
	where $H^{(i)}_{\sigma_i}$ is the XOR-fiber of $C^{(i)}_{\sigma_i}$. In particular,
	\begin{align*}
		L_{1,\ell}(H)&\le \sum_{i=1}^{ 2^{4d}}\abs{\alpha_{i}}\sum_{|S|=\ell}\abs{\widehat{H^{(i)}_{\sigma_i }}(S)}\le 2^{5d}\cdot \max_{i\in[2^{4d}]}\pbra{ L_{1,\ell}\pbra{ H^{(i)}_{\sigma_i} }}. \end{align*} 
	
	Here, we use the fact that $\abs{\alpha_{i}}\le 2^{d}$. Since each $\sigma_i$ is either the zero state or the EPR state and $C^{(i)}$ has cost at most $c$, we can apply \Cref{lemma:second_main_lemma} to conclude that
	\[
	L_{1,\ell}\pbra{H^{(i)}_{\sigma_i }} \leq  O_\ell(1)+ O_\ell\big((c/\ell)^\ell\big) \le O(c^\ell).
	\]
	This proves \Cref{lemma:main_lemma}.
\end{proof}

\section{Proof of \texorpdfstring{\Cref{theoremQparallel}}{Theorem 3}}
\label{sec:secondthmproof}

The main technical contribution in this section is a Fourier growth bound on $\Qpub$ protocols. We restate~\Cref{lemma:main_lemma_2} for convenience.

\lemmaQpri*
\begin{proof}[Proof of \Cref{theoremQparallel} using \Cref{lemma:main_lemma_2}]
	Let $c=\tau\cdot n^{1/4}$ for a small enough constant $\tau>0$. Let $\Hcal$ be the family of XOR-fibers of $\Qpub$ protocols of cost at most $c$. This is a restriction-closed family of Boolean functions. The results of~\cite{girish2022quantum} imply that the maximum advantage that functions in $\Hcal$ have in computing the Forrelation problem is at most $O\pbra{\tfrac{L_{1,2}(H)}{\sqrt{n}}}+O\pbra{\tfrac{1}{\sqrt{n}}}$. Using~\Cref{lemma:main_lemma_2}, this quantity is at most $\tfrac{c^2}{\sqrt{n}}\le \tau^2\ll 1$. This, along with the techniques of~\cite{girish2022quantum} implies that $\Qpub$ protocols solving the Forrelation problem require communication cost ${\Omega}(n^{1/4})$. 
\end{proof}

We now prove \Cref{lemma:main_lemma_2}. 
\begin{proof}[Proof of \Cref{lemma:main_lemma_2}] Firstly, it suffices to prove the lemma for $\Qpri$ protocols, by the fact that $\Qpub$ protocols are defined by expectations over $\Qpri$ protocols and by triangle inequality. Let $C$ be any $\Qpri$ protocol of cost $c$ where Alice and Bob don't share entanglement. Without loss of generality, the protocol has the following form. Let $\Hcal$ be a Hilbert space of dimension $2^c$. Alice sends Charlie $\rho(x)$ and Bob sends Charlie $\sigma(y)$ where $\rho(x),\sigma(y)\in \Scal(\Hcal)$. Then Charlie applies a two-outcome POVM $\{M_{1},M_{-1}\}$  and announces the outcome as the answer. It is not too hard to see that the expected output of the protocol is precisely $C(x,y)=\Tr(E\cdot(\rho(x)\otimes \sigma(y))$ where $E=M_1^\dagger M_1-M_{-1}^\dagger M_{-1}$.
	Observe that for all $S\subseteq[n]$, 
	\[
	\widehat{H}(S)\triangleq  \mathop{\E}_{x,y\sim\bin^n}\sbra{\Tr(E\cdot(\rho(x)\otimes \sigma(y)))\cdot \chi_S(x\odot y)}=\Tr(E\cdot( \widehat{\rho}_S\otimes \widehat{\sigma}_S)).
	\]
	Therefore, we have
	\begin{align}\label{eq:quantum_goal}
		\begin{split}
			\sum_{|S|=\ell}\abs{\widehat{H}(S)}&=\sum_{|S|=\ell}\abs{\Tr\pbra{E\cdot \pbra{\widehat{\rho}_S\otimes \widehat{\sigma}_S}}}\\
			&\le 2 \sum_{|S|=\ell}\Tr\pbra{\abs{\widehat{\rho}_S}} \cdot \Tr\pbra{\abs{\widehat{\sigma}_S}} 
			\le  \sqrt{\sum_{|S|=\ell}\Tr\pbra{\abs{\widehat{\rho}_S}}^2 } \cdot\sqrt{\sum_{|S|=\ell} \Tr\pbra{\abs{\widehat{\sigma}_S}}^2} .
		\end{split} 
	\end{align}
	Here, we used \Cref{fact:inequality} on $M_1^\dagger M_1$ and $M_{-1}^\dagger M_{-1}$. We now apply \Cref{lemma:level_k_inequality_matrix} to the density-matrix valued functions $\rho(x),\sigma(y)$ to conclude that
	\[
	\max\Big\{\sum_{|S|=\ell} \Tr^2\pbra{\abs{\widehat{\rho}_S}},\sum_{|S|=\ell} \Tr^2\pbra{\abs{\widehat{\sigma}_S}}\Big\}\le O_\ell \pbra{(c/\ell)^\ell} + O_\ell(1) .
	\]
	Substituting this in Eq.~\eqref{eq:quantum_goal}, we have 
	\begin{align*}
		L_{1,\ell}(H)&\le  O_\ell \pbra{(c/\ell)^\ell}+O_\ell(1) \le O_\ell(c^\ell).
	\end{align*}
	This proves the desired upper bound. \end{proof}

\begin{remark}
	\Cref{lemma:main_lemma_2}, along with~\cite{DBLP:conf/approx/GirishRZ21} implies that the advantage that $\Qpub$ protocols of cost ${o}(n^{1/4})$ have in computing the $\oplus^k$-Forrelation problem is at most $\exp(-\Omega(k))$. Using this, along with~\Cref{lemma:simulation_q}, one can show that $\Qent$ protocols of cost $o(n^{1/4})$ where Alice and Bob only share $O(k)$ \equbits~cannot solve the $\oplus^k$-Forrelation problem. On the other hand, \cite{DBLP:conf/approx/GirishRZ21} shows that if Alice and Bob share $\tilde{O}(k^5\log^3n)$ EPR pairs, then the $\oplus^k$-Forrelation problem can be solved by $\Qent$ protocols of cost $\tilde{O}(k^5\log^3n)$. This gives a separation between $\Qent$ with more entanglement and $\Qent$ with less entanglement.
\end{remark}

\section{Proof of \texorpdfstring{\Cref{theoremRoneQsmp}}{Theorem 3}}
\label{sec:thirdthmproof}
For convenience we restate \Cref{theoremRoneQsmp} below.

\theoremRoneQsmp*

\begin{proof}
	The proof of this theorem proceeds in two steps. First, we assume by contradiction that there exists an $\Roneent$ or $\Qent$ protocol of cost $c$ that uses $2d$ \equbits~and solves the $\oplus^k$-Boolean Hidden Matching problem with advantage at least $\tfrac{1}{3}$. Using this, we produce an $\Rone$ or $\Qpri$ protocol respectively that solves the $\oplus^k$-Boolean Hidden Matching problem with cost at most $c+O(d)$; this protocol uses no entanglement but only succeeds with advantage at least $2^{-\Theta(d)},$ that is, the success probability is at least $ \tfrac{1}{2}+2^{-\Theta(d)}.$ This will be proved in \Cref{lemma:simulation_q} and \Cref{lemma:simulation_r}. We then argue that $\Rone$ and $\Qpri$ protocols satisfy an XOR lemma with respect to the Boolean Hidden Matching problem. That is, the advantage that cost $c$ protocols have in solving the $\oplus^k$-Boolean Hidden Matching problem is at most $O_k\pbra{\frac{(c/k)^3}{n}}^{k/2}$ for the $\Qpri$ model and at most $O_k \pbra{\frac{(c/k)^{2}}{n}}^k$ for the $\Rone$ model. This will be proved in \Cref{lemma:main_lemma_3} and \Cref{lemma:main_lemma_4} respectively. Combining the aforementioned lemma, for the $\Qent$ case and $\Roneent$ case, we have \begin{equation}\label{eq:contradiction}\frac{1}{3}\cdot 2^{-4d}\le O_k\pbra{\frac{((c+2d)/k)^3}{n}}^{k/2}\quad\text{and}\quad\frac{1}{6}\cdot 2^{-4d}\le O_k\pbra{\frac{((c+10d)/k)^2}{n}}^{k/2}\end{equation}
	respectively. Let $\tau,\tau'>0$ be sufficiently small constants and $d=\tau'\cdot k$. For the $\Qent$ model, we can set $c=\tau\cdot kn^{1/3}$ and for the $\Roneent$ model set $c=\tau\cdot kn^{1/2}$ so  that Eq.~\eqref{eq:contradiction} is violated. This proves our communication lower bound. The quantum upper bound is presented in~\Cref{sec:appendix_theorem_three}.
\end{proof}

\subsection{Removing Entanglement from Protocols}

Our main contributions in this section are the following lemmas. Below, we assume $d$ is large~enough. 

\begin{lemma} \label{lemma:simulation_q}
	Given any communication protocol for computing $F$ with cost $c$ and advantage $\tfrac{1}{3}$ in the $\Qent$ model where Alice and Bob share at most $2d$ \equbits, there is a protocol of cost $c+ 2d$ in the $\Qpri$ model that computes $F$ with advantage $ \tfrac{1}{3}\cdot 2^{-4d}.$
\end{lemma}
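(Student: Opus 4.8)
The plan is to realize the given $\Qent$ protocol as a quantum circuit on three registers: the shared entangled state $\rho$ on $2d$ qubits (with Alice and Bob each holding $d$ qubits, call these registers $A_0, B_0$), Alice's and Bob's private workspaces, and then have Alice and Bob apply their local operations (unitaries and measurements) as dictated by the $\Qent$ protocol and send their messages to Charlie. Since in the $\Qpri$ model the players cannot share $\rho$, the idea — exactly the ``illustrative example'' sketched in the introduction — is to have Alice and Bob each purify and \emph{send} their half of what should be $\rho$ to Charlie, together with an index register, so that Charlie can recreate $\rho$ himself with some success probability and detect when he has succeeded; conditioned on success the rest of the protocol proceeds verbatim.

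Concretely, first I would write $\rho = \sum_{i} p_i \ketbra{\psi_i}{\psi_i}$ in its eigenbasis on $\Hcal_{A_0}\otimes\Hcal_{B_0}$, and for each $\ket{\psi_i}$ take a Schmidt decomposition $\ket{\psi_i} = \sum_j \sqrt{\lambda_{ij}}\,\ket{a_{ij}}_{A_0}\ket{b_{ij}}_{B_0}$. The key observation is that a uniformly random (maximally mixed) state on the $d$-qubit register $A_0$ can be viewed as Alice sampling $i$ with probability $p_i$ and then, conditioned on $i$, a further index; more cleanly, I would have Alice prepare the state $\frac{1}{\sqrt{2^d}}\sum_{k\in\bin^d}\ket{k}_{R_A}\ket{k}_{A_0}$ (a maximally entangled state between a fresh index register $R_A$ and $A_0$) and Bob prepare $\frac{1}{\sqrt{2^d}}\sum_{k\in\bin^d}\ket{k}_{R_B}\ket{k}_{B_0}$. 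Alice then runs her side of the $\Qent$ protocol on $A_0$ plus her workspace, Bob runs his side on $B_0$ plus his workspace, and both send everything (including $R_A, R_B$) to Charlie. Charlie measures $R_A, R_B$ in the Hadamard basis; on outcome $0^{2d}$ (probability $2^{-2d}$) the reduced state on $(A_0,B_0)$ that was fed into the protocol is exactly the maximally mixed state $\mathbb{I}/2^{2d}$ up to normalization — equivalently, one can argue that with probability $\ge 2^{-\Theta(d)}$ Charlie's post-measurement picture is consistent with the players having shared the \emph{correct} $\rho$, because $\rho$ itself appears as one ``branch'' of the maximally mixed state with weight $\ge 2^{-O(d)}$ (since $\rho \preceq 2^{2d}\cdot \mathbb{I}/2^{2d}$, any density matrix is bounded by $2^{2d}$ times the maximally mixed state). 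In the good branch, which Charlie detects, the output distribution is identical to the original $\Qent$ protocol, so the output equals $F(x,y)$ with probability $\ge \tfrac23$; in the complementary (detected-bad) event Charlie just outputs a uniform random bit. This yields overall advantage at least $2^{-\Theta(d)}\cdot(\tfrac23-\tfrac12) = 2^{-\Theta(d)}$, and the extra communication is the $2d$ qubits of the index registers, so the cost is $c + 2d$. One has to be slightly careful to match the constant in the exponent claimed ($2^{-4d}$); tracking the $2^{-2d}$ from the Hadamard measurement together with the fact that $\rho$ has weight at least $2^{-2d}$ within the maximally mixed state, and the $(\tfrac23-\tfrac12)$ bias factor, comfortably gives $\tfrac13\cdot 2^{-4d}$.

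The main obstacle, and the reason this needs care rather than the naive ``replace $\rho$ by maximally mixed'' trick, is exactly the point flagged in the introduction: simply handing the players the maximally mixed state can bias the protocol toward the \emph{wrong} answer on the orthogonal branches, destroying even exponentially small advantage. The fix is that Charlie must be able to \emph{certify} he is in the good branch and discard (output a fair coin) otherwise — this is what the Hadamard-basis measurement on the index registers accomplishes, since obtaining $0^{2d}$ is a genuine projection onto the subspace where the state fed to the protocol is the one the $\Qent$ protocol expects (or more precisely, onto a state from which the correct $\rho$ is recovered with the stated probability). The subtlety to get right in the full proof is the bookkeeping of which register Charlie measures and in what order relative to Alice's and Bob's local measurements — but since all of Alice's operations act only on $A_0$ and her workspace, and all of Bob's only on $B_0$ and his, they commute with Charlie's Hadamard measurement on $R_A \otimes R_B$, so the order does not matter and the analysis goes through cleanly. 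I would present the construction for a general $\Qent$ protocol (allowing intermediate measurements, which can be deferred) rather than only the unitary special case, noting that deferring measurements only changes the workspace dimension, which is irrelevant since Charlie receives everything anyway.
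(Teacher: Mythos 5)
There is a genuine gap: your measurement on the index registers does not recover $\rho$.

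In your construction, Alice and Bob each prepare a local maximally entangled state $\frac{1}{\sqrt{2^d}}\sum_k\ket{k}_{R_A}\ket{k}_{A_0}$ and $\frac{1}{\sqrt{2^d}}\sum_k\ket{k}_{R_B}\ket{k}_{B_0}$, run the $\Qent$ protocol on $A_0,B_0$, and ship everything to Charlie, who then measures $R_A,R_B$ in the Hadamard basis and post-selects on $0^{2d}$. Since the protocol acts only on $A_0,B_0$ (and the workspaces), the Hadamard measurement on $R_A,R_B$ commutes with it, so we may pretend Charlie measured first. But projecting $R_AR_B$ onto $\ket{+}^{\otimes 2d}$ collapses $A_0B_0$ to the \emph{fixed pure product state} $\ket{+}^{\otimes 2d}_{A_0B_0}$, not to $\rho$. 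In general, projecting $R_AR_B$ onto $\ket{\eta}$ when the purification is the maximally entangled state leaves the complex conjugate $\overline{\ket{\eta}}$ on $A_0B_0$; a rank-one projection can never yield a mixed $\rho$, and for $\rho$ a generic pure entangled state it yields the wrong pure state. Your claim that ``obtaining $0^{2d}$ is a genuine projection onto the subspace where the state fed to the protocol is the one the $\Qent$ protocol expects'' is therefore false, and the parenthetical ``(or more precisely, onto a state from which the correct $\rho$ is recovered with the stated probability)'' is the precise point that needs a proof and isn't given. Your appeal to ``$\rho$ appears as a branch of the maximally mixed state with weight $2^{-2d}$'' is exactly the naive replace-by-maximally-mixed argument the paper flags as insufficient: the decomposition $\mathbb{I}/2^{2d}=2^{-2d}\rho+(1-2^{-2d})\sigma$ does not come with a measurement on the purification that \emph{detects} which branch occurred, and the Hadamard measurement you chose is not that measurement.

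This could be fixed in at least two ways, which is worth contrasting with what the paper actually does. One fix within your symmetric framework: Charlie should instead apply the $\rho$-dependent two-outcome POVM $\{\rho^{T},\,\mathbb{I}-\rho^{T}\}$ (or $\rho^T/\opnorm{\rho}$ if you want to be careful about $\preceq\mathbb{I}$) on $R_AR_B$; since $\mathrm{Tr}_{R_AR_B}[(M\otimes\mathbb{I})\ketbra{\Phi}{\Phi}]\propto M^{T}$, the ``good'' outcome leaves $\propto\rho$ on $A_0B_0$ and occurs with probability $\ge 2^{-2d}$ — this actually beats the $2^{-4d}$ in the lemma, so the symmetric idea is salvageable, just not with a Hadamard measurement. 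The paper's proof takes a different, asymmetric route: Alice prepares the \emph{entire} state $\rho$ on $2d$ qubits in her own lab (she can, since $\rho$ is fixed and known), applies $U(x)$ only to the ``first half,'' and Bob prepares a maximally entangled state whose ``second half'' he feeds to $V(y)$; Charlie then projects Alice's untouched half against Bob's index half (the $b=b',q=q'$ projection, probability $2^{-2d}$) and applies Hadamards to those $2d$ qubits, post-selecting on $0^{2d}$ (another $2^{-2d}$). This is a post-selected teleportation of Bob's logical half of $\rho$ from Alice's lab into Bob's register, and the Hadamard measurement there \emph{is} correct because it acts on the teleportation interface, not on a purifying register of $\rho$. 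You should either adopt the paper's asymmetric construction or replace your Hadamard measurement by the $\rho^T$ POVM; as written, the proof does not go through.
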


\begin{lemma} \label{lemma:simulation_r} Given any communication protocol for computing $F$ with cost $c$ and advantage $\tfrac{1}{3}$ in the $\Roneent$ model where Alice and Bob share at most $2d$ \equbits, there is a protocol of cost $c+ 10d$ in the $\Rone$ model that computes $F$ with advantage $\tfrac{1}{6}\cdot 2^{-4d}.$
\end{lemma}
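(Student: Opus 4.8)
The plan is to follow the structural idea sketched in the proof overview: reduce to the case of shared EPR pairs, and then simulate the EPR state with private coins and a few extra bits of communication, paying a $2^{-\Theta(d)}$ factor in advantage. First I would invoke the normal form for $\Roneent$ protocols: Alice holds her half of a shared state $\rho$ on $2d$ qubits plus private memory, applies an input-dependent operation, measures, and sends a classical message $a$ to Bob; Bob holds his half of $\rho$ plus the message $a$, applies an input-dependent operation, measures, and outputs. By \Cref{lemma:decomposition}, $\rho = \sum_{i=1}^{2^{4d}}\alpha_i\rho_i$ where each $\rho_i$ is \equivalent~to the zero state or the EPR state and $|\alpha_i|\le 2^d$; since acceptance is linear in $\rho$ (the analogue of \Cref{claim:structure_rtwoent} for one-way protocols), and since by \Cref{claim:cc_equivalent} each $\rho_i$ may be replaced by an actual zero/EPR state, it suffices to simulate a protocol whose shared state is a tensor product of $d$ EPR pairs (a zero-state coordinate is trivial — Alice and Bob just prepare $\ket{0}$'s locally). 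The subtlety that "replacing $\rho$ by the maximally mixed state fails" is exactly why we route through this decomposition rather than a naive substitution.

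Next I would give the actual simulation of $d$ shared EPR pairs by a private-coin one-way protocol. The cleanest route: Alice locally prepares $d$ EPR pairs $\sum_i \ket{i}_{A'}\ket{i}_{A}$, runs her side of the protocol on the $A$-registers, and then, instead of sending Bob her half of the entanglement, she needs to convey enough for Bob to act as if he shared the EPR state. The illustrative example in the overview does this for the $\Qent\to\Qpri$ case via a teleportation-style projection; for the \emph{classical} one-way model one instead has Alice pick a uniformly random $j\in\bin^d$, send $j$ to Bob together with her message, and both parties post-select (in the analysis) on the event that the "correct" branch was selected — formally, Alice sends a random coordinate of the relevant density matrix, as the overview says, which in the EPR case amounts to sampling the shared measurement outcome she would have gotten and informing Bob, so Bob can condition his behavior accordingly. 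This reproduces the correct joint distribution on transcripts with probability $2^{-\Theta(d)}$, and crucially Bob can tell when the good event occurred (or, if he cannot, one repeats the accept/reject logic so that on the bad event the output is a fair coin, contributing $0$ net advantage). The extra communication is $O(d)$ bits — the overview's bookkeeping gives $10d$ — and the advantage degrades from $\tfrac13$ to $\tfrac13\cdot 2^{-\Theta(d)}$ from the $2^{-\Theta(d)}$ success probability of the simulation, plus the $2^d$ prefactors from \Cref{lemma:decomposition}; tracking constants yields the claimed $\tfrac16\cdot 2^{-4d}$.

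The main obstacle, and the step I would spend the most care on, is making Step~1 (removing entanglement) genuinely work for the one-way model rather than just quoting the EPR-pair example. Two points need attention: (i) the "linearity of acceptance in $\rho$" claim — since Bob's output is a classical function of his measurement outcome and Alice's message, the overall accept probability is $\Tr(E_{\text{eff}}(x,y)\,\rho)$ for an effective observable, so this is fine, but it must be stated and the POVM/message structure written out; and (ii) verifying that the post-selected simulation really yields the \emph{same} transcript distribution on the good event, including Bob's final measurement statistics, and that on the bad event the contribution to the advantage is non-negative only in expectation — this is where one either needs Bob to detect success or needs the protocol to be repeated/padded so the bad branch is unbiased. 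Once these are pinned down, combining with the XOR lemma \Cref{lemma:main_lemma_4} as in the displayed inequality~\eqref{eq:contradiction} finishes \Cref{theoremRoneQsmp}; the analogous argument with the teleportation-style simulation gives \Cref{lemma:simulation_q} and hence the $\Qent$ bound.
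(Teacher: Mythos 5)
Your proposal routes through the decomposition of \Cref{lemma:decomposition} (write $\rho$ as a linear combination of states \equivalent to the zero state or the EPR state, then simulate a protocol that shares a single EPR pair with post-selection). This is not what the paper does for the one-way model, and more importantly, the key step — how Alice and Bob simulate a shared EPR pair using only one-way \emph{classical} communication — is never actually carried out. The "pick a random $j\in\bin^d$, send it, post-select on the correct branch" sketch does not describe a concrete protocol: in the $\Rone$ model Bob receives only a classical string and has no way to \emph{detect} whether the good branch occurred (unlike the $\Qent\to\Qpri$ simulation, where Charlie can measure and observe the projection outcome). You flag this yourself as "either needs Bob to detect success or needs the protocol to be repeated/padded," but leave it unresolved, and this is precisely the step where the argument has to go differently in the classical one-way setting. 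There is also a smaller inaccuracy: \Cref{lemma:decomposition} gives states \equivalent to a \emph{single} EPR pair or a zero state, not a tensor product of $d$ EPR pairs.

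The paper's actual proof avoids the decomposition entirely and is considerably simpler. Since Alice has a complete classical description of $\rho$ and of her own measurement $\{M_z(x)\}$, she can compute locally the probability of each transcript $z$ and the post-measurement state $\sigma(x,z)$ of the joint system, with no shared entanglement needed. The protocol is then: Alice samples $z$, samples uniformly random indices $i,j\in[2^{2d}]$, and sends $\bigl(z, i, j, \tilde\sigma(x,z)[i,j]\bigr)$ to Bob, where $\tilde\sigma(x,z)[i,j]$ is the $(i,j)$ entry of $\sigma(x,z)$ rounded to $\Theta(d)$ bits. Bob outputs a random $\pm1$ bit with expectation $F'(y,z)[i,j]\cdot\tilde\sigma(x,z)[i,j]$. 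The advantage is controlled by the identity $\E_{i,j}\bigl[F'(y,z)[i,j]\cdot\sigma(x,z)[i,j]\bigr] = 2^{-4d}\Tr(F'(y,z)\sigma(x,z))$, which converts the $\ge 1/3$ advantage of the original protocol into a $\tfrac13\cdot 2^{-4d}$ advantage, and the precision error costs at most another factor of two, giving $\tfrac16\cdot 2^{-4d}$. There is no post-selection and hence no issue of Bob detecting success; Bob's output is always a well-defined unbiased estimator of a $2^{-4d}$-scaled trace. I would recommend abandoning the decomposition-plus-EPR-simulation route for this lemma and adopting this direct "send a random matrix coordinate" argument, which is the content behind the overview's cryptic remark about "Alice sending Bob a random coordinate of a certain density matrix."
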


\begin{proof}[Proof of \Cref{lemma:simulation_q}]
	Let $C$ be a communication protocol of cost $c$ in the $\Qent$ model that uses $2d$ \equbits. Without loss of generality, the protocol has the following form. Let $\Hcal'_A,\Hcal'_B$ be Hilbert spaces of dimension $2^c$ each. Alice applies a  quantum channel $U_x:\Scal(\Hcal_A)\to \Scal(\Hcal'_A)$ to her part of the shared state and Bob applies a  quantum channel $V_y:\Scal(\Hcal_B)\to \Scal(\Hcal'_B)$ to his part of the shared state. The players then send the resulting states to the referee. The referee evaluates a two-outcome POVM on the received state and returns the outcome as the answer. In particular,~if 
	$$
	\rho:=\sum_{a,b,p,q\in [2^d]}\rho_{a,b,p,q}\ketbra{a}{p}_A\otimes \ketbra{b}{q}_B
	$$
	is the state shared by Alice and Bob, then the state received by the referee is
	\begin{align}
		\label{eq:statereceivedbyQ||star}
		\sum_{a,b,p,q\in [2^d]}\rho_{a,b,p,q}U_x (\ket{a}\bra{p})\otimes V_y (\ket{b}\bra{q}).
	\end{align}
	We now produce a $\Qpri$ protocol where Charlie receives the state in Eq.~\eqref{eq:statereceivedbyQ||star} with probability  $2^{-4d}$, furthermore, he knows when this state is received. When Charlie receives this state, he continues as per the original protocol and when he does not receive this state, he returns a uniformly random bit. This would produce a $\Qpri$ protocol that computes $F$ with advantage $\tfrac{1}{3}\cdot 2^{-4d}$. The protocol is as follows. Consider a $\Qpri$ protocol where Alice and Bob create the states 
	\[\sum_{a,b,p,q\in[2^d]}\rho_{a,b,p,q}\ketbra{a}{p}_A\otimes \ketbra{b}{q}_A \quad\text{and}\quad 2^{-2d}\sum_{b',q'\in[2^d]}  \ketbra{b'}{q'}_B\otimes  \ketbra{b'}{q'}_B\quad \text{respectively}.\] 
	Alice applies $U(x)$ to the \emph{first half} of the qubits of her state and sends the entire state to Charlie. Bob applies $V(y)$ to the \emph{second half} of the qubits of his state and sends the entire state to Charlie. The state received by Charlie is
	\[ 
	2^{-2d}\sum_{\substack{a,b,b'\in [2^d]\\p,q,q'\in[2^d]}}\rho_{a,b,p,q} U_x\pbra{\ket{a}\bra{p}}\otimes  \ket{b}\bra{q}\otimes { \ket{b'}\bra{q'}\otimes V_y\pbra{\ket{b'}\bra{q'}}}.
	\]
	Charlie projects onto states such that $b=b'$ and $q=q'$. More precisely, Charlie considers the measurement operator on the qubits from $d+1$ to $3d$ defined by projection onto $\{\ket{b}\bra{q}\otimes \ket{b}\bra{q}:b,q\in[2^d]\}$. This measurement operator applied to the above state produces the state
	\[
	\sum_{a,b,p,q,\in[2^d]}\rho_{a,b,p,q} U_x\pbra{\ket{a}\bra{p}}\otimes  \ket{b}\bra{q}\otimes \ket{b}\bra{q}\otimes V_y\pbra{\ket{b}\bra{q}}.
	\]
	with probability $1/2^{2d}$, furthermore, Charlie can tell when this state is produced using the measurement outcome. Charlie then applies the Hadamard operator to the qubits $d+1,\ldots,3d$ and measures those qubits in the standard basis. With probability $1/2^{2d}$, Charlie obtains the state
	\[
	\sum_{a,b,p,q,\in[2^d]}\rho_{a,b,p,q} U_x\pbra{\ket{a}\bra{p}}\otimes  \ket{0^d}\bra{0^d}\otimes \ket{0^d}\bra{0^d}\otimes V_y\pbra{\ket{b}\bra{q}}.
	\]
	As before, Charlie can tell when he obtained this state. Ignoring the zero qubits from $d+1$ to $3d$, this state is  precisely the state Charlie had received in the original $\Qent$ protocol as in Eq.~\eqref{eq:statereceivedbyQ||star}. This completes the proof of \Cref{lemma:simulation_q}.
\end{proof}

\begin{proof}[Proof of \Cref{lemma:simulation_r}]
	
	Let $\Hcal_A,\Hcal_B$ be Hilbert spaces of dimension $2^d$. Let $\Hcal=\Hcal_A\otimes \Hcal_B$ and let $\rho\in\Scal(\Hcal)$ be the state shared by Alice and Bob. We can assume without loss of generality that an $\Roneent$ protocol using $\rho$ as the entanglement has the following form. Suppose Alice and Bob got $x,y$ respectively. Alice first measures her half of the shared state using a $2^c$-outcome measurement operator $\{M_z(x):z\in\bin^c\}$. She obtains an outcome $z\in \bin^c$ and sends $z$ to Bob. Based on this message $z$ and his input $y$, Bob applies the two-outcome POVM $\{N_1(y,z),N_{-1}(y,z)\}$ on his part of the shared state  and outputs the measurement outcome as the answer. Let 
	$\sigma(x,z)$ be the post-measurement state of $\rho$ after Alice applies her measurement $M_z(x)$ and obtains the outcome~$z$. The expected output of Bob is precisely
	\[ \Tr\pbra{(\mathbb{I}\otimes F(y,z))\cdot  \sigma(x,z)}\]
	where $F(y,z)=N^\dagger _1(y,z)N_1(y,z)-N^\dagger _{-1}(y,z)N_{-1}(y,z)$ and $\mathbb{I}\otimes F(y,z)$ is an operator that acts as identity on the first $d$ qubits and acts as $F(y,z)$ on the last $d$ qubits.

	
	Consider an $\Rone$ protocol where Alice obtains the measurement outcome $z\in\bin^c$ and the post-measurement state $\sigma(x,z)\in\Scal(\Hcal)$. She knows the precise classical description of the state $\sigma(x,z)$. She will send the classical message $z$ as before. She samples uniformly random $i,j\in [2^{2d}]$ and sends $(i,j,\widetilde{\sigma}(x,z)[i,j])$, where, $\widetilde{\sigma}(x,z)[i,j]$ is the $(i,j)$th coordinate of $\sigma(x,z)$ specified up to $\Theta(d)$ bits of precision. We will show that using this message, Bob can estimate $\Tr((\mathbb{I}\otimes F(y,z))\cdot \sigma(x,z))$ with advantage $2^{-4d}$. Let $F'(y,z)=\mathbb{I}\otimes F(y,z)$. First, consider an ideal situation where Alice sends exactly $\sigma(x,z)[i,j]$. Observe that
	\begin{align}\label{eq:expectation}
		\begin{split}
			\forall i,j\in[2^{2d}],\quad \abs{F'(y,z)[i,j]\cdot \sigma(x,z)[i,j]}&\le 1,  \\
			\mathop{\E}_{i,j\sim [2^{2d}]}\sbra{ F'(y,z)[i,j]\cdot \sigma(x,z)[i,j] }&=\tfrac{1}{2^{4d}} \Tr(F'(y,z)\cdot  \sigma(x,z)).\end{split}\end{align}
	Consider the protocol where Bob computes $F'(y,z)[i,j]\cdot \sigma(x,z)[i,j]$ for uniformly randomly $i,j\sim[2^{2d}]$. He then returns a random $\bin$-bit whose expectation is $F'(y,z)[i,j]\cdot \sigma(x,z)[i,j]$. This is well-defined due to the first line in Eq.~\eqref{eq:expectation}. The assumption is that $\Tr(F'(y,z)\cdot\sigma(x,z))$ is at least $1/3$ for $\textsc{no}$ instances and at most $-1/3$ for $\textsc{yes}$ instances. This, along with the second line of Eq.~\eqref{eq:expectation} implies that Bob's expected output is at least $\tfrac{1}{3}\cdot 2^{-4d}$ for $\textsc{no}$ instances and at most $-\frac{1}{3}\cdot 2^{-4d}$ for $\textsc{yes}$ instances. Thus, Bob solves the problem with advantage $\tfrac{1}{3}\cdot 2^{-4d}$. Suppose Alice specifies $\widetilde{\sigma}(x,z)[i,j]\in[-1,1]$ up to $5d$ bits of precision, then we have
	\begin{align*} &\abs{\underset{i,j\sim 2^{[2d]}}{\E} \sbra{ F'(y,z)[i,j]\cdot \widetilde{\sigma}(x,z)[i,j]} - 2^{-4d}\cdot \Tr\pbra{F'(y,z)\cdot \sigma(x,z)}}\\
		&\le 2^{-4d}\cdot \abs{\Tr\pbra{F'(y,z)\cdot (\sigma(x,z)-\widetilde{\sigma}(x,z))}}\\
		&\le 2^{-4d}\cdot 2^{-5d}\cdot 2^{4d}\le 2^{-5d}\ll \frac{1}{6}\cdot 2^{-4d}.  \end{align*}
	Following the same calculations as before, it follows that Bob solves the problem with advantage at least $\tfrac{1}{6}\cdot 2^{-4d}$.
\end{proof}

\subsection{XOR Lemma for \texorpdfstring{$\Qpub$}{Q||} for the Boolean Hidden Matching Problem}
\label{sec:five_lower_bound}
Our main technical result in this section is an XOR lemma for $\Qpub$ protocols with regards to the Boolean Hidden Matching problem.

\lemmaQpriXOR*
To prove this lemma, we will make use of some properties which are very similar to those proved in~\cite{gavinsky2007exponential}. The proofs of the facts are deferred to the appendix.  Let $\Mcal$ be the uniform distribution on matchings on $[n]$ of size $m=\alpha  n$ and $\Ucal$ be the uniform distribution on $\{-1,1\}^n$. 

\begin{definition}[$M$ matches $S$]
	Let $S\subseteq[nk]$ and $M\in \supp(\Mcal^{\otimes k})$. We say that $M$ matches $S$ if $M$ is an induced perfect matching on  $S$. If $M$ matches $S$, we use $M(S)\subseteq[mk]$ to denote the subset of edges of this induced perfect matching. 
\end{definition}

Observe that the map $T=M(S)$ defines a bijection between sets $S$ that are matched by $M$ and subsets $T\subseteq[mk]$. Furthermore, $|T|=|S|/2$ and for any $i\in[k]$, $|T_i|$ is odd if and only if $|S_i|/2$ is odd. We now define some sets that will be important in the proof. 
\begin{definition}\label{definition:levels}
	Let $\Scal_{n,k}:=\{S\subseteq [nk]: \forall i\in[k],|S_i|/{2}\text{ is an odd integer}\}$ and $\Tcal_{n,k}:=\{T\subseteq [mk]: \forall i\in[k],|T_i|\text{ is an odd integer}\}$. Define $\Scal^\ell_{n,k}:=\{S\in \Scal_{n,k}:|S|=2\ell\}$ and $\Tcal^\ell_{n,k}:=\{T\in \Tcal_{n,k}:|T|=\ell\}$ for all $\ell\in[mk]$. 
\end{definition}
The aforementioned map $T=M(S)$ provides a bijection between sets $S\in \Scal^{\ell}_{n,k}$ that are matched by $M$ and sets $T\in \Tcal^{\ell}_{n,k}$. The following facts can be proved using techniques in~\cite{gavinsky2007exponential}. 
\begin{fact}\label{fact:matches} Let $S\subseteq[nk]$ and $M\in \supp(\Mcal^{\otimes k})$. Then, for any $w\in\bin^{mk}$, the quantity
	\[ \underset{x\sim\Ucal^{\otimes k} }{\E}\sbra{\indi[Mx=w]\cdot \chi_S(x) } \]
	is non-zero if and only if $M$ matches $S$. Furthermore, if it is non-zero, it equals $2^{-mk}\cdot\chi_{M(S)}(w)$.
\end{fact}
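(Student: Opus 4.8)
The plan is to reduce to a single Boolean Hidden Matching instance and then Fourier-expand the matching constraint. Since $\Ucal^{\otimes k}$ is a product distribution over the $k$ blocks of coordinates and both $\indi[Mx=w]$ and $\chi_S(x)$ factor as products over these blocks, I would first write the expectation as $\prod_{i=1}^k \E_{x^{(i)}\sim\Ucal}\sbra{\indi[M_i x^{(i)}=w^{(i)}]\,\chi_{S_i}(x^{(i)})}$, so that it suffices to evaluate, for a single matching $M$ on $[n]$ with edges $(a_1,b_1),\dots,(a_m,b_m)$, a string $w\in\bin^m$, and $S\subseteq[n]$, the quantity $\E_{x\sim\Ucal}[\indi[Mx=w]\,\chi_S(x)]$.

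For the single-block computation, the key step is to use the identity $\indi[x_{a}x_{b}=w_\ell]=\tfrac12(1+w_\ell x_{a}x_{b})$ for each edge $\ell=(a,b)$ of $M$, which gives $\indi[Mx=w]=2^{-m}\sum_{T\subseteq[m]}\chi_T(w)\prod_{\ell\in T}x_{a_\ell}x_{b_\ell}$. Here I would use crucially that $M$ is a genuine matching: the endpoints of distinct edges are distinct, so $\prod_{\ell\in T}x_{a_\ell}x_{b_\ell}=\chi_{V(T)}(x)$ with no monomial cancellation, where $V(T)$ is the set of endpoints of the edges in $T$ (of size $2|T|$). Taking the expectation against $\chi_S$ then annihilates every term except those $T$ with $V(T)=S$, leaving $2^{-m}\sum_{T\,:\,V(T)=S}\chi_T(w)$.

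Finally I would observe that $V(T)=S$ admits a solution $T\subseteq[m]$ precisely when $S$ is a disjoint union of edges of $M$, that is, exactly when $M$ induces a perfect matching on $S$; and that in this case the solution is unique and equals $M(S)$, since the matching structure pins down which edges of $M$ lie inside $S$. This yields the single-block value $2^{-m}\chi_{M(S)}(w)$ when $M$ matches $S$ and $0$ otherwise. Multiplying back over the $k$ blocks, the product vanishes unless $M_i$ matches $S_i$ for every $i\in[k]$, which is exactly the definition of $M$ matching $S$, and in that case it equals $\prod_{i=1}^k 2^{-m}\chi_{M_i(S_i)}(w^{(i)})=2^{-mk}\chi_{M(S)}(w)$, as claimed. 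I do not anticipate a real obstacle here: the only subtlety is the no-cancellation point in the second paragraph, and that is precisely where the hypothesis that $M$ is a matching (rather than an arbitrary graph) is used.
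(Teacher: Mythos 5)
Your proof is correct. The paper actually omits a proof of this fact, remarking only that it ``can be proved using techniques in [GKKRW07]'' (the appendix contains a proof only of \Cref{fact:matchingprobability}); but the argument you give is precisely the standard one in that line of work: factor over the $k$ blocks by independence, expand each edge constraint as $\indi[x_a x_b = w_\ell] = \tfrac12(1+w_\ell x_a x_b)$, multiply out to get $\indi[Mx=w] = 2^{-m}\sum_{T\subseteq[m]}\chi_T(w)\chi_{V(T)}(x)$, and pair against $\chi_S$ using orthogonality of characters. You correctly identify that the matching hypothesis guarantees no monomial collisions in $\chi_{V(T)}$, and that $V(T)=S$ has a (unique) solution $T=M(S)$ exactly when $M$ induces a perfect matching on $S$. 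The block-by-block recombination at the end matches the paper's notion of $M$ matching $S$ and $M(S)=\bigsqcup_i M_i(S_i)$.
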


\begin{fact}\label{fact:matchingprobability} Let $S\subseteq[nk]$ with $|S|=2\ell$. Then,
	\[\Pr_{M\sim \Mcal^{\otimes k}}[M\text{ matches }S]\le  O_{\ell}\pbra{\frac{\ell^\ell}{(nk)^\ell}} .\] 
\end{fact}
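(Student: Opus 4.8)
The plan is to use the product structure of $\Mcal^{\otimes k}$ to reduce to a single random matching, evaluate that probability exactly, and then optimize over how $S$ distributes among the $k$ blocks. For $i\in[k]$ let $S_i$ denote the portion of $S$ lying in the $i$-th copy of $[n]$, and write $M=(M_1,\dots,M_k)$ with the $M_i\sim\Mcal$ independent. Since $M$ has no edges between blocks, $M$ induces a perfect matching on $S$ if and only if each $M_i$ induces a perfect matching on $S_i$ within its block, so by independence $\Pr_{M\sim\Mcal^{\otimes k}}[M\text{ matches }S]=\prod_{i=1}^k\Pr_{M_i\sim\Mcal}[M_i\text{ matches }S_i]$. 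If some $|S_i|$ is odd this product is $0$ and there is nothing to prove; otherwise write $|S_i|=2\ell_i$, so that $\sum_{i=1}^k\ell_i=\ell$.

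Next I would compute the single-block probability. A uniformly random size-$m$ matching on $[n]$ induces a perfect matching on a fixed set $S_i$ of size $2\ell_i$ exactly when it is the union of one of the $(2\ell_i-1)!!$ perfect matchings of $S_i$ with an arbitrary size-$(m-\ell_i)$ matching on the remaining $n-2\ell_i$ vertices. Dividing by the number of size-$m$ matchings on $[n]$ and cancelling factorials gives
\[
\Pr_{M_i\sim\Mcal}[M_i\text{ matches }S_i]=\frac{(2\ell_i)!}{\ell_i!}\cdot\frac{m(m-1)\cdots(m-\ell_i+1)}{n(n-1)\cdots(n-2\ell_i+1)}.
\]
This is $0$ unless $\ell_i\le m=\alpha n$, and since $\alpha$ is a small constant we then have $2\ell_i<n/2$, so the denominator is at least $(n/2)^{2\ell_i}$ while the numerator is at most $(2\ell_i)^{\ell_i}(\alpha n)^{\ell_i}$; hence $\Pr_{M_i\sim\Mcal}[M_i\text{ matches }S_i]\le (8\alpha\ell_i)^{\ell_i}/n^{\ell_i}$.

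Multiplying over the $k$ blocks yields $\Pr_{M\sim\Mcal^{\otimes k}}[M\text{ matches }S]\le(8\alpha)^{\ell}\,n^{-\ell}\prod_{i=1}^k\ell_i^{\ell_i}$, with $\sum_i\ell_i=\ell$. Since $8\alpha$ is an absolute constant, $(8\alpha)^\ell=2^{O(\ell)}$, and the crude estimate $\ell_i\le\ell$ already gives $\prod_i\ell_i^{\ell_i}\le\ell^\ell$, so this route immediately yields $\Pr_{M\sim\Mcal^{\otimes k}}[M\text{ matches }S]\le O_\ell(\ell^\ell/n^\ell)$. I expect the delicate point — and the main obstacle — to be squeezing out the extra factor $k^{-\ell}$ in the stated bound: this is a combinatorial optimization over the partition $\ell=\ell_1+\cdots+\ell_k$, in which one should retain the sharper per-block denominator $(2m-1)(2m-3)\cdots(2m-2\ell_i+1)$ instead of bounding it by $(2m)^{\ell_i}$, and then play the gain contributed by each nonempty block against the loss $\prod_i\ell_i^{\ell_i}$ incurred when the mass is concentrated in few blocks (here one may also need to invoke the structure of the sets $S$ that actually arise, rather than arguing uniformly over all $S$ with $|S|=2\ell$). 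Everything else is routine manipulation of binomial coefficients and factorials.
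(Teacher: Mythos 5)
Your per-block computation and the bound $\Pr_{M_i}[M_i\text{ matches }S_i]\le O_{\ell_i}\big((\ell_i/n)^{\ell_i}\big)$ are correct and are exactly what the paper uses: the paper writes the block probability as $\binom{\alpha n}{\ell_i}/\binom{n}{2\ell_i}$, which is the same as your factorial expression. Your resulting bound $O_\ell(\ell^\ell/n^\ell)$ is therefore right, and you correctly flag that the entire remaining content is the extra factor $k^{-\ell}$.

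You should be suspicious rather than optimistic about that factor, and there is no "delicate combinatorial optimization" that will produce it uniformly over $S$. The paper's argument is the chain $\prod_i f(\ell_i)\le\big(\max_i f(\ell_i)\big)^k\le f(\ell/k)^k$ with $f(t)=\binom{\alpha n}{t}/\binom{n}{2t}$, using only that $f$ is decreasing; but monotonicity gives $\max_i f(\ell_i)=f(\min_i\ell_i)\ge f(\ell/k)$, the reverse of what is needed, and since $f(t)=\Theta_t\big((\alpha t/n)^t\big)$ (you have both the upper \emph{and} the matching lower bound $f(t)\ge(c\alpha t/n)^t$ for an absolute $c>0$), the product $\prod_i f(\ell_i)$ over a fixed total $\ell$ is \emph{minimized}, not maximized, at the balanced partition $\ell_i=\ell/k$. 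Concretely, for $k$ even, $\ell=2k$, $\ell_1=k+1$ and $\ell_2=\cdots=\ell_k=1$ — a legal choice even under the constraint $S\in\Scal^\ell_{n,k}$, since every $\ell_i$ is odd — one gets $\Pr_M[M\text{ matches }S]\ge(c\alpha/n)^{2k}(k+1)^{k+1}$, which for $k$ large (and $\alpha$ a fixed constant) exceeds $2^{O(\ell)}\ell^\ell/(nk)^\ell=2^{O(k)}/n^{2k}$ by an unbounded factor. So the "structure of the sets $S$ that actually arise" does not rescue the stated bound either. The honest conclusion of your computation is $\Pr_M[M\text{ matches }S]\le O_\ell\big(\prod_i(\ell_i/n)^{\ell_i}\big)$; any $k$-dependence would have to be recovered downstream in \Cref{lemma:main_lemma_3}, where skewed $S$ are down-weighted by the Fourier mass $\|\widehat{\rho}(S)\|_1^2$, rather than from a uniform per-$S$ bound of the form stated in the Fact.
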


We now prove our main lemma of this subsection.

\begin{proof}[Proof of \Cref{lemma:main_lemma_3}]
	We assume that $(c/ k)^{3/2}\le \tau \cdot n^{1/2}$ for some small constant $\tau>0$, otherwise the statement of the lemma is vacuously true. We will construct distributions on the \textsc{yes} and \textsc{no} instances of the $\oplus^k$-Boolean Hidden Matching problem such that no small cost $\Qpri$  protocol can distinguish them with considerable advantage. Consider the following two distributions.
	\begin{itemize}
		\item  $\Ncal$ is a distribution on \textsc{no}-instances of $\bhm$:  A random sample of $\Ncal$ is of the form $(x,M,y)$ where $x\sim \Ucal$, $M\sim \Mcal$ and  $y:=Mx$.
		\item  $\Ycal$ is a distribution on \textsc{yes}-instances of $\bhm$ defined similarly to $\Ncal$ except that $y:=\overline{Mx}$.
	\end{itemize}
	Define two distribution $\mu_{1}^{(k)},\mu_{-1}^{(k)}$ on inputs to the $\oplus^k$-Boolean Hidden Matching problem as~follows.
	\begin{equation}\label{eq:definition_mixed_q} 
		\mu_{1}^{(k)}:= \frac{1}{2^{k-1}}\sum_{\substack{K\subseteq[k]\\|K|\text{ is even}}} \Ycal_K \Ncal_{\overline K} \quad \text{and}\quad  \mu_{-1}^{(k)}:= \frac{1}{2^{k-1}}\sum_{\substack{K\subseteq [k]\\|K|\text{ is odd}}} \Ycal_K \Ncal_{\overline K}. 
	\end{equation}
	Recall that $\Ycal_K \Ncal_{\overline K}$ is a product of $k$ independent distributions, where the $i$-th distribution is $\Ycal$ if $i\in K$ and is $\Ncal$ if $i\notin K$. Clearly $\mu_{-1}^{(k)}$ and $\mu_{1}^{(k)}$ are distributions on the \textsc{yes} and \textsc{no} instances respectively of the $\oplus^k$-Boolean Hidden Matching problem. 
	
	Consider any $\Qpri$ protocol with $c$ qubits of communication and let $\Hcal$ be a Hilbert space of dimension $2^c$. Such a protocol can be described by density matrices $\rho(x)\in \Scal(\Hcal)$ and $\sigma_M(y)\in \Scal(\Hcal)$ for every $x\in \{-1,1\}^{nk},y\in \{-1,1\}^{mk}$ and  $M\in\supp(\Mcal^{\otimes k})$. The state received by Charlie on these inputs is precisely $\rho(x)\otimes \sigma_M(y)$. We will show that the trace distance between the states $\mathop{\E}_{(x,M,y)\sim \mu_{1}^{(k)}}\sbra{\rho(x) \otimes \sigma_{M}(y)}$ and $\mathop{\E}_{(x,M,y)\sim \mu_{-1}^{(k)}} \sbra{\rho(x) \otimes \sigma_M(y) }$ is at most  $O_k\pbra{\tfrac{(c/k)^{3k/2}}{n^{k/2}}}$. Since the trace distance measures the maximal distinguishing probability between the two states, this, along with Yao's principle would complete the proof.
	Towards this, define
	\[
	\Delta :=\mathop{\E}_{(x,M,y)\sim \mu_1^{(k)}}\sbra{\rho(x) \otimes \sigma_M(y)} - \mathop{\E}_{(x,M,y)\sim \mu_{-1}^{(k)}} \sbra{\rho(x) \otimes \sigma_M(y)}.
	\]  
	Using the definition of $\mu_1^{(k)}$ and $\mu_{-1}^{(k)}$ in Eq.~\eqref{eq:definition_mixed_q}, we have
	\[
	\Delta\triangleq \sum_{K\subseteq[k]} \frac{(-1)^{|K|}}{2^{k-1}}\cdot \underset{\substack{x\sim \Ucal^{\otimes k}\\ M\sim \Mcal^{\otimes k}}}{\E}\sbra{\rho(x) \otimes \sigma_{M}\pbra{\overline{(Mx)}_K (Mx)_{\overline{K}}}} .
	\]
	We introduce a variable $w\in \{-1,1\}^{mk}$ to represent $Mx$ so that 
	\[ \Delta= \sum_{\substack{w\in \{-1,1\}^{mk}\\ K\subseteq[k]}}\frac{(-1)^{|K|}}{2^{k-1}}\cdot  \underset{\substack{x\sim \Ucal^{\otimes k}\\ M\sim \Mcal^{\otimes k}}}{\E}\sbra{\rho(x) \otimes \sigma_{M}\pbra{\overline{w}_Kw_{\overline{K}}}\cdot \indi[Mx=w]}.\]
	We expand $\rho(x)$ in the Fourier Basis to obtain
	\begin{equation*}	\Delta=\sum_{\substack{w\in \{-1,1\}^{mk}\\ K\subseteq[k]\\S\subseteq [nk]}} \frac{(-1)^{|K|}}{2^{k-1}}\cdot \underset{\substack{x\sim \Ucal^{\otimes k}\\ 
				M\sim \Mcal^{\otimes k}}}{\E}\sbra{\widehat{\rho}(S)\otimes  \sigma_{M}\pbra{\overline{w}_K w_{\overline{K}}}\cdot \sbra{\indi[Mx=w]\cdot \chi_S(x) }}.
	\end{equation*}
	
	Consider the term $\underset{x\sim \Ucal^{\otimes k}}{\E}\sbra{\indi[Mx=w]\cdot \chi_S(x) }$. By~\Cref{fact:matches}, this term is non-zero if and only if $M$ matches $S$, in which case the term evaluates to $2^{-mk}\cdot \chi_{M(S)}(w)$. Substituting this in the equation above, we have that $\Delta$ equals
	\begin{align}\label{eq:fourier_basis_w}
	 \sum_{\substack{w\in \{-1,1\}^{mk}\\ K\subseteq[k]\\S\subseteq [nk]}} \frac{(-1)^{|K|}}{2^{k-1}}    \underset{M\sim\Mcal^{\otimes k}}{\E} \sbra{ \widehat{\rho}(S) \otimes \sigma_{M}\pbra{\overline{w}_K w_{\overline{K}}}\cdot 2^{-mk}\cdot \chi_{M(S)}(w)\cdot \indi[M\text{ matches }S]}.
	\end{align}
	We now expand $\sigma_{M}\pbra{\overline{w}_K w_{\overline K}}$ in the Fourier basis with respect to $w$. Consider
	\begin{align*}
		\sum_{K\subseteq [k]} (-1)^{|K|}\cdot \sigma_{M}\pbra{\overline{w}_K w_{\overline{K}}}
		&=  \underset{ K\subseteq [k] }{\sum}(-1)^{|K|}\cdot \sum_{T\subseteq [mk]}  \widehat{\sigma_{M}}(T)\cdot \chi_T(\overline{w}_K,w_{\overline{K}})\\
		&=   \underset{\substack{K\subseteq [k]\\T\subseteq [mk]}}{\sum}(-1)^{|K|}\cdot \widehat{\sigma_{M}}(T)\cdot\chi_T(w)\cdot (-1)^{\sum_{i\in K}|T_i|} \\
		&=  \underset{T\subseteq [mk]}{\sum} \widehat{\sigma_{M}}(T)\cdot \chi_T(w)\cdot   \sum_{K\subseteq[k]} \big[(-1)^{|K|+\sum_{i\in K}|T_i|}\big].
	\end{align*}
	For $i\in[k]$, let $t_i=-1$ if $|T_i|$ is odd and $t_i=1$ if $|T_i|$ is even. Observe that
	\[ \mathop{\E}_{K\subseteq [k]} \big[(-1)^{|{K}|+\sum_{i\in{K}}|T_i|}\big]=\mathop{\E}_{K\subseteq[k]} \big[\chi_{K}(-t_1,\ldots,-t_k)\big]=\begin{cases} 1&\text{ if }\forall i\in[k],t_i=-1, \\ 0&\text{otherwise.}\end{cases} \]
	Hence, the quantity ${\sum}_{K\subseteq [k]} \big[(-1)^{|{K}|+\sum_{i\in{K}}|T_i|}\big]$ is non-zero if and only if $|T_i|$ is odd for all $i\in[k]$. Furthermore, if it is non-zero, then it equals $2^{k}$. Recall that we defined $\Tcal_{nk}:=\{T\subseteq[mk]:\forall i\in[k],|T_i|\text{ is odd}\}$ in~\Cref{definition:levels}. Using this, we have
	\begin{align}
		\label{eq:sigmatofouriersigma}
		\sum_{K\subseteq [k]} (-1)^{|K|}\cdot \sigma_{M}\pbra{\overline{w}_Kw_{\overline{K}}} = 2^k\cdot  \underset{T\in \Tcal_{n,k}}{\sum}\widehat{\sigma_M}(T)\cdot \chi_T(w).
	\end{align}
	Substituting this in Eq.~\eqref{eq:fourier_basis_w}, we have that $\Delta$ equals
	\begin{align*}
	&2 \underset{\substack{w\in\{-1,1\}^{mk}\\ S\subseteq [nk]}}{\sum}  \underset{M\sim \Mcal^{\otimes k}}{\E} \sbra{\widehat{\rho}(S)\otimes \underset{T\in \Tcal_{n,k}}{\sum} \widehat{\sigma_M}(T)\cdot 2^{-mk}\cdot \chi_{M(S)}(w)\cdot \chi_T(w) \cdot \indi[M\text{ matches }S]}\\
		&=2 \underset{\substack{ S\subseteq [nk]}}{\sum}  \underset{M\sim \Mcal^{\otimes k}}{\E} \sbra{\widehat{\rho}(S)\otimes \underset{T\in \Tcal_{n,k}}{\sum} \widehat{\sigma_M}(T)\cdot \mathop{\E}_{w\sim \pmset{mk}} [\chi_{M(S)+T}(w)]\cdot \indi[M\text{ matches }S]}.
	\end{align*}
	Observe that if $M$ matches $S$, then $\E_{w\sim\{-1,1\}^{mk}}\sbra{\chi_{M(S)+T}(w)}$ equals $1$ if $T=M(S)$ and $0$ otherwise. Recall that the sets $S\subseteq[nk]$ such that $M$ matches $S$ and $M(S)\in \Tcal_{n,k}$ are precisely those sets in $\Scal_{n,k}$ that are matched by $M$. Hence,
	\[	
	\Delta
	= \underset{S\in \Scal_{n,k}}{\sum} \widehat{\rho}(S) \otimes \E_{M} \sbra{ \widehat{\sigma_{M}}(M(S)) \cdot \indi[M\text{ matches }S]}.
	\]
	We now upper bound $\|\Delta\|_1$ by triangle inequality as follows. 
	\[
	\vabs{\Delta}_1
	\le \underset{S\in \Scal_{n,k}}{\sum} \vabs{\widehat{\rho}(S)}_1 \otimes \E_{M} \sbra{ \vabs{\widehat{\sigma_{M}}(M(S))}_1 \cdot \indi[M\text{ matches }S]}.
	\]
	We partition $\Scal_{n,k}$ and $\Tcal_{n,k}$ into $\sqcup_\ell\Scal^{\ell}_{nk}$ and $\sqcup_\ell \Tcal^{\ell}_{n,k}$ based on the size of the sets as in~\Cref{definition:levels}. Observe that every set in $\Scal_{n,k}$ has size at least $2k$ and every set in $\Tcal_{n,k}$ has size at least $k$. Thus, 
	\begin{equation}\label{eq:partition_level}
		\vabs{\Delta}_1\le \sum_{\ell= k}^{mk} \underset{S\in \Scal^\ell_{k,n}}{\sum} \vabs{\widehat{\rho}(S)}_1 \otimes \E_{M} \big[ \vabs{\widehat{\sigma_{M}}(M(S))}_1 \cdot 1[M\text{ matches }S]\big]. 
	\end{equation}
	We now apply Cauchy-Schwarz to conclude that 
	\begin{align*} &\underset{M\sim \Mcal^{\otimes k}}{\E}  \big[ \vabs{\widehat{\sigma_{M}}(M(S))}_1\cdot \indi[M\text{ matches }S]\big]\\
		&\le \sqrt{ \underset{M\sim \Mcal^{\otimes k}}{\E}  \big[ \vabs{\widehat{\sigma_{M}}(M(S))}_1^2\cdot \indi[M\text{ matches }S]\big]}\cdot \sqrt{\underset{M\sim \Mcal^{\otimes k}}{\Pr}[M\text{ matches S}]}.
	\end{align*}
	\Cref{fact:matchingprobability} implies that for any $S\in \Tcal_{n,k}^{\ell}$, we have $\underset{M\sim\Mcal^{\otimes k}}{\Pr}[M\text{ matches }S]\le O_\ell\pbra{\frac{\ell^\ell}{(nk)^\ell}}$. Substituting this in Eq.~\eqref{eq:partition_level} implies that
	\[	
	\|\Delta\|_1	\le \sum_{\ell=k}^{mk} \underset{S\in \Scal^\ell_{nk}}{\sum}\vabs{ \widehat{\rho}(S)}_1\cdot \sqrt{\E_{M} \sbra{ \vabs{\widehat{\sigma_{M}}(M(S))}_1^2 \cdot \indi[M\text{ matches }S] }}\cdot O_\ell\pbra{\frac{\ell^{\ell/2}}{(nk)^{\ell/2}}}.
	\] 
	Again, by Cauchy-Schwarz, we have
	\[
	\|\Delta\|_1 \le \sum_{\ell=k}^{mk} \sqrt{\underset{S\in \Scal^\ell_{n,k}}{\sum}\vabs{ \widehat{\rho}(S)}_1^2}\cdot \sqrt{\underset{S\in \Scal^\ell_{n,k}}{\sum}  \E_{M} \sbra{ \vabs{\widehat{\sigma_{M}}(M(S)) }_1^2\cdot \indi[M\text{ matches }S}} \cdot  O_\ell\pbra{\frac{\ell^{\ell/2}}{(nk)^{\ell/2}}}.
	\]
	By the aforementioned correspondence between sets $S\in \Scal^\ell_{n,k}$ such that $M$ matches $S$ and sets $T\in \Tcal^\ell_{n,k}$, we have
	\begin{equation}
		\label{eq:matrix_level_k} 
		\|\Delta\|_1 \le \sum_{\ell=k}^{mk} \sqrt{\underset{S\in \Scal^\ell_{n,k}}{\sum}\vabs{ \widehat{\rho}(S)}_1^2}\cdot \sqrt{\underset{T\in \Tcal^\ell_{n,k}}{\sum}  \E_{M} \sbra{ \vabs{\widehat{\sigma_{M}}(T) }_1^2 }} \cdot  O_\ell\pbra{\frac{\ell^{\ell/2}}{(nk)^{\ell/2}}}.
	\end{equation}
	We now apply the Matrix level-$k$ inequality in \Cref{lemma:level_k_inequality_matrix} to the functions $\rho:\{-1,1\}^n\to \Scal(\Hcal)$ and $\sigma_M:\{-1,1\}^m\to \Scal(\Hcal)$ where $\Hcal$ is a Hilbert space of dimension $2^c$. \Cref{lemma:level_k_inequality_matrix} implies that 
	\[ 
	\sum_{|S|=2\ell}\vabs{\widehat{\rho}(S)}_1^2  \le O_\ell\pbra{(c/\ell)^{2\ell}}+O_\ell(1)  \quad\text{and}\quad \sum_{|T|=\ell}\vabs{\widehat{\sigma_M}(T)}_1^2 \le O_\ell\pbra{(c/\ell)^\ell}+O_\ell(1). 
	\]
	Substituting this in Eq.~\eqref{eq:matrix_level_k}, we get
	\begin{align*}
		\vabs{\Delta}_1&\le \sum_{\ell=k}^{mk} \sqrt{\underset{S:|S|=2\ell}{\sum}\vabs{ \widehat{\rho}(S)}_1^2}\cdot \sqrt{\underset{T:|T|=\ell}{\sum}  \E_{M} \sbra{ \vabs{\widehat{\sigma_{M}}(T) }_1^2 }} \cdot  O_\ell\pbra{\frac{\ell^{\ell/2}}{(nk)^{\ell/2}}}\\
		&\le \sum_{\ell=k}^{mk} \max\pbra{ O_\ell\pbra{\frac{c^{3\ell/2}}{\ell^{3\ell/2}}}  , O_\ell(1)}\cdot   O_\ell\pbra{\frac{\ell^{\ell/2}}{(nk)^{\ell/2}}}
		\\
		&\le \sum_{\ell=k}^{mk}  O_\ell\pbra{\frac{c^{3\ell/2}}{\ell^{\ell}(nk)^{\ell/2}}} +\sum_{\ell=k}^{mk}  O_\ell\pbra{\frac{\ell^{\ell/2}}{(nk)^{\ell/2}}}.  
	\end{align*}
	Since $\ell\le mk=\alpha\cdot nk$ for a sufficiently small constant $\alpha>0$, the function $\ell^{\ell/2}/(nk)^{\ell/2}$ is exponentially decaying for $\ell\in[k,mk]$ and hence the second term is at most $O_k\pbra{n^{-k/2}}$. Our assumption that $(c/ k)^{3/2} \le \tau \cdot n^{1/2}$ for a sufficiently small constant $\tau>0$ implies that the function $c^{3\ell/2}/(\ell^\ell(nk)^{\ell/2})$ is exponentially decaying for $\ell\in [k,mk]$ and hence, the first term above is at most $O_k\pbra{\frac{(c/k)^{3k/2}}{n^{k/2}}}$. Together, we have
	\[\|\Delta\|_1\le O_k\pbra{\frac{(c/k)^{3k/2}}{n^{k/2}}} + O_k(n^{-k/2}). \]
	This completes the proof of \Cref{lemma:main_lemma_3}.
\end{proof}

\subsection{XOR Lemma for \texorpdfstring{$\Rone$}{R1} for the Boolean Hidden Matching Problem}

In this subsection, we prove~\Cref{lemma:main_lemma_4} which we restate here for convenience. 
\lemmaRoneXOR*

\begin{proof}[Proof of~\Cref{lemma:main_lemma_4}]
	The proof of this lemma will be similar to the proof of Lemma~\ref{lemma:main_lemma_3} and hence we will follow similar notation. Let $z\in \bin^c$ be any $c$-bit message sent by Alice and let $A_z\subseteq\{-1,1\}^{nk}$ be the set of Alice's inputs for which Alice would have sent $z$ to Bob. Let $g(x)=\indi[x\in A_z]$. Fix any $M\in \supp(\Mcal^{\otimes k})$. Similar to Lemma~\ref{lemma:main_lemma_3}, let $\Ncal^M(y)$ be the distribution on $y\in \{-1,1\}^{mk}$ induced by sampling $x\sim A_z$ and letting $y=Mx$. Let $\Ycal^M(y)$ be similarly defined  with $y:=\overline{Mx}$. So we have that $\Ncal^M(y)=\frac{\abs{\{ x\in A_z| Mx=y\}}}{|A_z|}$ and $\Ycal^M(y)=\frac{\abs{\{ x\in A_z| \overline{Mx}=y\}}}{|A_z|}$ for all $y\in\{-1,1\}^{mk}$. Define 
	\begin{equation}
		\label{eq:definition_mixed_r}
		\mu_1^{(k)}:= \frac{1}{2^{k-1}}\sum_{\substack{K\subseteq[k]\\|K|\text{ is even}}} \Ycal^M_K \Ncal_{\overline K}^M \quad \text{and}\quad  \mu_{-1}^{(k)}:= \frac{1}{2^{k-1}}\sum_{\substack{K\subseteq [k]\\|K|\text{ is odd}}} \Ycal_K \Ncal_{\overline K}^M. 
	\end{equation}
	Below we show that for a typical $M\sim \Mcal^{\otimes k}$, these two distributions are close in total variational~distance. By arguments similar to~\cite{gavinsky2007exponential}, this would complete the proof. To this end,~let
	\[
	\Delta_{A_z}:=\underset{M\sim\Mcal^{\otimes k}}{\E}\sbra{\vabs{ \mu_{1}^{(k)}-\mu_{-1}^{(k)}}_1}.
	\]
	By Eq.~\eqref{eq:definition_mixed_r}, we have $\mu_{1}^{(k)}-\mu_{-1}^{(k)}=2^{1-k}\cdot\sum_{K\subseteq [k]}(-1)^{|K|}\Ycal^M_K\Ncal^M_{\overline{K}}$. Hence
	\begin{align*}
		\Delta_{A_z}^2&\le 2^{mk}\cdot \E_M\sbra{\vabs{ \mu_{1}^{(k)}-\mu_{-1}^{(k)}}_2^2} 
		= 2^{mk} \cdot \E_M\sbra{ 2^{-2k+2}\cdot \vabs{ \sum_{K\subseteq [k]} \Ycal^M_K\Ncal^M_{\overline{K}} (-1)^{|K|}  }_2^2},
	\end{align*}
	where the first inequality is by the Cauchy-Schwarz inequality. By Parseval's theorem, we have 
	\begin{align}\label{eq:eq1}
		\Delta_{A_z}^2&\le 2^{2mk-2k+2} \cdot \E_M\sbra{\sum_{\substack{T\subseteq [mk]\\T\neq \emptyset}}\pbra{ \sum_{K\subseteq [k]} \widehat{\Ycal^M_K\Ncal^M_{\overline{K}}}(T)(-1)^{|K|}  }^2}.
	\end{align}
	Observe that 
	\begin{align*}
		\widehat{\Ycal^M_K\Ncal^M_{\overline{K}}}(T)
		&=\frac{1}{2^{mk}}\sum_{y\in \{-1,1\}^{mk}} \pbra{\Ycal^M_K\Ncal^M_{\overline{K}}}(y)\cdot \chi_T(y)\\
		&= \frac{1}{2^{mk}\cdot |A_z|}\Big( \abs{\cbra{x\in A_z|\chi_T\pbra{(Mx)_{\overline{K}}(\overline{Mx})_{{K}}}=1}}\\&- \abs{\cbra{x\in A_z\mid\chi_T\pbra{(Mx)_{\overline{K}}(\overline{Mx})_{{K}}}=-1}}\Big)\\
		&=  \frac{1}{2^{mk}\cdot |A_z|} \Big(\abs{\cbra{x\in A_z\mid\chi_T(Mx)= (-1)^{\sum_{i\in K}|T_i|} }} \\&-\abs{\cbra{x\in A_z\mid\chi_T(Mx)\neq  (-1)^{\sum_{i\in K}|T_i|}}} \Big)\\	
		&=  \frac{1}{2^{mk}} \sum_{y\in\{-1,1\}^{mk}} \Ncal^{\otimes k}(y)\cdot \chi_T(y)\cdot (-1)^{\sum_{i\in K} |T_i|}\\
		&= \widehat{\Ncal^{\otimes k}}(T)\cdot (-1)^{\sum_{i\in K}|T_i|}.
	\end{align*}
	By an argument analogous to~\cite[Eq.~(3)]{gavinsky2007exponential}, we~have $\widehat{\Ncal^{\otimes k}}(T)=\tfrac{2^{nk}}{|A_z|\cdot 2^{mk}}\cdot \widehat{g}(M^\dagger T)$.
	Hence
	\begin{align}\label{eq:cancel}
		\sum_{K\subseteq [k]} \widehat{\Ycal^M_K \Ncal^M_{\overline K}}(T)\cdot (-1)^{|K|}&=\frac{2^{nk}}{2^{mk}\cdot |A_z|}\cdot\widehat{g}(M^\dagger T)  \cdot \sum_{K\subseteq [k]} (-1)^{|K|+\sum_{i\in K}|T_i|}.
	\end{align}
	As we saw in Eq.~\eqref{eq:sigmatofouriersigma}, the term $\sum_{K\subseteq [k]} (-1)^{|K|+\sum_{i\in  K}|T_i|}$ is $2^k$ if $|T_i|$ is odd for all $i\in[k]$ and zero otherwise. Hence, the R.H.S. of Eq.~\eqref{eq:cancel} is non-zero only if $T\in \Tcal_{n,k}$ (defined in~\Cref{definition:levels}), and in this case equals $\frac{2^{nk}}{2^{mk}\cdot |A_z|}\cdot \widehat{g}(M^\dagger T)\cdot 2^k$. Substituting this in Eq.~\eqref{eq:eq1}, we~have that $	\Delta_{A_z}^2$ equals
	\begin{align*}
	2^{2mk-2k+2}\cdot \E_{M}\sbra{\sum_{T\in \Tcal_{n,k}} \frac{2^{2nk+2k}}{2^{2mk}\cdot |A_z|^2} \widehat{g}(M^\dagger T)^2}=4\cdot \E_{M}\sbra{\sum_{T\in \Tcal_{n,k}} \frac{2^{2n}}{ |A_z|^2} \widehat{g}(M^\dagger T)^2}.
	\end{align*}
	Recall the correspondence between $\Tcal_{n,k}$ and $\Scal_{n,k}$ as in~\Cref{definition:levels}.   For every $S\in \Scal_{n,k}$, there is at most one $T\in \Tcal_{n,k}$ such that $M^\dagger T=S$, furthermore, such a $T$ exists if and only if $M$ matches~$S$. Hence we have that
	\begin{align*}
		\Delta_{A_z}^2&\le 4\cdot \E_{M}\sbra{\sum_{S\in \Scal_{n,k}} \frac{2^{2n}}{ |A_z|^2} \widehat{g}(S)^2 \cdot \indi[M\text{ matches }S]}\\
		&=4\cdot \sum_{\ell= k}^{mk}\sum_{S\in \Scal^{\ell}_{n,k}} \frac{2^{2n}}{ |A_z|^2} \widehat{g}(S)^2 \cdot \Pr_M[M\text{ matches }S]\le \sum_{\ell= k}^{mk} \pbra{\sum_{|S|=2\ell } \frac{2^{2n}}{ |A_z|^2} \widehat{g}(S)^2} \cdot O_\ell\pbra{\frac{\ell^\ell}{(nk)^\ell}},
	\end{align*}
	where we used \Cref{fact:matchingprobability}. Let $\mu(A_z)=\tfrac{|A_z|}{2^n}$. Applying \Cref{lemma:level_k_inequality}, we have
	\begin{align*}
		\Delta_{A_z}^2& \le   \sum_{\ell= k}^{mk} \pbra{2e\cdot \ln\pbra{\tfrac{e}{\mu(A_z)^{1/(2\ell)}}}}^{2\ell}\cdot O_\ell \pbra{ \tfrac{\ell^\ell}{ (nk)^\ell}}.
	\end{align*}
	We now take square root on both sides (and use concavity of the square root function) to get
	\[\Delta_{A_z}\le \sum_{\ell=k}^{mk} \pbra{2e\cdot \ln\pbra{\tfrac{e}{\mu(A_z)^{1/(2\ell)}}}}^{\ell}\cdot O_\ell \pbra{ \tfrac{\ell^{\ell/2}}{ (nk)^{\ell/2}}}. \]
	We now multiply both sides by $\mu(A_z)$ and add over all $2^c$ possibilities for the transcript $z\in\bin^c$.
	\begin{align*}
		\Delta&:=\sum_{z\in\bin^c}\Delta_{A_z} \le  \sum_{z\in\bin^c} \mu(A_z) \cdot \sum_{\ell= k}^{mk} \pbra{2e\cdot \ln\pbra{\tfrac{e}{\mu(A_z)^{1/(2\ell)}}}}^{\ell}\cdot O_\ell \pbra{ \tfrac{\ell^{\ell/2}}{ (nk)^{\ell/2}}}.
	\end{align*}
	We now use the concavity of the function $h'(\gamma)=\gamma\cdot \ln(e/\gamma^{1/2\ell})^\ell$  for $\gamma\in[0,1]$ and all $\ell\in \Nbb$,  (similarly to the proof in~\Cref{sec:fourier_growth_rtwo}) to conclude that
	\[\Delta\le \sum_{\ell=k}^{mk} \pbra{\sum_{z\in\bin^c} \mu(A_z) }\cdot \pbra{2e\cdot \ln\pbra{\frac{e\cdot 2^{c/(2\ell)}}{\pbra{\sum_{z\in\bin^c}\mu(A_z)}^{1/(2\ell)}}  }}^\ell \cdot O_\ell\pbra{\frac{\ell^{\ell/2}}{(nk)^{\ell/2}}}. \]
	We use the fact that $\sum_{z\in\bin^c}\mu(A_z)=1$ to conclude that 
	\begin{align*}
		\Delta&\le \sum_{\ell=k}^{mk}  \pbra{2e\cdot \ln\pbra{e\cdot 2^{c/(2\ell)}  }}^\ell \cdot O_\ell\pbra{\frac{\ell^{\ell/2}}{(nk)^{\ell/2}}} \le  \sum_{\ell=k}^{mk}  O_\ell\pbra{\frac{\ell^{\ell/2}}{(nk)^{\ell/2}}} +\sum_{\ell=k}^{mk} O_\ell\pbra{\frac{c^{\ell}}{(\ell nk)^{\ell/2}}}.
	\end{align*}
	As before, the first term is at most $O(n^{-k/2})$. The assumption that $c\cdot k\le \tau\cdot n^{1/2}$ for a small enough constant $\tau>0$ implies that the function $\frac{c^{\ell}}{(\ell nk)^{\ell/2}}$ is exponentially decaying for $\ell\in[k,mk]$. Hence, the second term is at most $O_\ell\pbra{\frac{(c/k)^k}{n^{k/2}}}$. This, along with the techniques of~\cite{gavinsky2007exponential} completes the proof of~\Cref{lemma:main_lemma_4}.
\end{proof}

\section{Hybrid Quantum-Classical Lower Bounds}

We begin by defining the hybrid models that we consider. We then present our variant of the matrix level-$k$ inequality and finally prove the hybrid quantum-classical separations. 
\subsection{Hybrid Models}

We consider two hybrid models, namely $(\Rtwo+\Qpri)$ and $(\Rone+\Qpri)$. In the former model, Alice and Bob engage in classical interactive communication, and then each send a quantum message to Charlie, who applies an arbitrary projective measurement and returns the outcome as the answer. In the latter model, Alice sends Bob a classical message and then Alice and Bob each send a quantum message to Charlie, who applies an arbitrary projective measurement and returns the outcome as the answer. Note that in both these models, the quantum state sent by the players to Charlie is allowed to depend on the classical transcript of the protocol. We measure the communication cost by the sum of the quantum and classical cost. 

The main contributions of this section are the proof of~\Cref{theorem:hybrid_1} and~\Cref{theorem:hybrid_2}.

\paragraph*{High-level Ideas.} The $k=1$ version of~\Cref{theoremRtwo} and~\Cref{theoremQparallel} prove that the Forrelation problem requires $\Omega(n^{1/4})$ communication in the $\Rtwo$ and $\Qpub$ models respectively. The first theorem uses classical hypercontractivity and the second one uses matrix hypercontractivity. To prove a combined $(\Rtwo+\Qpub)$ lower bound, the idea is to combine both the hypercontractive inequalities and give a more general version of matrix hypercontractivity encapsulating both. And indeed, using this variant, one can combine the techniques of~\Cref{theoremRtwo} and~\Cref{theoremQparallel} and show the desired bound. The same idea also works to give the $(\Rone+\Qpub)$ lower bound for the Boolean Hidden Matching Problem. We describe these in more detail in the following sections.  

\subsection{A Variant of Matrix Hypercontractivity}
\begin{lemma} 
    Let $\Hcal$ be a Hilbert space of dimension $2^c$ and let $f:\{-1,1\}^n\to \Pcal(\Hcal)$ be a matrix-valued function such that with probability $\alpha$, $f$ is density-matrix valued and with probability $1-\alpha$, $f$ is the all-zeroes matrix. Then, for any $\ell\in \Nbb$ such that $\ell \le (2\ln 2)c+2\ln(1/\alpha)$,
    \[ \sum_{|S|=\ell} \Tr^2(\abs{\widehat{\rho}(S)})\le \alpha^2\cdot (c+\ln(1/\alpha))^\ell \cdot (2e^2\ln 2/\ell)^\ell \] \label{lemma:level_k_inequality_matrix_probability_1}
\end{lemma}

This variant essentially combines the classical level-$k$ inequality (\Cref{lemma:level_k_inequality}) with the matrix-valued one (\Cref{lemma:level_k_inequality_matrix_1}). In more detail, if we set $c=0$, we obtain the classical level-$k$ inequality and if we set $\alpha=1$, we obtain the matrix level-$k$ inequality. The proof of this is identical to the one in~\cite{ben2008hypercontractive}, with the only difference being the setting of parameters. We now provide the details of this proof. 

For a matrix $M$ of dimensions $2^c\times 2^c$, let
\[\|M\|_{\ell_p}=\pbra{2^{-c}\cdot  \Tr(\|M^p\|)}^{1/p}=\pbra{2^{-c}\cdot \sum_i \sigma_i(M)^p}^{1/p}\]
denote the $p$-th norm of its singular values normalized by the dimension. 
\begin{proof}
    Let $p=1+\delta$ where $\delta= \tfrac{\ell}{(2\ln 2) c+2 \ln (1/\alpha)}$. By the assumption on $\ell$, we have $p\le 2$. From~\cite[Theorem 1]{ben2008hypercontractive}, we have
\[
\sum_{S\subseteq[n]} \delta^{|S|} \vabs{\widehat{\rho}(S)}^2_{\ell_p} \le \pbra{ \E\sbra{\vabs{f(x)}_{\ell_p}^p}}^{2/p}
\le \pbra{2^{-c}\cdot \alpha}^{2/p}.\] 
We now use the fact the the $p$-norms are increasing to conclude that
\[
\sum_{S\subseteq[n]} \delta^{|S|} \Tr\pbra{\widehat{\rho}(S)}^2 \le  2^{2c}\cdot \pbra{2^{-c}\cdot \alpha}^{2/p}=2^{2c(1-1/p)}\cdot \alpha^{2/p} \le 2^{2c(p-1)}\cdot \alpha^{2/p}.\]
Firstly, since $p-1\le \frac{\ell}{(2\ln 2) c}$, we have $2^{2c(p-1)}\le e^{\ell}$. Secondly, since $1/p\ge 1-\tfrac{\ell}{2\ln(1/\alpha)}$ and $\alpha\le 1$, we have $\alpha^{2/p}\le \alpha^2 \cdot \alpha^{-\ell/\ln(1/\alpha)}=\alpha^2\cdot e^{\ell}$. We now restrict our attention to terms $S$ such that $|S|=\ell$. In this case, we have $\delta^{-|S|}=\pbra{\frac{(2\ln 2)c+2\ln(1/\alpha)}{\ell}}^{\ell}$. Putting this together, we have
\[\sum_{|S|=\ell} \Tr(\widehat{\rho}(S))^2 \le \alpha^2 \cdot \pbra{\frac{(2\ln 2)c+2\ln(1/\alpha)}{\ell}}^{\ell}\cdot e^{2\ell}\le \alpha^2\cdot (c+\ln(1/\alpha))^\ell \cdot (2e^2\ln 2/\ell)^\ell. \]
\end{proof}

As in the proof of~\Cref{lemma:level_k_inequality_matrix} from~\Cref{lemma:level_k_inequality_matrix_1}, we have the following. 
\begin{corollary} \label{lemma:level_k_inequality_matrix_probability} Under the same hypothesis as~\Cref{lemma:level_k_inequality_matrix_probability_1}, for all $\ell\in \Nbb$, we have
\[\sum_{|S|=\ell} \Tr(\widehat{\rho}(S))^2 \le \alpha^2\cdot\pbra{ O_{\ell}\pbra{\pbra{\tfrac{c+\ln(1/\alpha)}{\ell}}^\ell} + O_\ell(1) }. \]
\end{corollary}

\subsection{Proof of \texorpdfstring{\Cref{theorem:hybrid_1}}{Theorem 6.1}}
As in the proof of~\Cref{theoremRtwo}, it suffices to prove a Fourier growth bound on the XOR-Fiber.
\begin{lemma}\label{lemma:hybrid_forrelation_fourier} Let $C:\{-1,1\}^n\times\{-1,1\}^n\to [-1,1]$ be a $(\Rtwo+\Qpri)$-protocol of classical cost $c$ and quantum cost $q$. Let $H$ be the XOR-fiber of $C$ as in~\Cref{definition:XOR_fiber}. Then, 
\[ L_{1,\ell}(H)\triangleq \sum_{|S|=\ell}\abs{\widehat{H}(S)}\le O_\ell(q^\ell \cdot c^\ell).\]
\end{lemma}
Similarly to the calculation in~\Cref{sec:firsttheoremproof}, it would follow that any $(\Rtwo+\Qpri)$-protocol for the Forrelation problem must satisfy $q^2\cdot c^2\ge \Omega(\sqrt{n})$ and hence $q+c\ge \Omega(n^{1/8})$. This describes the proof of~\Cref{theorem:hybrid_1} from~\Cref{lemma:hybrid_forrelation_fourier}. We now prove~\Cref{lemma:hybrid_forrelation_fourier}. 
\begin{proof}[Proof of~\Cref{lemma:hybrid_forrelation_fourier}]
    
Fix any $(\Rtwo+\Qpri)$ protocol for the Forrelation Problem of classical and quantum communication cost $c$ and $q$ respectively. This induces a family of matrix-valued functions $\{ \rho_z(x)\in \Pcal(\Hcal) \}_{z\in \{-1,1\}^c}$ and $\{ \sigma_z(y) \}_{z\in \{-1,1\}^c}\in \Pcal(\Hcal)$ where $\Hcal$ is a Hilbert space of dimension $2^q$. Here, $\rho_z(x)$ is the zero matrix if $x$ is not compatible with the transcript $z$ and is otherwise the message sent by Alice to Charlie. Similarly, $\sigma_z(y)$ is the zero matrix if $y$ is not compatible with $z$ and is otherwise the message sent by Bob to Charlie. In particular, $\rho_z(x)\otimes \sigma_z(y)$ is the density matrix received by Charlie if the transcript on inputs $x,y$ is $z$ and is the zeroes matrix otherwise. The acceptance probability of the protocol can be described as 
\[ \sum_{z\in\{-1,1\}^c} \Tr\pbra{E\cdot (\rho_z(x)\otimes \sigma_z(y))}.\]
By a calculation similar to the one in~\Cref{sec:secondthmproof}, we have 
\begin{align}\label{eq:hybrid_1}\begin{split}L_{1,2}(H)&=\sum_{|S|=\ell}\abs{\sum_{z\in\{-1,1\}^c}\Tr\pbra{E\cdot (\widehat{\rho_z}(S)\otimes \widehat{\sigma_z}(S))}}\\
&\le \sum_{z\in\{-1,1\}^c}\sum_{|S|=\ell}\Tr\pbra{\abs{\widehat{\rho_z}(S)}}\cdot  \Tr\pbra{\abs{\widehat{\sigma_z}(S)}}\\
&\le \sum_{z\in\{-1,1\}^c}\sqrt{\sum_{|S|=\ell}\Tr^2\pbra{\abs{\widehat{\rho_z}(S)}}}\cdot  \sqrt{\sum_{|S|=\ell}\Tr^2\pbra{\abs{\widehat{\sigma_z}(S)}}}\end{split}\end{align}
where the last inequality follows by Cauchy-Schwarz. Let $\alpha_z=\E_x[\Tr(\rho_z(x))]$ and $\beta_z=\E_y[\Tr(\sigma_z(y))]$ be the probability that $\rho_z$ and $\sigma_z$ are non-zero respectively. We now apply~\Cref{lemma:level_k_inequality_matrix_probability} to the matrix-valued functions $\alpha_z,\beta_z$ to conclude that
\[ \sum_{|S|=\ell}\Tr^2\pbra{\abs{\widehat{\rho_z}(S)}} \le \alpha_z^2\cdot O_\ell\pbra{(q+\ln(1/\alpha_z))^\ell},\]
\[ \sum_{|S|=\ell}\Tr^2\pbra{\abs{\widehat{\sigma_z}(S)}} \le \beta_z^2\cdot O_\ell\pbra{(q+\ln(1/\beta))^\ell}.\]
Substituting this in~\Cref{eq:hybrid_1}, we have 
\begin{equation}\label{eq:hybrid_2} L_{1,\ell}(H)\le \sum_{z\in\{-1,1\}^c}\alpha_z\beta_z\cdot O_\ell\pbra{q^{\ell}\cdot \ln(1/\alpha_z)^{\ell/2}\cdot \ln(1/\beta_z))^{\ell/2}}.\end{equation}
At this point, we need to upper bound $\sum_{z\in\{-1,1\}^c} \alpha_z\beta_z \cdot \ln(1/\alpha_z)^{\ell/2}\cdot \ln(1/\beta_z)^{\ell/2}$ where $\alpha_z,\beta_z\in\{0,1\}$ satisfy $\sum_{z\in\{-1,1\}^c} \alpha_z\beta_z=1$. Consider the simple case when each $\alpha_z=\beta_z=2^{-c/2}$. In this case, the quantity we wish to bound is $O_{\ell}(c^{\ell})$. It turns out that this is essentially the worst case. This requires some convexity arguments and was done explicitly in~\cite{DBLP:conf/approx/GirishRZ21}. We use their results to conclude that  $\sum_{z\in\{-1,1\}^c} \alpha_z\beta_z \cdot \ln(1/\alpha_z)^\ell\cdot \ln(1/\beta_z)^\ell \le O_\ell(c^{\ell})$. This, along with~\Cref{eq:hybrid_2} implies that
\[L_{1,\ell}(H)\le O_{\ell}(q^\ell \cdot c^{\ell}).\]
\end{proof}

\subsection{Proof of \texorpdfstring{\Cref{theorem:hybrid_2}}{Theorem 6.2}}

The proof of this is quite similar to the proof of~\Cref{lemma:main_lemma_3} for the case $k=1$. Fix any classical message $z\in\{-1,1\}^c$ sent by Alice. Define matrix-valued functions $\rho_z(x),\sigma_{z,M}(y)$ which represents the states sent by Alice and Bob given that the transcript is $z$ and Bob's input matching is $M$. In more detail, if the transcript $z$ is incompatible with $x$, then $\rho_z(x)=0$, otherwise, $\rho_z(x)$ and $\sigma_{z,M}(y)$ are the states sent by Alice and Bob. Let $\mu_1=\mu_1^{(1)},\mu_{-1}^{(1)}$ be distributions on \textsc{yes} and \textsc{no} instances defined in~\Cref{eq:definition_mixed_q}. Let
\[\Delta:=\E_{(x,M,y)\sim \mu_1}\sbra{\sum_{z\in\{-1,1\}^c}\rho_z(x)\otimes \sigma_{M,z}(y)}-\E_{(x,M,y)\sim \mu_1}\sbra{\sum_{z\in\{-1,1\}^c}\rho_z(x)\otimes \sigma_{M,z}(y)}  \]
As in the proof of~\Cref{lemma:main_lemma_3}, we will show that the trace norm of $\Delta$ is at most $O((cq)^{3/2}/\sqrt{n})$. We perform a calculation similar to before and obtain the following analogue of~\Cref{eq:matrix_level_k}.
\begin{equation}\label{eq:hybrid_3} \vabs{\Delta}_1 \le \sum_{z\in\{-1,1\}^c}\sum_{\ell \in\Nbb}\sqrt{\sum_{|S|=2\ell} \|\widehat{\rho_z}(S)\|_1^2}\cdot \sqrt{\sum_{|T|=\ell}\E_M\sbra{\|\widehat{\sigma_{z,M}}(T)\|_1^2 }}\cdot O_\ell\pbra{\frac{\ell^{\ell/2}}{(nk)^{\ell/2}}}. \end{equation}
Let $\alpha_z=\E_x[\Tr(\rho_z(x))]$ be the probability with which the transcript is $z$. We apply~\Cref{lemma:level_k_inequality_matrix_probability} to the matrix-valued function $\rho_z$ and~\Cref{lemma:level_k_inequality_matrix} to the density-matrix-valued function $\sigma_z$ to conclude that for all $M$,
\[ \sum_{|S|=2\ell}\|\widehat{\rho_z}(S)\|_1^2\le \alpha_z^2\cdot \max\pbra{ O_\ell\pbra{\frac{(q+\ln(1/\alpha))^{2\ell} }{\ell^{2\ell}}} ,O_\ell(1)} ,\]
\[ \sum_{|T|=\ell}\|\widehat{\sigma_{M,z}}(S)\|_1^2\le \max\pbra{ O_\ell\pbra{\frac{q^\ell }{\ell^\ell}} ,O_\ell(1)} ,\]
Substituting this in~\Cref{eq:hybrid_3}, we get
\begin{equation}\label{eq:hybrid_4} \| \Delta \|_1 \le \sum_{z\in\{-1,1\}^c}\sum_{\ell\in \Nbb} \alpha_z \cdot \max\pbra{O_\ell\pbra{\frac{ q^{3\ell/2}\ln(1/\alpha_z)^{3\ell/2}}{\ell^{3\ell/2}}, O_\ell(1)}}\cdot O_\ell \pbra{\frac{\ell^{\ell/2}}{n^{\ell/2}}}. \end{equation}
Again, we need to upper bound $\sum_{z\in\{-1,1\}^c}\alpha_z\cdot \ln(1/\alpha_z)^{3\ell/2}$ where $\alpha_z\in[0,1]$ satisfies $\sum_{z\in\{-1,1\}^c}\alpha_z =1$. In the simple case when each $\alpha_z=2^{-c}$, the quantity we wish to bound is $O_{\ell}(c^{3\ell/2})$. A convexity argument similar to the calculation in~\cite{DBLP:conf/approx/GirishRZ21} implies that this is essentially tight and that $\sum_{z\in\{-1,1\}^c}\alpha_z\cdot \ln(1/\alpha_z)^{3\ell/2}\le O_{\ell}(c^{3\ell/2})$. Substituting this in~\Cref{eq:hybrid_4} and doing a calculation similar to that in the proof of~\Cref{lemma:main_lemma_3} implies that
\[ \|\Delta\|_1 \le O\pbra{\frac{(cq)^{3/2}}{\sqrt{n}}} \]
This implies that $cq\ge \Omega(n^{1/3})$ and hence, $c+q\ge\Omega( n^{1/6})$. This completes the proof of~\Cref{theorem:hybrid_2}.

\label{sec:structure}

\bibliographystyle{alpha}
\bibliography{lipics-v2021-sample-article}

\begin{appendix}
	
	\section{Proofs in \texorpdfstring{\Cref{sec:firsttheoremproof}}{Section 3}}
	\label{sec:appendix_theorem_one}
	\begin{proof}[Proof of \Cref{theoremRtwo} using \Cref{lemma:main_lemma}]  The quantum upper bound is presented in~\cite[Theorem 3.8]{DBLP:conf/approx/GirishRZ21}. We describe it below for completeness.
		Let $t=\Theta( k^5\log^3 n \log k)$. For every $x^{(i)},y^{(i)}\in\bin^n$ given as input to Alice and Bob, Alice sends $\ket{x^{(i)}}=\frac{1}{\sqrt{n}}\sum_{j\in [n]}x^{(i)}_j\ket{j}$, and Bob sends $\ket{y^{(i)}}=\frac{1}{\sqrt{n}}\sum_{j\in[n]}y^{(i)}_j\ket{j}$  to the referee. The referee performs a swap test between $\ket{x^{(i)}}$ and $\Had \ket{y^{(i)}}$  and the bias of the swap test is precisely $\forr(x^{(i)},y^{(i)})$, which we are promised is either at least $\eps/2$ or at most $\eps/4$ for every $i\in [k]$ where $\eps=\Theta\pbra{\tfrac{1}{k^2\ln n}}$. The referee takes the threshold of $t=\Theta\pbra{k\cdot \log k \cdot k^4 \ln^2 n}$ many swap tests and a simple calculation similar to~\cite{DBLP:conf/approx/GirishRZ21} shows that the referee can decide $\forr^{(\oplus k)}$  with probability at least $ 2/3$.

		The classical lower bound uses Lemma~\ref{lemma:main_lemma} and~\cite{DBLP:conf/approx/GirishRZ21}. In more detail, \cite{DBLP:conf/approx/GirishRZ21} define two distributions $\tilde{\mu}_{1}^{(k)}$ and $\tilde{\mu}_{-1}^{(k)}$ and prove that these distribution put considerable mass (at least $1-1/\poly(n)$) on the \textsc{yes} and \textsc{no} instances of the $\oplus^k$-Forrelation problem respectively~\cite[Lemma 2.11]{DBLP:conf/approx/GirishRZ21}. They also show~\cite[Theorem 3.1]{DBLP:conf/approx/GirishRZ21} that for any restriction-closed family $\Hcal$ of Boolean functions on $2kn$ variables with outputs in $[-1,1]$, the maximum advantage that functions in $\Hcal$ have in distinguishing $\tilde{\mu}_{1}^{(k)}$ and $\tilde{\mu}_{-1}^{(k)}$ is at most $ O\left( {L_{1,2k}(\mathcal{H})}\cdot  {n}^{-k/2}\right) + o\left(  {n}^{-k/2}\right).$ 
		
		Let $c=\tau\cdot n^{1/4}$ for a small enough constant $\tau>0$. Let $\Hcal$ be the set of all XOR-fibers of $\Rtwoent$ protocols of cost at most $c$ that use $\rho$ as the entangled state. It is not too hard to show that this family is closed under restrictions. Using the aforementioned results, as well as~\Cref{lemma:main_lemma}, we conclude that for all $H\in \Hcal$,
		\[ 
		\left| \underset{ z\sim \tilde{\mu}_{1}^{(k)}}{\E}\left[ H(z) \right]  -  \underset{ z\sim \tilde{\mu}_{-1}^{(k)}}{\E}\left[ H(z) \right] \right| \le O_k\left( 2^{5d}\cdot c^{2k}\cdot n^{-k/2}\right) + o( {n}^{-k/2}) \le O_k(2^{5k}\cdot \tau^{2k}). 
		\] 
		
		In the last step, we used the fact that $d\le k$ and $c=\tau\cdot n^{1/4}$. Setting $\tau\ll 1$ to be a sufficiently small constant, the R.H.S. of the above equation is at most $1/5$. This completes the proof. 
	\end{proof}

	
	
	\section{Proofs in \texorpdfstring{\Cref{sec:r2protocolsstructure}}{Section 3}}
	\label{app:claim2.2proof}
	We prove \Cref{claim:structure_rtwoent} in this section. We begin by describing the structure of $\Rtwoent$ protocols that share a state $\rho\in \Scal(\Hcal_A\otimes \Hcal_B)$ with communication cost $c$. Without loss of generality, the protocol can be described as follows. Alice and Bob could each have private memory consisting of $m$ qubits. Say $\Hcal'_A$ and $\Hcal'_B$ are Hilbert spaces of dimension $2^m$. Note that the dimension $m$ could potentially be very large. Consider a $k$-round protocol. Suppose Alice and Bob got $x,y$ respectively. Below we let $\ket{\Phi_0}=\ket{\Phi}\otimes \ket{0^m}\bra{0^m}_A\otimes \ket{0^m}\bra{0^m}_B$, $z_0=\emptyset$ and $t=0$. Here, the subscript $A,B$ on a qubit denotes which player has that qubit. 
	
	Consider the $(t+1)$th round of the protocol. Suppose Alice and Bob had exchanged messages $z_1,\ldots,z_{2t}\in \bin$ in the first $t$ rounds. Without loss of generality, we can assume that each player sends at most one bit in each round. This assumption can increase the communication cost by a factor of at most two. Alice first applies a two-outcome POVM $\{M_{z_{2t+1}}^{t+1}(x,z_1,\ldots,z_{2t}): z_{2t+1}\in\bin\}$ on the registers that she owns (i.e., her part of the shared state as well her memory). Alice sends the  POVM outcome $z_{2t+1}\in \bin$ to Bob. Based on this message $z_{2t+1}$, his input $y$ and the transcript $z_1,\ldots,z_{2t}\in\bin$, Bob applies  applies a two-outcome POVM $\{N_{z_{2t+2}}^{t+1}(y,z_1,\ldots,z_{2t+1}):z_{2t+2}\in\bin\}$ on the qubits that he owns (his part of the shared state and his memory). He sends the POVM outcome $z_{2t+2}\in\bin$ to Alice. Let the resulting state of all the qubits after the $(t+1)$th round be $\ket{\Phi_{t+1}}$. They repeat this for $k$ rounds after which Alice evaluates some predicate $A$ on $z$ and returns the answer as the output. We say that a protocol has cost $c$, if the size of the transcript is at most $c$~bits. We can assume that there are $\lceil \tfrac{c}{2}\rceil$ rounds and that the players in fact communicate for exactly $\lceil \tfrac{c}{2}\rceil$ rounds, where $c$ is the communication complexity of the protocol on the worst case inputs.

	\begin{proof}[Proof of \Cref{claim:structure_rtwoent}]
		Let $A\subseteq\bin^n$ denote the set of $z\in\bin^n$ that satisfy the final predicate and let $k=\lceil \tfrac{c}{2}\rceil$. Let $z\in \bin^c$. Since $x,y\in \bin^n$ are fixed throughout this proof, for simplicity of notation, for any $j\in[k]$, let
		\[ M^j_z:=M^j_{z_{2j-1}}(x,z_1,\ldots,z_{2j-2})  \quad\text{and}\quad N^j_z:=N^j_{z_{2j}}(x,z_1,\ldots,z_{2j-1}),\]
		\[M^{\le j}_z:=\prod_{j'=j}^1 M^{j'}_z \quad\text{ and } N^{\le j}_z:=\prod_{j'=j}^1 N^{j'}_z.\]
		Let $M_z=M_z^{\le k}$ and $N_z=N_z^{\le k}$. Note that the $M_z^j$ are functions of $x,z$ and $N_z^j$ are functions of $y,z$. Finally, let $E_z(x):={M_z}^\dagger M_z$ and  $F_z(y):={N_z}^\dagger N_z$. It is clear that properties 1 and 2 are satisfied by $E_z$ and $F_z$. It is straightforward to see from the definition of $\Rtwoent$ protocols that the expected output of the protocol is precisely
		\begin{align*}
			&C(x,y)\\
			&=(-1)\cdot \sum_{z\in A}\Tr\pbra{\pbra{M_z \otimes N_z } \rho'  \pbra{M^\dagger_z \otimes N^\dagger_z }}  + 1\cdot \sum_{z\notin A}\Tr\pbra{\pbra{M_z \otimes N_z } \rho'  \pbra{M^\dagger_z \otimes N^\dagger_z }}  \\
			&=\sum_{z\in \bin^c}\Tr\pbra{\pbra{E_z(x)\otimes F_z(y)} \rho'  }\cdot (-1)^{\indi[z\in A]}. 
		\end{align*}
		It only remains to prove property 3. Consider:
		\begin{align*}
			&(*):=\sum_{z\in\bin^c} E_z(x)\otimes F_z(y)\\
			&\triangleq \sum_{z\in \bin^{2k}}\sbra{ \pbra{M^{\le k}_z}^\dagger\cdot \pbra{M^{\le k}_z}}\otimes \sbra{ \pbra{N_z^{\le k}}^\dagger \cdot N_z^{\le k}}\\
			&= \sum_{z\in \bin^{2k-1}}\sbra{ \pbra{M^{\le k}_z}^\dagger\cdot \pbra{M^{\le k}_z}}\otimes\sbra{ \pbra{N^{\le k-1}_z}^\dagger\cdot \pbra{\sum_{z_{2k}\in\bin} \pbra{N^k_z}^\dagger\cdot N^k_z}\cdot\pbra{N^{\le k-1}_z}
			} \end{align*}
		The last equality used the fact that the operators $N_z^{\le k-1}$ and $M_z^{\le k}$ do not depend on $z_{2k}$. For all $z\in \bin^{2k-1}$, $\cbra{N^{k}_z:z_{2k}\in\bin }$ is a two-outcome POVM, thus, 
		\[ \sum_{z_{2k}\in \bin} \pbra{N^{k}_z}^\dagger\cdot N^{k}_z=\mathbb{I}.\] Let $N=\pbra{N_z^{\le k-1}}^\dagger\cdot N_z^{\le k-1}$. Substituting this above, we have
		\begin{align*}
			&(*)= \sum_{z\in \bin^{2k-1}} \sbra{ \pbra{M_z^{\le k}}^\dagger\cdot M_z^{\le k}}\otimes N \\
			&= \sum_{z\in \bin^{2k-2}} \sbra{\pbra{M^{\le k-1}_z}^\dagger\cdot \pbra{\sum_{z_{2k-1}\in\bin} \pbra{M^k_z}^\dagger\cdot M^k_z}\cdot\pbra{M^{\le k-1}_z}}\otimes N.
		\end{align*}
		The last equality used the fact that the operators $M_z^{\le k-1}$ and $N=\pbra{N_z^{\le k-1}}^\dagger\cdot N_z^{\le k-1}$ don't depend on $z_{2k-1}$. For all $z\in\bin^{2k-2}$,  $\cbra{M^{j}_z:z_{2k-1}\in\bin }$ is a two-outcome POVM, thus, 
		\[ \sum_{z_{2k-1}\in \bin} \pbra{M^{k}_z}^\dagger\cdot M^{k}_z=\mathbb{I}. \] 
		Substituting this above, we have
		\begin{align*}
			(*)&=\sum_{z_1,\ldots,z_{2k-2}\in \bin} \sbra{\pbra{M^{\le k-1}_z}^\dagger\cdot \pbra{M^{\le k-1}_z}}\otimes \sbra{\pbra{N^{\le k-1}_z}^\dagger\cdot\pbra{N^{\le k-1}_z}}
			\\
			&=\ldots=\mathbb{I}\quad  \text{by induction on } k.
		\end{align*}
		This proves property 3 and completes the proof of~\Cref{claim:structure_rtwoent}.
	\end{proof}
	
	\section{Quantum upper bound in Theorem~\ref{theoremRoneQsmp}.}
	\label{sec:appendix_theorem_three}
	
	We discuss the quantum upper bound for a single instance of the Boolean Hidden Matching problem (this is similar to the protocol in~\cite{buhrman2011near}). Here Alice and Bob share the quantum state $\frac{1}{\sqrt{n}}\sum_{i=1}^n\ket{i}_A\ket{i}_B$. Alice applies the transformation $\ket{i}_A\rightarrow (-1)^{x_i}\ket{i}_A$ and transforms the state to 
	$
	\frac{1}{\sqrt{n}}\sum_{i\in [n]} (-1)^{x_i}\ket{i}_A\ket{i}_B.
	$
	Bob now measures his register in the matching basis, in particular, he completes his $\alpha$-partial matching arbitrarily to a complete matching. Let the resulting complete  matching be $\{(e_i,e_j)\}$ wherein $\alpha$-fraction of the edges belong to $\mathcal{E}$. Now Bob measures in the basis $\{\ketbra{e_{i}}{e_{i}}+\ketbra{e_{j}}{e_{j}}\}$. Now, Bob obtains a uniformly random edge in the matching (known to him). Furthermore, with probability $1/\alpha$, the obtained edge was in $\mathcal{E}$. Say he obtained $(e_i,e_j)\in \mathcal{E}$. The state then collapses to
	\begin{align}
		\frac{1}{\sqrt{2}}\Big((-1)^{x_{i}}\ket{i}\ket{i}+(-1)^{x_{j}}\ket{j}\ket{j}\Big).
	\end{align}
	Note that Bob knows the edge $(i,j)$. Now, both Alice and Bob apply the $(\log n)$-qubit Hadamard gate on their respective registers, the resulting state is given by
	$$
	\frac{1}{\sqrt{2}n}\sum_{a,b\in \{0,1\}^{\log n}}\Big((-1)^{x_{i}+(a+b)\cdot i}+(-1)^{x_{j}+(a+b)\cdot j}\Big)\ket{a,b}.
	$$
	Now observe that  if Alice and Bob measure their respective registers, Alice obtains $a$ uniformly random,  Bob obtains $b$ satisfying $(i\oplus j)\cdot (a\oplus b)=x_i+x_j$. Alice sends $a$ and Bob sends $(i,j)$, $y_{ij}$ as well as $b$ to the referee. The referee now can now compute $(i\oplus j)\cdot (a\oplus b)$ and learn $x_i\oplus x_j$.  The referee returns \textsc{no} if $y_{ij}=x_i\oplus x_j$, and returns \textsc{yes} if $\overline{y}_{ij}=x_i\oplus x_j$. This solves the Boolean Hidden Matching Problem.  Observe that this protocol succeeds with probability $\alpha$ (which is the probability that Bob's measurement gives an edge in his matching input). 
	
	In order to compute $\bhm^{(\oplus k)}$ Alice and Bob perform the following: for every $i\in [k]$, they carry out the protocol $O((\log k)/\alpha)$ many times and send all their outcomes to the referee. With probability at least $ 9/10$, the measurement collapses to an edge in the matching, which Bob knows and can communicate to the referee. For this edge, the referee checks the predicate if $y_{ij}=x_i\oplus x_j$ is satisfied or not and hence knows the value of $\bhm(x^i,y^i)$. Hence after $O((\log k)/\alpha)$ bits of communication, the referee knows 
	$\bhm(x^i,y^i)$ for \emph{all} $i\in [k]$ and hence~$\bhm^{(\oplus k)}$.
	\subsection{Proofs in \texorpdfstring{\Cref{sec:five_lower_bound}}{Section 5}}


	\begin{proof}[Proof of~\Cref{fact:matchingprobability}] Let $|S_i|=2\ell_i$ for $i\in[k]$ such that $\sum_{i\in[k]}\ell_i=\ell$.
		It is argued in~\cite{gavinsky2007exponential} that the probability that a random matching on $[n]$ of size $m=\alpha n$ matches any given set of size $2\ell_i$ is precisely $\frac{{\alpha n\choose \ell_i}}{{n\choose 2\ell_i}}$.
		Furthermore they showed that $\frac{{\alpha n \choose \ell_i}}{{n\choose 2\ell_i}}$ is a decreasing function of $\ell_i$, and is at most $ O_{\ell_i}\pbra{(\ell_i/n)^{\ell_i}} $. Thus, the probability that $M$ matches $S$~is
		\begin{align*}g(\ell_1,\ldots,\ell_k)=\prod_{i\in[k]} \frac{{\alpha n \choose \ell_i}}{{n\choose 2\ell_i}}&\le \pbra{\max_{i\in[k]}\frac{{\alpha n \choose \ell_i}}{{n\choose 2\ell_i}}}^k\le  \pbra{ \frac{{\alpha n \choose \ell/k}}{{n\choose 2\ell/k}}}^k = O_{\ell}\pbra{\frac{\ell^{\ell}}{(nk)^{\ell}}}.
		\end{align*}
	\end{proof}
	

\end{appendix}
\end{document}